
\documentclass[journal]{IEEEtran}
%
% If IEEEtran.cls has not been installed into the LaTeX system files,
% manually specify the path to it like:
% \documentclass[10pt,journal,compsoc]{../sty/IEEEtran}

% Some very useful LaTeX packages include:
% (uncomment the ones you want to load)

% *** MISC UTILITY PACKAGES ***
%
%\usepackage{ifpdf}
% Heiko Oberdiek's ifpdf.sty is very useful if you need conditional
% compilation based on whether the output is pdf or dvi.
% usage:
% \ifpdf
%   % pdf code
% \else
%   % dvi code
% \fi
% The latest version of ifpdf.sty can be obtained from:
% http://www.ctan.org/pkg/ifpdf
% Also, note that IEEEtran.cls V1.7 and later provides a builtin
% \ifCLASSINFOpdf conditional that works the same way.
% When switching from latex to pdflatex and vice-versa, the compiler may
% have to be run twice to clear warning/error messages.

% *** CITATION PACKAGES ***
%
\ifCLASSOPTIONcompsoc
  % IEEE Computer Society needs nocompress option
  % requires cite.sty v4.0 or later (November 2003)
  \usepackage[nocompress]{cite}
\else
  % normal IEEE
  \usepackage{cite}
\fi

\usepackage{hyperref}
% cite.sty was written by Donald Arseneau
% V1.6 and later of IEEEtran pre-defines the format of the cite.sty package
% \cite{} output to follow that of the IEEE. Loading the cite package will
% result in citation numbers being automatically sorted and properly
% "compressed/ranged". e.g., [1], [9], [2], [7], [5], [6] without using
% cite.sty will become [1], [2], [5]--[7], [9] using cite.sty. cite.sty's
% \cite will automatically add leading space, if needed. Use cite.sty's
% noadjust option (cite.sty V3.8 and later) if you want to turn this off
% such as if a citation ever needs to be enclosed in parenthesis.
% cite.sty is already installed on most LaTeX systems. Be sure and use
% version 5.0 (2009-03-20) and later if using hyperref.sty.
% The latest version can be obtained at:
% http://www.ctan.org/pkg/cite
% The documentation is contained in the cite.sty file itself.
%
% Note that some packages require special options to format as the Computer
% Society requires. In particular, Computer Society  papers do not use
% compressed citation ranges as is done in typical IEEE papers
% (e.g., [1]-[4]). Instead, they list every citation separately in order
% (e.g., [1], [2], [3], [4]). To get the latter we need to load the cite
% package with the nocompress option which is supported by cite.sty v4.0
% and later. Note also the use of a CLASSOPTION conditional provided by
% IEEEtran.cls V1.7 and later.

% *** GRAPHICS RELATED PACKAGES ***
%
\ifCLASSINFOpdf
  % \usepackage[pdftex]{graphicx}
  % declare the path(s) where your graphic files are
  % \graphicspath{{../pdf/}{../jpeg/}}
  % and their extensions so you won't have to specify these with
  % every instance of \includegraphics
  % \DeclareGraphicsExtensions{.pdf,.jpeg,.png}
\else
  % or other class option (dvipsone, dvipdf, if not using dvips). graphicx
  % will default to the driver specified in the system graphics.cfg if no
  % driver is specified.
  % \usepackage[dvips]{graphicx}
  % declare the path(s) where your graphic files are
  % \graphicspath{{../eps/}}
  % and their extensions so you won't have to specify these with
  % every instance of \includegraphics
  % \DeclareGraphicsExtensions{.eps}
\fi
\hyphenation{op-tical net-works semi-conduc-tor}

\usepackage{amsmath,amssymb,amsfonts,amsthm}
\usepackage{algorithmic}
\usepackage{graphicx}
\usepackage{textcomp}
\usepackage{xcolor}
\usepackage{xcolor}
\graphicspath{{wienerfigures/}}
\usepackage[margin=1in]{geometry}

\newtheorem{claim}{Claim}
\newtheorem{theorem}{Theorem}
\newtheorem{corollary}{Corollary}

\newtheorem{lemma}{Lemma}
\newtheorem{assu}{Assumption}
\newtheorem{pb}{Problem}

\usepackage{cuted}
\usepackage{cancel}
\usepackage{lipsum}
\allowdisplaybreaks
\begin{document}
%
% paper title
% Titles are generally capitalized except for words such as a, an, and, as,
% at, but, by, for, in, nor, of, on, or, the, to and up, which are usually
% not capitalized unless they are the first or last word of the title.
% Linebreaks \\ can be used within to get better formatting as desired.
% Do not put math or special symbols in the title.
\title{Sampling of the Wiener Process for Remote Estimation over a Channel with Unknown Delay Statistics}
%
%
% author names and IEEE memberships
% note positions of commas and nonbreaking spaces ( ~ ) LaTeX will not break
% a structure at a ~ so this keeps an author's name from being broken across
% two lines.
% use \thanks{} to gain access to the first footnote area
% a separate \thanks must be used for each paragraph as LaTeX2e's \thanks
% was not built to handle multiple paragraphs
%
%
%\IEEEcompsocitemizethanks is a special \thanks that produces the bulleted
% lists the Computer Society journals use for "first footnote" author
% affiliations. Use \IEEEcompsocthanksitem which works much like \item
% for each affiliation group. When not in compsoc mode,
% \IEEEcompsocitemizethanks becomes like \thanks and
% \IEEEcompsocthanksitem becomes a line break with idention. This
% facilitates dual compilation, although admittedly the differences in the
% desired content of \author between the different types of papers makes a
% one-size-fits-all approach a daunting prospect. For instance, compsoc 
% journal papers have the author affiliations above the "Manuscript
% received ..."  text while in non-compsoc journals this is reversed. Sigh.

\author{Haoyue~Tang,~\IEEEmembership{Member,~IEEE,}
        Yin~Sun,~\IEEEmembership{Senior~Member,~IEEE}
        and~Leandros~Tassiulas,~\IEEEmembership{Fellow,~IEEE\&ACM}% <-this % stops a space
\IEEEcompsocitemizethanks{\IEEEcompsocthanksitem H. Tang and L. Tassiulas are with the Department
of Electrical Engineering, Yale University, New Haven,
CT, 06511.\protect\\
% note need leading \protect in front of \\ to get a newline within \thanks as
% \\ is fragile and will error, could use \hfil\break instead.
E-mail: \{haoyue.tang, leandros.tassiulas\}@yale.edu
\IEEEcompsocthanksitem Y. Sun is with the Department of Electrical and Computer Engineering, Auburn University, Auburn, AL, 36849. E-mail: yzs0078@auburn.edu
\IEEEcompsocthanksitem  The work of H. Tang and L. Tassiulas was supported by the NSF CNS-2112562 AI Institute for Edge Computing Leveraging Next Generation Networks (Athena) and the ONR N00014-19-1-2566. The work of Y. Sun
was supported by the ARO grant
W911NF-21-1-024}}

% note the % following the last \IEEEmembership and also \thanks - 
% these prevent an unwanted space from occurring between the last author name
% and the end of the author line. i.e., if you had this:
% 
% \author{....lastname \thanks{...} \thanks{...} }
%                     ^------------^------------^----Do not want these spaces!
%
% a space would be appended to the last name and could cause every name on that
% line to be shifted left slightly. This is one of those "LaTeX things". For
% instance, "\textbf{A} \textbf{B}" will typeset as "A B" not "AB". To get
% "AB" then you have to do: "\textbf{A}\textbf{B}"
% \thanks is no different in this regard, so shield the last } of each \thanks
% that ends a line with a % and do not let a space in before the next \thanks.
% Spaces after \IEEEmembership other than the last one are OK (and needed) as
% you are supposed to have spaces between the names. For what it is worth,
% this is a minor point as most people would not even notice if the said evil
% space somehow managed to creep in.

% The paper headers
\markboth{Submitted to IEEE/ACM Transactions on Networking}%
{Shell \MakeLowercase{\textit{et al.}}: Bare Demo of IEEEtran.cls for Computer Society Journals}
% The only time the second header will appear is for the odd numbered pages
% after the title page when using the twoside option.
% 
% *** Note that you probably will NOT want to include the author's ***
% *** name in the headers of peer review papers.                   ***
% You can use \ifCLASSOPTIONpeerreview for conditional compilation here if
% you desire.

% The publisher's ID mark at the bottom of the page is less important with
% Computer Society journal papers as those publications place the marks
% outside of the main text columns and, therefore, unlike regular IEEE
% journals, the available text space is not reduced by their presence.
% If you want to put a publisher's ID mark on the page you can do it like
% this:
%\IEEEpubid{0000--0000/00\$00.00~\copyright~2015 IEEE}
% or like this to get the Computer Society new two part style.
%\IEEEpubid{\makebox[\columnwidth]{\hfill 0000--0000/00/\$00.00~\copyright~2015 IEEE}%
%\hspace{\columnsep}\makebox[\columnwidth]{Published by the IEEE Computer Society\hfill}}
% Remember, if you use this you must call \IEEEpubidadjcol in the second
% column for its text to clear the IEEEpubid mark (Computer Society jorunal
% papers don't need this extra clearance.)

% use for special paper notices
%\IEEEspecialpapernotice{(Invited Paper)}

% for Computer Society papers, we must declare the abstract and index terms
% PRIOR to the title within the \IEEEtitleabstractindextext IEEEtran
% command as these need to go into the title area created by \maketitle.
% As a general rule, do not put math, special symbols or citations
% in the abstract or keywords.
\IEEEtitleabstractindextext{%
\begin{abstract}
		In this paper, we study an online sampling problem of the Wiener process. The goal is to minimize the mean squared error (MSE) of the remote estimator under a sampling frequency constraint when the transmission delay distribution is unknown. The sampling problem is reformulated into an optional stopping problem, and we propose an online sampling algorithm that can adaptively learn the optimal stopping threshold through stochastic approximation. We prove that the cumulative MSE regret grows with rate $\mathcal{O}(\ln k)$, where $k$ is the number of samples. Through Le Cam's two point method, we show that the worst-case cumulative MSE regret of any online sampling algorithm is lower bounded by $\Omega(\ln k)$. Hence, the proposed online sampling algorithm is minimax order-optimal. Finally, we validate the performance of the proposed algorithm via numerical simulations. 
\end{abstract}

% Note that keywords are not normally used for peerreview papers.
\begin{IEEEkeywords}
Age of Information, Online Learning, Stochastic Approximation
\end{IEEEkeywords}}

% make the title area
\maketitle

% To allow for easy dual compilation without having to reenter the
% abstract/keywords data, the \IEEEtitleabstractindextext text will
% not be used in maketitle, but will appear (i.e., to be "transported")
% here as \IEEEdisplaynontitleabstractindextext when the compsoc 
% or transmag modes are not selected <OR> if conference mode is selected 
% - because all conference papers position the abstract like regular
% papers do.
\IEEEdisplaynontitleabstractindextext
% \IEEEdisplaynontitleabstractindextext has no effect when using
% compsoc or transmag under a non-conference mode.

% For peer review papers, you can put extra information on the cover
% page as needed:
% \ifCLASSOPTIONpeerreview
% \begin{center} \bfseries EDICS Category: 3-BBND \end{center}
% \fi
%
% For peerreview papers, this IEEEtran command inserts a page break and
% creates the second title. It will be ignored for other modes.
\IEEEpeerreviewmaketitle

\section{Introduction}\label{sec:introduction}
% Computer Society journal (but not conference!) papers do something unusual
% with the very first section heading (almost always called "Introduction").
% They place it ABOVE the main text! IEEEtran.cls does not automatically do
% this for you, but you can achieve this effect with the provided
% \IEEEraisesectionheading{} command. Note the need to keep any \label that
% is to refer to the section immediately after \section in the above as
% \IEEEraisesectionheading puts \section within a raised box.

% The very first letter is a 2 line initial drop letter followed
% by the rest of the first word in caps (small caps for compsoc).
% 
% form to use if the first word consists of a single letter:
% \IEEEPARstart{A}{demo} file is ....
% 
% form to use if you need the single drop letter followed by
% normal text (unknown if ever used by the IEEE):
% \IEEEPARstart{A}{}demo file is ....
% 
% Some journals put the first two words in caps:
% \IEEEPARstart{T}{his demo} file is ....
% 
% Here we have the typical use of a "T" for an initial drop letter
% and "HIS" in caps to complete the first word.
\IEEEPARstart{T}he omnipresence of the autonomous driving and the intelligent manufacturing systems involve tasks of sampling and remotely estimating fresh status information. For example, in autonomous driving systems, status information such as the position and the instant speed of cars keep changing, and the controller has to estimate the update-to-date status based on samples collected from the surrounding sensors. To ensure efficient control and system safety, it is important to estimate the fresh status information precisely under limited communication resources and random channel conditions. 

To measure the freshness of the status update information, the Age of Information (AoI) metric has been proposed in \cite{roy_12_aoi}. By definition, AoI captures the difference between the current time and the time-stamp at which the freshest information available at the destination was generated. It is revealed that the AoI minimum sampling and transmission strategies behave differently from utility maximization and delay minimization \cite{roy_15_isit}. Samples with fresher content should be delivered to the destination in a timely manner \cite{sun_17_tit}. 

When the evolution of the dynamic source can be modeled by a random signal process, the mean square estimation error (MSE) based on the available information at the receiver can be used to capture freshness. Sampling to minimize the MSE of the random process in different communication networks are studied in \cite{hajet_03_infocom,ornee_21_ton,nayyar_13_tac,sun_wiener,tsai_2021_ton,gauss_yin}. Considering that the dynamic source is a Wiener process, the optimum sampling policy that minimizes the estimation MSE is shown to have a threshold structure, i.e., a new sample should be taken once the difference between the actual signal value and the estimate based on past samples exceed a certain threshold. Such thresholds also holds for the Ornstein-Uhlenbeck process \cite{ornee_21_ton,ou} and the Gaussian Markov source \cite{gauss_yin}. The optimum sampling thresholds can be obtained by  the bi-section search \cite{sun_wiener} or iterative thresholding \cite{chichun-19-isit} if the delay distribution and the statistics of the channel are known in advance. 

When the statistics of the communication channel is unknown, the problem of sampling and transmissions for data freshness optimization can be formulated into a sequential decision making problem \cite{aoibandit,atay2020aging, banerjee_adversarial_aoi,tripathi2021online,li2021efficient}. By using the AoI as the freshness metric,  \cite{aoibandit,atay2020aging,banerjee_adversarial_aoi} design online link rate selection algorithms based on stochastic bandits. When the channels are time-varying and the transmitter has an average power constraint, \cite{ceran_19_infocomwks,ceran_21_jsac,kam_rl,aba_drl_aoi,aylin_rl} employ reinforcement learning algorithms to minimize the average AoI under unknown channel statistics. Notice that in applications such as the remote estimation, a linear AoI cannot fully capture the data freshness. To solve this problem, Tripathi \emph{et al. } model the information freshness to be a time-varying function of the AoI \cite{tripathi2021online}, and a robust online learning algorithm is proposed. The above research tackles with unknown packet loss rate or utility functions, the problem of designing online algorithms under unknown delay statistics are not well studied. The iterative thresholding algorithm proposed in  \cite{chichun-19-isit} can be applied in the online setting when the delay statistics is unknown, whereas the convergence rate and the optimality of the derived online algorithm are not well understood. 

In this paper, we consider an online sampling problem, where a sensor transmits status updates of the Wiener source to a destination through a channel with random delay. Our goal is to design a sampling policy that minimizes the estimation error when the delay distribution is unknown  a priori. The main contributions of this paper are as follows:
\begin{itemize}
	\item The design of the MSE minimum sampling policy is reformulated as an optimal stopping problem. % {\color{blue} The previous goal of mentioning the optimum policy has a threshold structure is to say that we learning the optimum threshold through stochastic approximation. }
	By analyzing the sufficient conditions of the optimum threshold, we propose an online sampling policy that learns the optimum stopping threshold adaptively through stochastic approximation. Compared with \cite{chichun-19-isit,Tang2205:Sending,tang2022age}, the operation of the proposed algorithm does not require prior knowledge of an upper bound of the optimum threshold. 
	\item We prove that the time averaged MSE of the proposed algorithm converges almost surely to the minimum MSE if the fourth order moment of the transmission delay is finite (Theorem \ref{thm:dep-converge}). In addition, it is shown that the MSE regret, i.e., the sub-optimality gap between the expected cumulative MSE of the proposed algorithm and the optimum policy with distribution knowledge, grows at a speed of $\mathcal{O}(\ln k)$, where $k$ is the number of samples (Corollary \ref{thm:mse-rate}). The perturbed ordinary differential equation (ODE) method is a popular tool for establishing the convergence rate of stochastic approximation algorithms \cite{Kushner2003}. However, this tool requires either the threshold being learned is in a bounded closed set, or the second moment of the updating directions are bounded. Because our algorithm does not require an upper bound on the optimum threshold, and the essential supremum of the transmission delay could be unbounded, we need to develop
	a new method for convergence rate analysis, which is based on the Lyapunov drift method for heavy traffic analysis. 
	\item Further by using the classic Le Cam's two point method, we show that for any causal algorithm that makes sampling decision based on historical information, under the worst case delay distribution, the MSE regret is lower bounded by $\Omega(\ln k)$  (Theorem \ref{thm:converse}). By combining Theorem \ref{thm:mse-rate} and Theorem \ref{thm:converse}, we obtain that the proposed online sampling algorithm achieves the minimax order-optimal regret.
	\item We validate the performance of the proposed algorithm via numerical simulations. In contrast to \cite{chichun-19-isit}, the proposed algorithm could meet an average sampling frequency constraint. 
\end{itemize}

% (1). Extends the results in \cite{sun_17_tit} to unknown delay statistics. 

% (2). Compared with \cite{chichun-19-isit,Tang2205:Sending}, we can deal with a sampling constraint.

% (3). Compared with \cite{tang2022age}, we do not need the first and second upper and lower bound the delay statistics. While the w.p.1 convergence and quadratic convergence speed still holds. 
\section{System Model and Problem Formulation}

\subsection{System Model}
As is depicted in Fig.~\ref{fig:model}, we revisit the status update system in \cite{sun_17_tit,arafa_model,sun_wiener}, where a sensor takes samples from a Wiener process and transmits the samples to a receiver through a network interface queue. The network interface serves the update packets on the First-Come-First-Serve (FCFS) basis. An ACK is sent back to the sensor once an update packet is cleared at the interface. We assume that the transmission duration after passing the network interface is negligible. 
\begin{figure}[h]
	\centering
	\includegraphics[width=.4\textwidth]{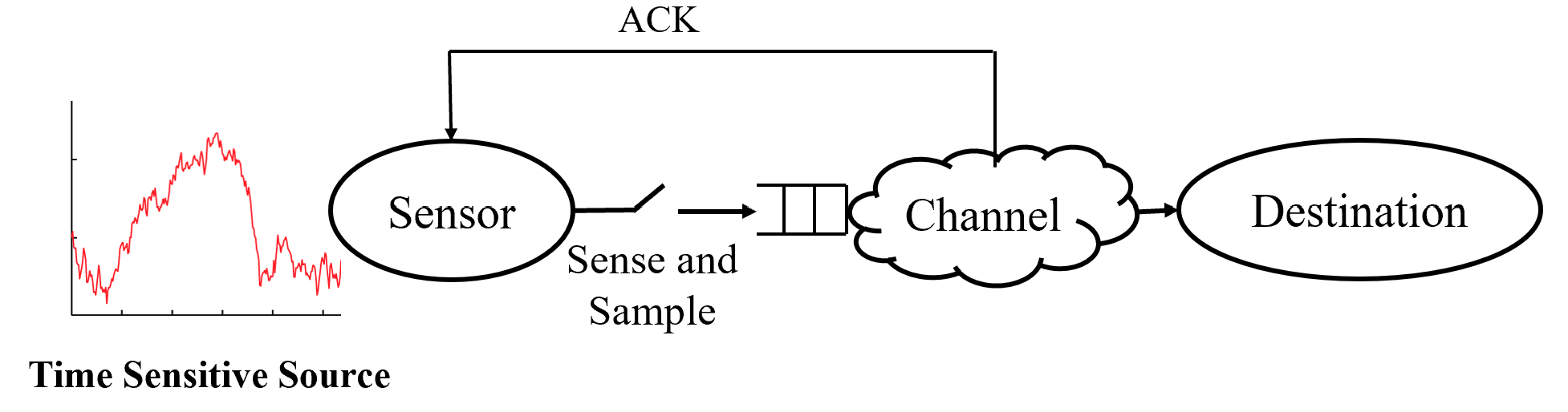}
	\caption{System model. }
	\label{fig:model}
\end{figure} 

Let $X_t\in\mathbb{R}$ denote the value of the Wiener process at time $t\in\mathbb{R}^+$. The sampling time-stamp of the $k$-th sample, denoted by $S_k$, is determined by the sensor at will. Based on the FCFS principle, the network interface will start serving the $k$-th packet after the $(k-1)$-th packet is cleared at the network interface and arrived at the receiver. We assume that the service time $D_k$ are independent and identically distributed (i.i.d) with a probability distribution $\mathbb{P}_D$, and $D_k$ is independent of the wiener process $X_t$. The reception time of the $k$-th packet, denoted by $R_k$ satisfies the following recursive formula: $R_k=\{S_k, R_{k-1}\}+D_k$ and we define $R_0=0$ for simplicity. We assume the average transmission delay $\overline{D}:=\mathbb{E}_{D\sim\mathbb{P}_D}[D]$ is lower bounded by $\overline{D}_{\text{lb}}>0$. 
\subsection{MMSE Estimation}

Let $i(t):=\max_{k\in\mathbb{N}}\{k|R_k\leq t\}$ be the index of the latest sample received by the destination at time $t$. The information available at the receiver at time $t$ can be summarized as follows: (i). The sampling time-stamps, transmission delay and the values of previous samples $\mathcal{M}_t:=\{(S_j, D_j, X_{S_j})\}_{j=1}^{i(t)}$; (ii). The fact that no packet was received during $(R_{i(t)}, t]$. Similar to \cite{sun_17_tit,est_ifac}, we assume that the receiver estimates $X_t$ only based on $\mathcal{M}_t$ and neglects the second part of information. The minimum mean-square error (MMSE) estimator \cite{poor2013introduction} in this case is:
\begin{equation}
	\hat{X}_t=\mathbb{E}[X_t|\mathcal{M}_t]=X_{S_{i(t)}}. \label{eq:MMSEest}
\end{equation}

We use a sequence of sampling time instants $\pi\triangleq\{S_k\}_{k=1}^{\infty}$ to represent a sampling policy. The expected time average mean square error (MSE) under $\pi$ is denoted by $\overline{\mathcal{E}}_\pi$, i.e., 
\begin{equation}
	\overline{\mathcal{E}}_\pi\triangleq\limsup_{T\rightarrow\infty}\mathbb{E}\left[\frac{1}{T}\int_{t=0}^T\left(X_t-X_{S_{i(t)}}\right)^2\mathsf{d}t\right].
\end{equation}
\subsection{Problem Formulation}

Our goal in this work is to design one sampling policy that can minimize the MSE for the estimator when the delay distribution $\mathbb{P}_D$ is unknown. Specifically, we focus on the set of causal policies denoted by $\Pi$, where each policy $\pi\in\Pi$ selects the sampling time $S_k$ of the $k$-th sample based on the transmission delay $\{D_{k'}\}_{k'< k}$ and Wiener process evolution $\{X_t\}_{t\leq S_k}$ from the past. The transmission delay and the evolution of the Wiener process in the future cannot be used to decide the sampling time. Due to the energy constraint, we require that the sampling frequency should below a certain threshold. The optimal sampling problem is organized as follows:
\begin{pb}[MMSE minimization]\label{pb:mse}
	\begin{subequations}
		\begin{align}
			\mathsf{mse}_{\mathsf{opt}}\triangleq&\inf\limits_{\pi\in\Pi}\mathop{\limsup}\limits_{T\rightarrow\infty}\mathbb{E}\left[\frac{1}{T}\int_{t=0}^T\left(\hat{X}_t-X_t\right)^2\mathrm{d}t\right],\label{eq:primalobj}\\
			&\hspace{0.2cm}\text{s.t.}\hspace{0.2cm}\mathop{\limsup}\limits_{T\rightarrow\infty}\mathbb{E}\left[\frac{i(T)}{T}\right]\leq f_{\mathsf{max}}.
		\end{align}
	\end{subequations}
\end{pb}	

\section{Problem Solution}\label{sec:dep}
In this section, %we focus on designing online signal-dependent sampling policies.
the MSE minimization problem (i.e., Problem~\ref{pb:mse}) is reformulated into an optimal stopping problem. Let $\pi^\star$ be an optimum policy whose average MSE achieves $\mathsf{mse}_{\mathsf{opt}}$. Sufficient conditions for $\pi^\star$ are provided in Subsection~\ref{sec:dep-off}. The online sampling algorithm $\pi_{\mathsf{online}}$ is provided in Subsection~\ref{sec:dep-online} and Subsection~\ref{sec:dep-analysis} characterizes the behaviors of the online sampling policy. 

\begin{figure}[h]
	\centering
	\includegraphics[width=.33\textwidth]{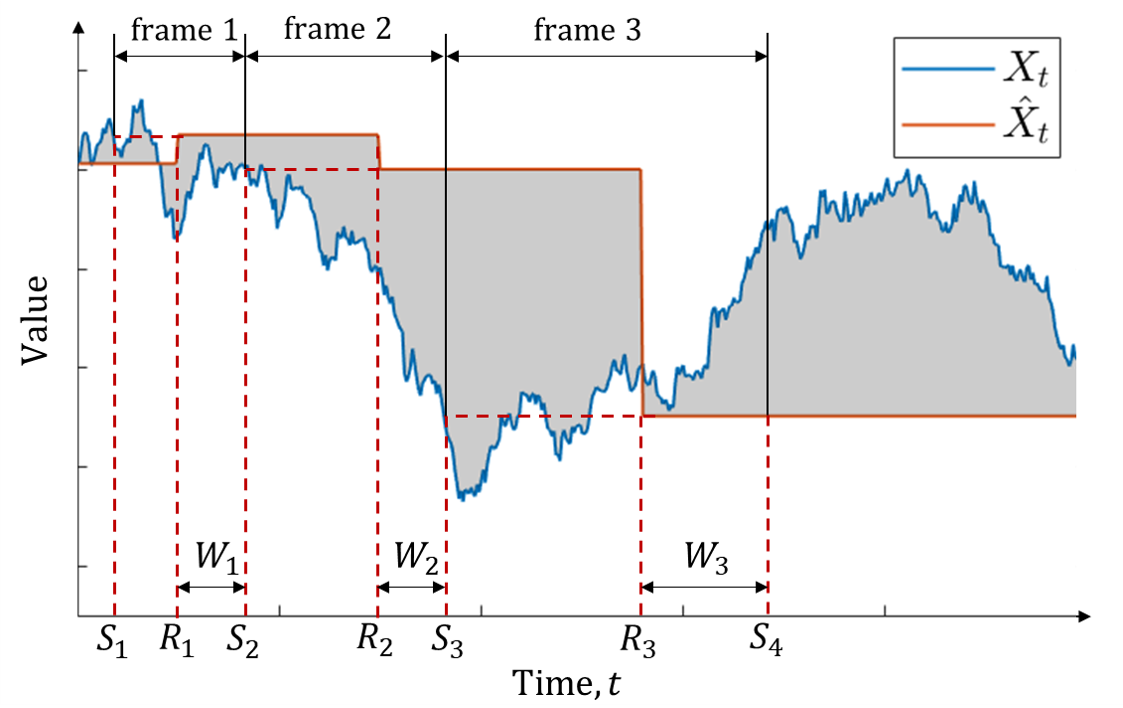}
	\caption{Illustration of the Wiener process and the estimation error. The sampling and reception time-stamp of the $k$-th sample are denoted by $S_k$ and $R_k$, respectively. For MMSE estimator, $\hat{X}_t=X_{S_k}, \forall t\in [R_k, R_{k+1})$. }
	\label{fig:errevolve}
\end{figure}

\subsection{Markov Decision Reformulation~\ref{pb:mse}}\label{sec:dep-rr}
According to \cite[Theorem 1]{sun_wiener}, policy $\pi^\star$ should not take a new sample before the previous sample is delivered to the destination. As is depicted in Fig.~\ref{fig:errevolve}, the waiting time between the delivery time of the $k$-th sample and the sampling time of the $(k+1)$-th sample is denoted by $W_k\geq 0$. Define frame $k$ as the time interval between the sampling time-stamp of the $k$-th and the $(k+1)$-th sample. The following corollary enables us to reformulate Problem~\ref{pb:mse} into a Markov Decision Process. 
\begin{lemma}\label{coro:sig-dep-reformulate}
	Let $\mathcal{I}_k:=(D_k, (X_{S_k+t}-X_{S_k})_{t\geq 0})$ denote the recent information of the sampler in frame $k$. The set of sampling policies that determine the waiting time $W_k$ only based on the recent information $\mathcal{I}_k$ is denoted by $\Pi_{\mathsf{recent}}$. Since for each frame $k$, the difference $X_{S_k+t}-X_{S_k}$ evolves as a Wiener process that is independent of the past $\{X_{S_{k'}+t}-X_{S_{k'}}\}_{k'<k}$, Problem~\ref{pb:mse} can be reformulated into the following Markov decision process:
	\begin{pb}[Markov Decision Process Reformulation]\label{pb:rr}
		\begin{subequations}
			\begin{align}
				\mathsf{mse}_{\mathsf{opt}}\!=\!&\mathop{\inf}_{\pi\in\Pi_{\mathsf{recent}}}\limsup_{K\rightarrow\infty}\left(\frac{\sum_{k=1}^K\mathbb{E}\left[\frac{1}{6}(X_{S_{k+1}}-X_{S_k})^4\right]}{\sum_{k=1}^K\mathbb{E}\left[(S_{k+1}-S_k)\right]}+\overline{D}\right),\label{eq:sig-dep-rr-goal}\\
				&\hspace{0.55cm}\text{s.t. }\hspace{0.15cm}\liminf_{K\rightarrow\infty}\frac{1}{K}\sum_{k=1}^K\mathbb{E}\left[(S_{k+1}-S_k)\right]\geq \frac{1}{f_{\mathsf{max}}}. 
			\end{align}
		\end{subequations}
	\end{pb}
\end{lemma}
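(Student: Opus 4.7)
The plan is to combine three ingredients. First, invoke \cite[Theorem~1]{sun_wiener} to restrict to policies with $S_{k+1}\geq R_k$, so that the waiting time $W_k := S_{k+1}-R_k$ is well-defined and non-negative. Second, use the strong Markov property of the Wiener process at the stopping time $S_k$: the increment $(X_{S_k+t}-X_{S_k})_{t\geq 0}$ is a Brownian motion independent of $\mathcal{F}_{S_k}$, and the delays $D_k$ are i.i.d.\ and independent of the source, so no optimal policy loses anything by ignoring history prior to frame $k$ and depending only on $\mathcal{I}_k$. This justifies the restriction to $\Pi_{\mathsf{recent}}$. Third, and most substantively, rewrite the per-frame expected MSE in closed form via an It\^o/optional-stopping computation.

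For the third step, I would set $B_u := X_{S_k+u}-X_{S_k}$ and recall that $B_u^4 - 6\int_0^u B_s^2\,ds$ is a martingale (It\^o's formula applied to $f(x)=x^4$), so by optional stopping
\begin{equation*}
\mathbb{E}\Big[\int_0^\tau B_s^2\,ds\Big] \;=\; \tfrac{1}{6}\,\mathbb{E}[B_\tau^4]
\end{equation*}
for any stopping time with $\mathbb{E}[\tau^2]<\infty$. Splitting
\begin{equation*}
\int_{R_k}^{R_{k+1}}\!\!(X_t-X_{S_k})^2\,dt \;=\; \int_0^{S_{k+1}-S_k+D_{k+1}}\!B_u^2\,du \;-\; \int_0^{D_k}\!B_u^2\,du,
\end{equation*}
taking expectations, expanding $X_{R_{k+1}}-X_{S_k} = (X_{S_{k+1}}-X_{S_k}) + (X_{R_{k+1}}-X_{S_{k+1}})$, and using independence of the final increment from $\mathcal{F}_{S_{k+1}}$ with conditional variance $D_{k+1}$ and fourth moment $3D_{k+1}^2$, together with $\mathbb{E}[B_{D_k}^4]=3\mathbb{E}[D^2]$ and the i.i.d.\ assumption on delays, collapses the cross terms and yields the per-frame identity
\begin{equation*}
\mathbb{E}\!\int_{R_k}^{R_{k+1}}\!\!(X_t-X_{S_k})^2\,dt \;=\; \tfrac{1}{6}\mathbb{E}[(X_{S_{k+1}}-X_{S_k})^4] + \overline{D}\,\mathbb{E}[S_{k+1}-S_k].
\end{equation*}

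Summing over $k=1,\ldots,K$ and dividing by the total expected frame length $\sum_k \mathbb{E}[R_{k+1}-R_k] = \sum_k \mathbb{E}[S_{k+1}-S_k]$ (the $\mathbb{E}[D_k]$ contributions cancel telescopically because delays are i.i.d.) recovers the ratio in \eqref{eq:sig-dep-rr-goal} with the additive $\overline{D}$ separated out. The renewal reward theorem makes the identification with the time-average in \eqref{eq:primalobj} rigorous once we absorb the negligible boundary contribution from $[0,R_1]$. The frequency constraint translates by the same logic: $\mathbb{E}[i(T)]/T$ converges to the reciprocal of the mean inter-sample time, giving $\liminf_K \tfrac{1}{K}\sum_k \mathbb{E}[S_{k+1}-S_k] \geq 1/f_{\mathsf{max}}$.

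The main obstacle I anticipate is justifying the optional-stopping identity in full generality: the waiting times $W_k$ and hence $S_{k+1}-S_k$ are not bounded a priori, so the condition $\mathbb{E}[(S_{k+1}-S_k)^2]<\infty$ must be established before the martingale identity can be applied. One can either restrict the infimum in Problem~\ref{pb:rr} to policies with finite long-run MSE (which automatically implies the required moment bound) or, equivalently, apply a standard truncation $\tau \wedge n$ and pass to the limit using monotone convergence, a technique already used in the Wiener-process sampling literature.
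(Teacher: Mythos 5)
Your proposal is correct and uses the same essential machinery as the paper: a renewal-reward argument to pass from the time average to per-frame expectations, the optional-stopping identity $\mathbb{E}[\int_0^\tau Z_t^2\,\mathrm{d}t]=\tfrac{1}{6}\mathbb{E}[Z_\tau^4]$ (the paper cites it as Lemma 6 of the Wiener-sampling reference rather than rederiving it via It\^o), and the expansion of the fourth power of a sum of independent increments with Wald's identity to produce the per-frame formula $\tfrac{1}{6}\mathbb{E}[(X_{S_{k+1}}-X_{S_k})^4]+\overline{D}\,\mathbb{E}[S_{k+1}-S_k]$. The one genuine organizational difference is the decomposition: you integrate over delivery-to-delivery intervals $[R_k,R_{k+1})$, on which the estimator is constant, so the identity falls out directly (with the $\tfrac12\mathbb{E}[D^2]$ terms cancelling by the i.i.d.\ delay assumption); the paper instead works frame-by-frame over $[S_k,S_{k+1})$, splits at $R_k$, and must invoke the stationarity relation $\mathbb{E}[(X_{S_k}-X_{S_{k-1}})^4]=\mathbb{E}[(X_{S_{k+1}}-X_{S_k})^4]$ to recombine the two halves --- your route avoids that extra step, at the cost of a slightly more careful bookkeeping of which delay index appears where. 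You also explicitly flag the moment condition needed to apply optional stopping to possibly unbounded stopping times and propose truncation/monotone convergence or restriction to finite-MSE policies; the paper's proof passes over this point silently, so your added care is a strength rather than a deviation.
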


The proof is provided in Appendix~\ref{pf:sig-dep-reformulate} of the supplementary material. 

According to \cite[Theorem 1]{sun_wiener}, there exists a stationary policy $\pi^\star$ that selects the waiting time $W_k$ using a conditional probability distribution given the recent $\mathcal{I}_k$ that achieves $\mathsf{mse}_{\mathsf{opt}}$. Next, we will reveal the sufficient conditions of such policy for designing the online algorithm. % Since the delay $D_k$'s are i.i.d in each frame, and the sampling times $X_{S_k}$ are stopping times, by the strong Markov property of the Wiener process, $(X_{S_k+t}-X_{S_k})_{t\geq 0}$'s are i.i.d and the recent information $\mathcal{I}_k$’s are i.i.d in each frame. By using a stationary policy $\pi$, the system \emph{renews} itself at the beginning of each frame and Problem~\ref{pb:rr} becomes a renewal-reward optimization. }

\subsection{Designing $\pi^\star$ with Known $\mathbb{P}_D$}\label{sec:dep-off}
%Recall that $\overline{\mathcal{E}}_{\pi_{\mathsf{dep}}^\star}$ is the average MSE obtained by policy $\pi_{\mathsf{dep}}^\star$. Define ratio $\gamma_{\mathsf{dep}}^\star$ to be:
%\begin{equation}\gamma_{\mathsf{dep}}^\star:=\overline{\mathcal{E}}_{\pi_{\mathsf{dep}}^\star}-\overline{D}.\label{eq:gamma-dep-opt-def}
%\end{equation}

Let $\Pi_{\mathsf{cons}}\triangleq\{\pi\in\Pi_{\mathsf{recent}}|\limsup_{T\rightarrow\infty}\mathbb{E}\left[\frac{i(T)}{T}\right]\leq f_{\mathsf{max}}\}$ denote the set of policies that satisfy the sampling frequency constraint. Since $\pi^\star$ achieves the minimum expected time-average MSE among $\Pi_{\mathsf{cons}}$, we have: 
\begin{equation}
\limsup_{K\rightarrow\infty}\frac{\sum_{k=1}^K\mathbb{E}\left[\frac{1}{6}(X_{S_{k+1}}-X_{S_k})^4\right]}{\sum_{k=1}^K\mathbb{E}[D_k+W_k]}\geq \overline{\mathcal{E}}_{\pi^\star}-\overline{D},\pi\in\Pi_{\mathsf{cons}}. \label{eq:dep-inequal-equiv}
\end{equation}

For simplicity, denote $\gamma^\star:=\overline{\mathcal{E}}_{\pi^\star}-\overline{D}$, which is the average cost of the MDP when the optimum policy $\pi^\star$ is used, i.e., $\gamma^\star=\limsup_{K\rightarrow\infty}\frac{\sum_{k=1}^K\mathbb{E}\left[\frac{1}{6}(X_{S_{k+1}}-X_{S_k})^4\right]}{\sum_{k=1}^K\mathbb{E}[D_k+W_k]}$. Because $\frac{1}{K}\sum_{k=1}^K\mathbb{E}[D_k+W_k]>0$ , for any policy $\pi\in\Pi_{\mathsf{cons}}$, inequality \eqref{eq:dep-inequal-equiv} can be rewritten as:
\begin{align}
\theta_{\pi}(\gamma^\star):=&\liminf_{K\rightarrow\infty}\left(\frac{1}{K}\sum_{k=1}^K\mathbb{E}\left[\frac{1}{6}(X_{S_{k+1}}-X_{S_k})^4\right]\right.\nonumber\\
&\left.-\gamma^\star\cdot\frac{1}{K}\sum_{k=1}^K\mathbb{E}[D_k+W_k]\right)\geq 0.  \label{eq:dep-inequal}
\end{align}

Inequality \eqref{eq:dep-inequal} takes the minimum value 0 if and only if policy $\pi$ is optimum. Therefore, if the ratio $\gamma^\star$ is known, an optimum policy $\pi^\star$ can be obtained by solving the following functional optimization: 
\begin{pb}[Functional Optimization Problem]\label{pb:sig-frac}
\begin{subequations}\begin{align}
		\mathsf{mse}_{\mathsf{opt}}=&\inf_{\pi\in\Pi}\limsup_{K\rightarrow\infty}\left(\frac{1}{K}\sum_{k=1}^K\mathbb{E}\left[\frac{1}{6}\left(X_{S_{k+1}}-X_{S_k}\right)^4\right]\right.\nonumber\\
		&\hspace{1cm}\left.-\gamma^\star\frac{1}{K}\sum_{k=1}^K\mathbb{E}\left[\left(D_k+W_k\right)\right]\right),\label{eq:sig-frac-obj}\\
		&\hspace{0.2cm}\text{s.t.}\hspace{0.2cm}\liminf_{K\rightarrow\infty}\mathbb{E}\left[\frac{1}{K}\sum_{k=1}^K\left(D_k+W_k\right)\right]\geq\frac{1}{f_{\mathsf{max}}}.\label{eq:cons} 
	\end{align}
\end{subequations}
\end{pb}

To solve Problem~\ref{pb:sig-frac}, we can take the Lagrangian duality of the constraint \eqref{eq:cons} with a dual variable $\nu$ and obtain the Lagrange function $\mathcal{L}(\pi, \gamma, \nu)$:
\begin{align}
&\mathcal{L}(\pi, \gamma, \nu)\triangleq\limsup_{K\rightarrow\infty}\left(\frac{1}{K}\sum_{k=1}^K\mathbb{E}\left[\frac{1}{6}(X_{S_{k+1}}-X_{S_k})^4\right]\right.\nonumber\\
&\hspace{2cm}\left.-(\gamma+\nu)\frac{1}{K}\sum_{k=1}^K\mathbb{E}\left[\left(S_{k+1}-S_k\right)\right]\right)+\nu\frac{1}{f_{\mathsf{max}}}.\label{eq:lagrange-dep}
\end{align}

We say that a stationary policy $\pi$ has a threshold structure, if the waiting time $W_k$ is determined by:
\begin{equation}W_k=\inf\{w\geq 0\big||X_{S_k+D_k+w}-X_{S_k}|\geq \tau\}.\label{eq:opt-dep}
\end{equation}

Let $Z_t$ be a Wiener process staring from $t=0$.  Let $D$ be the random transmission delay following distribution $\mathbb{P}_D$ and the value of the Wiener process at the random time $D$ is denoted by $Z_D$. Using the threshold policy \eqref{eq:opt-dep}, the expected frame-length $L_k:=D_k+W_k$ and $\frac{1}{6}(X_{S_{k+1}}-X_{S_k})^4$ has the following properties:
\begin{lemma}\cite[Corollary 1 Restated]{sun_wiener}\label{lemma:cond-l}
\begin{subequations}
	\begin{align}
		&\mathbb{E}[L_k]=\mathbb{E}\left[\max\{\tau^2, Z_D^2\}\right],\\
		&\mathbb{E}\left[\frac{1}{6}(X_{S_{k+1}}-X_{S_k})^4\right]=\frac{1}{6}\mathbb{E}\left[\max\{\tau^2, Z_D^2\}^2\right]. 
	\end{align}
\end{subequations}
\end{lemma}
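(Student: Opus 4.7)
The plan is to leverage the strong Markov property of the Wiener process together with the optional stopping theorem. I would first introduce the auxiliary process $Z_w := X_{S_k+w} - X_{S_k}$, which by the strong Markov property is a standard Wiener process independent of $\mathcal{F}_{S_k}$; the threshold rule then takes the clean form $W_k = \inf\{w \geq 0 : |Z_{D_k + w}| \geq \tau\}$. Letting $D$ denote an independent copy of $D_k$ drawn from $\mathbb{P}_D$, it suffices to compute both quantities in terms of the pair $(Z, D)$.

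Next, I would establish the pointwise identity $(X_{S_{k+1}} - X_{S_k})^2 = \max\{\tau^2, Z_D^2\}$ by case analysis. If $|Z_D| \geq \tau$, the stopping condition is already satisfied at $w=0$, so $W_k = 0$ and $X_{S_{k+1}} - X_{S_k} = Z_D$, giving $(X_{S_{k+1}} - X_{S_k})^2 = Z_D^2 = \max\{\tau^2, Z_D^2\}$. If $|Z_D| < \tau$, then $W_k > 0$ and by continuity of Brownian paths $|Z_{D + W_k}| = \tau$ exactly at the first exit, so $(X_{S_{k+1}} - X_{S_k})^2 = \tau^2 = \max\{\tau^2, Z_D^2\}$. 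Squaring this pointwise identity and taking expectations produces the second equation in the lemma immediately.

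For the first identity, I would apply the optional stopping theorem twice to the martingale $M_w := Z_w^2 - w$. A first application at time $D$, which has finite mean by the paper's standing assumption, gives $\mathbb{E}[Z_D^2] = \mathbb{E}[D]$ via Wald's identity. A second application, this time to the shifted martingale $\{Z_{D+w}^2 - (D+w)\}_{w \geq 0}$ at the time $W_k$ and conditioned on $\mathcal{F}_D$, combined with the case analysis above, yields
\[
\mathbb{E}[W_k \mid \mathcal{F}_D] = \mathbb{E}[Z_{D+W_k}^2 \mid \mathcal{F}_D] - Z_D^2 = (\tau^2 - Z_D^2)\,\mathbf{1}\{|Z_D| < \tau\}.
\]
Summing the two contributions and collapsing the indicator gives
\[
\mathbb{E}[L_k] = \mathbb{E}[Z_D^2] + \mathbb{E}\bigl[(\tau^2 - Z_D^2)\,\mathbf{1}\{|Z_D| < \tau\}\bigr] = \mathbb{E}[\max\{\tau^2, Z_D^2\}],
\]
which is the first identity.

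The main obstacle is verifying the integrability hypotheses for each application of optional stopping. One needs $\mathbb{E}[D] < \infty$, which is part of the standing assumption on the delay distribution, and $\mathbb{E}[W_k] < \infty$, which follows from the classical fact that the exit time of a Brownian motion from the bounded interval $(-\tau, \tau)$ has finite mean (indeed, bounded above by $\tau^2$). A minor subtlety is that the inner stopping time $W_k$ is defined relative to the random starting epoch $D$; conditioning on $\mathcal{F}_D$ and invoking the independence of the increments $Z_{D+\cdot} - Z_D$ from $\mathcal{F}_D$ resolves this cleanly, so no further delicacy is required.
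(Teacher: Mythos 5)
Your proposal is correct. Note that the paper itself does not prove this lemma: it is stated as a restatement of Corollary 1 of the cited work on sampling the Wiener process, so your argument serves as a self-contained derivation rather than an alternative to an in-paper proof. Your route is the standard one and is consistent with the machinery the paper uses elsewhere: the pointwise identity $(X_{S_{k+1}}-X_{S_k})^2=\max\{\tau^2,Z_D^2\}$ (exit either occurs immediately at $w=0$ when $|Z_D|\geq\tau$, or at the boundary $\pm\tau$ by path continuity) immediately gives the fourth-moment identity, and Wald's second identity $\mathbb{E}[L_k]=\mathbb{E}[Z_{L_k}^2]$ then gives the frame-length identity; indeed, in the appendix proof of Lemma~\ref{lemma:4order} the paper itself writes $\mathbb{E}[l_\gamma]=\mathbb{E}[Z_{l_\gamma}^2]=\mathbb{E}[\max\{3\gamma,Z_D^2\}]$ by citing exactly this optional-stopping fact. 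Your handling of the integrability hypotheses is also sound: $\mathbb{E}[D]<\infty$ is assumed, the conditional exit time from $(-\tau,\tau)$ started at $Z_D$ has mean $(\tau^2-Z_D^2)\mathbf{1}\{|Z_D|<\tau\}\leq\tau^2$, and conditioning on $\mathcal{F}_D$ (using independence of the post-$D$ increments, with $D$ independent of the Wiener process) legitimizes the two-stage application of optional stopping. If anything, the first application could be stated even more simply, since $D$ is independent of $Z$ one gets $\mathbb{E}[Z_D^2]=\mathbb{E}[\mathbb{E}[Z_D^2\mid D]]=\mathbb{E}[D]$ without invoking optional stopping at all, but this is a cosmetic point.
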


As is revealed by \cite{sun_wiener}, the optimum policy $\pi^\star$ has a threshold structure as in equation \eqref{eq:opt-dep}. To design an off-line algorithm that can learn the updating threshold $\tau^\star$ of $\pi^\star$, 
we then reveal the necessary conditions that $\tau^\star$ should satisfy. %These necessary conditions are provided by the Kurush-Kuhn Tucker conditions of the convex optimization Problem~\ref{pb:sig-frac}.} 
With slightly abuse of notations, let $\mathcal{L}(\tau, \gamma, \nu)$ denote the expected value of the Lagrange function $\mathcal{L}(\pi, \gamma, \nu)$ when a stationary policy $\pi$ with threshold $\tau$ is used. According to Lemma~\ref{lemma:cond-l}, $\mathcal{L}(\tau, \gamma, \nu)$ can be computed as follows:
\begin{align}
\mathcal{L}(\tau, \gamma, \nu)=&\mathbb{E}\left[\frac{1}{6}\max\{\tau^2, Z_D^2 \}^2\right]-(\gamma+\nu)\mathbb{E}[\max\{\tau^2, Z_D^2\}]\nonumber\\
&+\nu\frac{1}{f_{\mathsf{max}}}.
\end{align}

\hspace{-10pt}\emph{Condition 1: }\cite[Theorem 6 Restated]{sun_wiener} Let $\tau(\gamma, \nu)$ be the optimum sampling threshold that minimizes function $\mathcal{L}(\tau, \gamma, \nu)$, which can be computed as follows:
\begin{equation}
\tau(\gamma, \nu):=\arg\inf_{\tau\geq 0}\mathcal{L}(\tau, \gamma ,\nu)=\sqrt{3(\gamma+\nu)}.\label{eq:tauopt}
\end{equation}

Recall that for any policy $\pi\in\Pi_{\text{cons}}$ with threshold $\tau$,  inequality \eqref{eq:dep-inequal-equiv} implies
\begin{equation}
\theta_\pi(\gamma^\star)=\frac{1}{6}\mathbb{E}\left[\max\{\tau^2, Z_D^2\}^2\right]-\gamma^\star\mathbb{E}\left[\max\{\tau^2, Z_D^2\}\right]\geq 0. \label{eq:theta}
\end{equation}
According to \eqref{eq:tauopt}, inequality \eqref{eq:theta} holds with equality if and only if $\pi^\star$ with threshold $\tau^\star=\sqrt{3(\gamma^\star+\nu^\star)}$ is used. 

\hspace{-10pt}\emph{Condition 2: }\cite[Eq.~(123, 125)]{sun_wiener}\begin{equation}
	\nu^\star\left(\mathbb{E}\left[\max\{3(\gamma^\star+\nu^\star), Z_D^2\}\right]-\frac{1}{f_{\mathsf{max}}}\right)=0, \nu^\star\geq 0. \label{eq:cs}
\end{equation}
Adding the Complete Slackness (CS) condition \eqref{eq:cs} on both sides of \eqref{eq:theta}, the necessary condition for $\gamma^\star$ then becomes:
\begin{equation}
\overline{g}_\nu(\gamma^\star)=\theta_{\pi^\star}(\gamma^\star)=0, \label{eq:equation-offline}
\end{equation}
where function $\overline{g}_\nu(\gamma):=\mathbb{E}[g_\nu(\gamma;Z_D)]$ is the expectation of function $g_{\nu}(\gamma;Z_D)$ defined as follows: 
\begin{equation}
g_\nu(\gamma;Z_D):=\frac{1}{6}\max\{3(\gamma+\nu), Z_D^2\}^2-\gamma\max\{3(\gamma+\nu), Z_D^2\}.\label{eq:gdef}
\end{equation}

As is shown by \cite[Theorem 7]{sun_wiener}, the duality gap between $\overline{\mathcal{E}}_{\pi^\star}$ and $\sup_{\nu\geq 0}\inf_{\pi}\mathcal{L}(\pi, \gamma^\star, \nu)$ is zero, and \eqref{eq:equation-offline} becomes a necessary and sufficient condition. 
%
%
%
%Notice that for stationary policy $\pi$ with threshold $\tau=\sqrt{3(\gamma+\nu)}$, function $\theta_\pi(\gamma)$ defined in \eqref{eq:dep-inequal-equiv} can be rewritten as follows:
%\begin{equation}
%	\theta_\pi(\gamma)=\overline{g}_\nu(\gamma)+\nu\left(\mathbb{E}\left[\max\{3(\gamma+\nu), Z_D^2\}\right]-\frac{1}{f_{\mathsf{max}}}\right). \label{eq:cs-1}
%\end{equation}. Inequality \eqref{eq:dep-inequal-equiv} implies $\theta_{\pi^\star}(\gamma^\star)=0$, by plugging the CS condition \eqref{eq:cs} into \eqref{eq:cs-1}, we arrive at the conclusion that $\gamma^\star$ and dual optimizer $\nu^\star$ satisfy the following constraint:
%
%% Then $\gamma^\star$ is the unique solution to function $h_{\nu^\star}(\gamma)=0$. 
%When the delay distribution $\mathbb{P}_D$ is known, we can compute the optimum threshold $\gamma^\star+\nu^\star$ by solving equation \eqref{eq:equation-offline},\eqref{eq:cs} and therefore obtain policy $\pi^\star$. 
%For probability distribution $\mathbb{P}_D$ that satisfies Assumption~\ref{assu:absolute}, we have the following results on the optimum ratio $\gamma_{\mathsf{dep}}^\star$:

\subsection{An Online Algorithm $\pi_{\mathsf{online}}$}\label{sec:dep-online}

When $\mathbb{P}_D$ is unknown but $\nu^\star$ is known, we can approximate $\gamma^\star$ by solving equation \eqref{eq:equation-offline} through stochastic approximation \cite{neely2021fast,Kushner2003,robbins_monro}. Notice that the role of $\nu^\star$ is to satisfy the sampling frequency constraint. To achieve this goal, we approximate $\nu^\star$ by maintaining a sequence $\{U_k\}$ that records the sampling constraint violations up to frame $k$. 

The algorithm is initialized by selecting $\gamma_1=0$ and $U_1=0$%\footnote{Different from our previous work \cite{Tang2205:Sending}
%, we do not require the upper and lower bound of $\mathbb{E}[D]$ and $\mathbb{E}[D^2]$ for initialization}
. In each frame $k$, the sampling and updating rules are as follows:

\hspace{-10pt}\underline{1. Sampling: }We treat $\nu_k:=\frac{1}{V}U_k^+$ as the dual optimizer $\nu$, where $V>0$ is fixed as a constant. The waiting time $W_{k+1}$ is selected to minimize the Lagrange function \eqref{eq:lagrange-dep}, and according to the statement after equation \eqref{eq:theta}, $W_k$ is selected by:
\begin{equation}
	W_k=\inf\{w\geq 0|\left|X_{S_k+D_k+w}-X_{S_k}\right|\geq\sqrt{3\left(\gamma_k+\nu_k\right)}\}.\label{eq:o-dep-wait}
\end{equation}

\hspace{-10pt}\underline{2. Update $\gamma_k$: }To search for the root $\gamma>0$ of equation $\overline{g}_{\nu_k}(\gamma)=0$, we update $\gamma_{k}$ through the Robbins-Monro algorithm \cite{robbins_monro}. In each frame $k$, we are given an i.i.d sample $\delta X_k=X_{S_k+D_k}-X_{S_k}\sim Z_D$, and the Robbins-Monro algorithm operates by:
\begin{align}
	&\gamma_{k+1}=\left(\gamma_{k}+\eta_kY_k\right)^+,\label{eq:robbins-monro-gamma}
\end{align}
where $Y_k=g_{\nu_k}(\gamma_k;\delta X_k)$ and function $g_{\nu}(\cdot)$ is defined in \eqref{eq:gdef}. Recall that $\overline{D}_{\text{lb}}$ is a non-zero lower bound of the average delay, the step-size $\{\eta_k\}$ is selected by:
\begin{equation}
	\eta_k=\frac{1}{\overline{D}_{\text{lb}}(2+k^\alpha)},\alpha\in(0.5, 1]. \label{eq:stepsize}
\end{equation}

\hspace{-10pt}\underline{3. Update $U_k$: }To guarantee that the sampling frequency constraint is not violated, we update the violation $U_k$ up to the end of frame $k$ by:
\begin{equation}
U_{k+1}=U_k+\left(\frac{1}{f_{\mathsf{max}}}-(D_k+W_k)\right). \label{eq:debt-evolve}
\end{equation}

\subsection{Theoretical Analysis}\label{sec:dep-analysis}
We analyze the convergence and optimality of algorithm $\pi_{\mathsf{online}}$. We assume there is no sampling frequency constraint, i.e., $f_{\mathsf{max}}=\infty$ and make the following assumption on distribution $\mathbb{P}_D$:
\begin{assu}
The fourth order moment of the transmission delay is upper bounded by $B$, i.e., 
\[\mathbb{E}[D^4]\leq B<\infty.\]
\end{assu}

The convergence behavior of the optimum threshold $3\gamma^\star$ and the MSE performance are manifested in the following theorems:
\begin{theorem}\label{thm:dep-converge}
The proposed algorithm learns the optimum parameter $\gamma^\star$ almost surely, i.e., 
\begin{equation}
\lim_{k\rightarrow\infty}\gamma_k=\gamma^\star, \hspace{0.3cm}\text{w.p.1}.\label{eq:gamma-as}
\end{equation}
\end{theorem}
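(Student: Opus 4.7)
The plan is to reduce the recursion \eqref{eq:robbins-monro-gamma} to a pure Robbins--Monro scheme targeting the root of $\overline{g}_0(\cdot)$ and then run a Lyapunov drift analysis on $V_k:=(\gamma_k-\gamma^\star)^2$. Because $f_{\mathsf{max}}=\infty$, the update \eqref{eq:debt-evolve} forces $U_{k+1}=U_k-(D_k+W_k)\le 0$, so starting from $U_1=0$ we obtain $\nu_k=\tfrac{1}{V}U_k^+\equiv 0$ and the iterate reduces to $\gamma_{k+1}=(\gamma_k+\eta_k g_0(\gamma_k;\delta X_k))^+$ with $\nu^\star=0$.

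Next I would establish the structural properties of $\overline{g}_0$. Differentiating each branch of \eqref{eq:gdef} yields $\overline{g}_0'(\gamma)=-\mathbb{E}[\max\{3\gamma,Z_D^2\}]\le -\overline{D}_{\text{lb}}<0$, so $\overline{g}_0$ is strictly decreasing; combined with $\overline{g}_0(0)=\tfrac{1}{6}\mathbb{E}[Z_D^4]>0$ and $\overline{g}_0(\gamma)\to-\infty$, this produces a unique root $\gamma^\star>0$ and the strong-monotonicity bound $(\gamma-\gamma^\star)\,\overline{g}_0(\gamma)\le -\overline{D}_{\text{lb}}(\gamma-\gamma^\star)^2$. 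Let $\mathcal{F}_k$ be the natural filtration containing all history through the end of frame $k-1$, and decompose $Y_k=\overline{g}_0(\gamma_k)+M_k$ with $\mathbb{E}[M_k\mid\mathcal{F}_k]=0$. Using the non-expansiveness of $(\cdot)^+$ about $\gamma^\star\ge 0$, the expansion $V_{k+1}\le V_k+2\eta_k(\gamma_k-\gamma^\star)Y_k+\eta_k^2 Y_k^2$ followed by conditional expectations gives
\begin{equation*}
\mathbb{E}[V_{k+1}\mid\mathcal{F}_k]\le(1-2\eta_k\overline{D}_{\text{lb}})V_k+\eta_k^2\,\mathbb{E}[Y_k^2\mid\mathcal{F}_k].
\end{equation*}
A branch-wise analysis of \eqref{eq:gdef}, using Assumption~1 to get $\mathbb{E}[Z_D^8]=105\,\mathbb{E}[D^4]\le 105B$, yields $\mathbb{E}[Y_k^2\mid\mathcal{F}_k]\le C(1+V_k^2)$ for some finite $C$.

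To close the argument I would use the step-size $\eta_k=\tfrac{1}{2\overline{D}_{\text{lb}}}k^{-\alpha}$ with $\alpha\in(0.5,1]$, which satisfies $\sum_k\eta_k=\infty$ and $\sum_k\eta_k^2<\infty$. Once a uniform bound $\sup_k\mathbb{E}[V_k^2]<\infty$ is established, the noise term $\eta_k^2\mathbb{E}[Y_k^2\mid\mathcal{F}_k]$ becomes summable and the Robbins--Siegmund supermartingale convergence theorem applied to the drift inequality above yields $V_k\to V_\infty$ almost surely together with $\sum_k\eta_k V_k<\infty$ a.s.; the divergence of $\sum_k\eta_k$ then forces $V_\infty=0$, i.e., $\gamma_k\to\gamma^\star$ a.s.

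The hard part is precisely the uniform $L^2$ bound on $V_k$ in the presence of the $\eta_k^2 V_k^2$ noise-blow-up term: because there is neither a projection onto a bounded set nor a uniformly bounded $Z_D$, the standard perturbed-ODE machinery does not apply, which is exactly why a heavy-traffic style Lyapunov argument is needed. I plan to exploit the fact that on the high-probability event $\{Z_D^2<3\gamma_k\}$ the update becomes purely deterministic, $g_0(\gamma_k;\delta X_k)=-\tfrac{3}{2}\gamma_k^2$, so whenever $\gamma_k\gg\gamma^\star$ the restoring drift grows quadratically in $\gamma_k$ and, after multiplication by $\eta_k$, dominates the $\eta_k^2$-scaled quartic noise contribution. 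A Foster-type induction across frames based on this observation should propagate uniform $L^2$ (in fact $L^p$ for any fixed $p$) bounds on $V_k$, after which the Robbins--Siegmund step closes the proof.
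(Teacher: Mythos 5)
Your reduction to $\nu_k\equiv 0$, the strong-monotonicity bound $(\gamma-\gamma^\star)\overline{g}_0(\gamma)\le-\overline{D}_{\mathrm{lb}}(\gamma-\gamma^\star)^2$, and the non-expansiveness of $(\cdot)^+$ around $\gamma^\star\ge 0$ are all fine, and a Robbins--Siegmund route is a legitimate alternative to the paper's argument (the paper instead verifies the localized conditions of the Kushner--Yin ODE theorem and then shows $\gamma^\star$ is the unique stable equilibrium of $\dot\gamma=\overline{g}_0(\gamma)$ via the same Lyapunov function). But your proof has a genuine gap at exactly the step you flag: the uniform moment bound on $V_k$ needed to make $\sum_k\eta_k^2\mathbb{E}[Y_k^2\mid\mathcal{F}_k]$ summable is never established, and the dominance heuristic you propose to establish it does not work as stated. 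With $\mathbb{E}[Y_k^2\mid\mathcal{F}_k]\le C(1+V_k^2)$, the drift inequality reads
\begin{equation*}
\mathbb{E}[V_{k+1}\mid\mathcal{F}_k]\le V_k+2\eta_k(\gamma_k-\gamma^\star)\overline{g}_0(\gamma_k)+\eta_k^2C\bigl(1+V_k^2\bigr),
\end{equation*}
and for $\gamma_k$ large the restoring term is of order $-\eta_k\gamma_k^3$ while the noise term is of order $+\eta_k^2\gamma_k^4$; the latter dominates whenever $\gamma_k\gtrsim 1/\eta_k$, so a Foster-type induction on $\mathbb{E}[V_k^2]$ (or $L^p$) does not close without an a priori bound $\gamma_k=\mathcal{O}(1/\eta_k)$, which you do not have. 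The observation that $g_0(\gamma_k;\delta X_k)=-\tfrac32\gamma_k^2$ on $\{Z_D^2<3\gamma_k\}$ cannot rescue this by itself, because that deterministic part is precisely what makes the conditional \emph{second moment} of $Y_k$ quartic in $\gamma_k$.

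The fix used in the paper (in the proof of Theorem~\ref{thm:rate-converge}) is different in two essential ways: (i) one bounds the conditional \emph{variance} of $Y_k$, not its second moment, and shows $\mathrm{Var}[Y_k\mid\mathcal{H}_{k-1}]\le(35+\tfrac34)B$ uniformly in $\gamma_k$, since the deviation from the conditional mean lives only on $\{Z_D^2>3\gamma_k\}$ where it is controlled by $\mathbb{E}[Z_D^8]\le 105B$; and (ii) the large negative conditional mean $\mathbb{E}_k[Y_k]\approx-\tfrac32\gamma_k^2$ is handled through the truncation itself, by introducing the unused rate $\chi_k=(-(\gamma_k+\eta_kY_k))^+$ and the identity \eqref{eq:unused}, so that the case $\mathbb{E}_k[\gamma_k+\eta_kY_k]\le\gamma^\star$ yields a drift bounded by $-2\eta_k\overline{D}_{\mathrm{lb}}V(\gamma_k)+\tfrac12\eta_k^2\mathrm{Var}[Y_k]$ with a \emph{constant} noise term rather than one growing like $V_k^2$. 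If you import that case analysis (or equivalently replace your $\eta_k^2Y_k^2$ term by $\eta_k^2\mathrm{Var}[Y_k]$ plus the mean-square drift treated via the unused rate), your Robbins--Siegmund argument closes and in fact recovers the paper's rate result as well; alternatively, you can sidestep global moment bounds entirely by following the paper's ODE-method proof of this theorem, which only requires the truncated conditions $\sup_k\mathbb{E}[|Y_k|\mathbb{I}_{(|\gamma_k|\le N)}]<\infty$ and bounded variance of $\delta M_k\mathbb{I}_{(|\gamma_k|\le N)}$ for each finite $N$. As written, however, the proposal's key lemma is asserted rather than proved, so the argument is incomplete.
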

The proof of Theorem \ref{thm:dep-converge} is obtained by the ODE method in \cite[Chapter 5]{Kushner2003} and is provided in Appendix~\ref{pf:dep-converge}. 

\begin{theorem}\label{thm:rate-converge}
The second moment of $(\gamma_k-\gamma^\star)$ satisfies:
\begin{equation} 
\sup_k\frac{1}{\eta_k}\mathbb{E}\left[|\gamma_k-\gamma^\star|^2\right]<\infty. 
\end{equation}
Specifically, if $\alpha=1$ and $\eta_k=\frac{1}{\overline{D}_{{\rm lb}}(2+k^\alpha)}$, then the mean square error decays with rate $\mathbb{E}[(\gamma_k-\gamma^\star)^2]=\mathcal{O}(1/k)$. 
\end{theorem}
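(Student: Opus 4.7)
The plan is to analyze the Lyapunov function $V_k := (\gamma_k-\gamma^\star)^2$ via a one-step drift argument. Because $f_{\max}=\infty$ forces the virtual queue \eqref{eq:debt-evolve} to stay non-positive, the dual iterate $\nu_k\equiv 0$ and $Y_k=g_0(\gamma_k;\delta X_k)$ with $\mathbb{E}[Y_k\mid\mathcal{F}_k]=\overline{g}_0(\gamma_k)$, where $\mathcal{F}_k$ denotes the history at the start of frame $k$. Since $\gamma^\star\geq 0$, the projection $(\cdot)^+$ is non-expansive toward $\gamma^\star$, so
\[
V_{k+1}\leq V_k+2\eta_k(\gamma_k-\gamma^\star)Y_k+\eta_k^2 Y_k^2,
\]
and the proof reduces to (i) a monotonicity bound on $(\gamma_k-\gamma^\star)\overline{g}_0(\gamma_k)$ and (ii) a variance bound on $\mathbb{E}[Y_k^2\mid\mathcal{F}_k]$.

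For (i), a direct differentiation of the two-piece expression \eqref{eq:gdef} at $\nu=0$ gives the identity $-\overline{g}_0'(\gamma)=\mathbb{E}[\max\{3\gamma,Z_D^2\}]\geq \mathbb{E}[Z_D^2]=\overline{D}\geq \overline{D}_{\text{lb}}$ for every $\gamma\geq 0$. By the mean value theorem this yields the global strong monotonicity $(\gamma-\gamma^\star)\overline{g}_0(\gamma)\leq -\overline{D}_{\text{lb}}(\gamma-\gamma^\star)^2$, and with the step-size $\eta_k=k^{-\alpha}/(2\overline{D}_{\text{lb}})$ the drift collapses to the contraction
\[
\mathbb{E}[V_{k+1}\mid\mathcal{F}_k]\leq (1-k^{-\alpha})V_k+\eta_k^2\mathbb{E}[Y_k^2\mid\mathcal{F}_k].
\]

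For (ii), a case-by-case inspection of $g_0$ gives $|g_0(\gamma;z)|\leq 3\gamma^2/2$ on $\{z^2\leq 3\gamma\}$ and $|g_0(\gamma;z)|\leq z^4/6$ on $\{z^2>3\gamma\}$, so $Y_k^2\leq \tfrac{9}{4}\gamma_k^4+\tfrac{1}{36}(\delta X_k)^8$. Isserlis' formula $\mathbb{E}[(\delta X_k)^8\mid D_k]=105\,D_k^4$ combined with $\mathbb{E}[D_k^4]\leq B$ bounds the second term by a constant; the first term reduces the task to the uniform fourth-moment bound $\sup_k\mathbb{E}[\gamma_k^4]<\infty$. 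I would obtain the latter through a separate Foster--Lyapunov argument on the higher-order function $\gamma_k^{2p}$: since $\overline{g}_0(\gamma_k)\sim -\tfrac{3}{2}\gamma_k^2$ for large $\gamma_k$, its mean drift on $\gamma_k^{2p}$ is of order $-\eta_k\gamma_k^{2p+1}$, which dominates the $O(\eta_k^2\gamma_k^{2p-2}(\delta X_k)^8)$ stochastic noise in expectation once $\gamma_k$ leaves a sufficiently large compact set.

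Combining (i) and (ii), $C:=\sup_k\mathbb{E}[Y_k^2]<\infty$, and taking unconditional expectation in the drift yields $\mathbb{E}[V_{k+1}]\leq (1-k^{-\alpha})\mathbb{E}[V_k]+C\eta_k^2$. A Chung-type induction on this scalar recursion then gives $\sup_k\mathbb{E}[V_k/\eta_k]<\infty$ for every $\alpha\in(0.5,1]$, and in particular $\mathbb{E}[V_k]=\mathcal{O}(1/k)$ when $\alpha=1$. The principal obstacle lies in step (ii): because the iterate $\gamma_k$ takes values in the unbounded set $[0,\infty)$ and the conditional second moment of the noise grows like $\gamma_k^4$, the standard ODE perturbation method of \cite{Kushner2003} cannot be applied directly; one must instead propagate moment bounds from $\mathbb{E}[D^4]\leq B$ upward to $\mathbb{E}[\gamma_k^4]$ by exploiting the steep negative slope of $\overline{g}_0$ at large arguments, which is precisely the heavy-traffic-style Lyapunov drift analysis alluded to in the contributions section.
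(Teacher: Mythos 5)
Your skeleton (Lyapunov drift for $(\gamma_k-\gamma^\star)^2$, strong monotonicity of $\overline{g}_0$ giving a $-\eta_k\overline{D}_{\mathrm{lb}}$ contraction, then a scalar recursion and a Chung/Kushner-type induction) matches the paper, and your step (i) is essentially Lemma~\ref{lemma:g}-(iii). The genuine gap is in step (ii), i.e.\ in how you control the noise when $\gamma_k$ is large. Your non-expansiveness bound forces the noise term to be $\eta_k^2\mathbb{E}[Y_k^2\mid\mathcal{F}_k]\asymp\eta_k^2\gamma_k^4$, so you must prove $\sup_k\mathbb{E}[\gamma_k^4]<\infty$; but the Foster--Lyapunov argument you sketch for $\gamma_k^{2p}$ does not close. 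Writing $Y_k=-\tfrac32\gamma_k^2+\tfrac16(\delta X_k^2-3\gamma_k)^2\mathbb{I}_{(\delta X_k^2\ge 3\gamma_k)}$, the second-order term in the expansion of $\gamma_{k+1}^{2p}$ contains $\eta_k^2\gamma_k^{2p-2}Y_k^2\asymp\eta_k^2\gamma_k^{2p+2}$ coming from the \emph{deterministic} $-\tfrac32\gamma_k^2$ part, not only the $O(\eta_k^2\gamma_k^{2p-2}\delta X_k^8)$ piece you list; in the regime $\eta_k\gamma_k\gtrsim 1$ this term is not dominated by the negative drift $-\eta_k\gamma_k^{2p+1}$, so "steep slope beats noise outside a compact set" fails exactly where it is needed (the only thing that saves the iterate there is the truncation at $0$, which your bound has already discarded). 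Moreover, even bounding the upward excursions $\gamma_{k+1}\le\gamma_k+\tfrac{\eta_k}{6}\delta X_k^4$ at the level of fourth moments of $\gamma_k$ requires $\mathbb{E}[\delta X_k^{16}]\propto\mathbb{E}[D^8]$, which is not available under Assumption~1 ($\mathbb{E}[D^4]\le B$ only).

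The paper avoids the whole issue of $\mathbb{E}[\gamma_k^4]$: in \eqref{eq:stepsize-drift} it splits $\mathbb{E}_k[(\gamma_k-\gamma^\star+\eta_kY_k)^2]$ into the squared \emph{conditional mean} plus the \emph{conditional variance} $\mathrm{Var}[Y_k\mid\mathcal{H}_{k-1}]$, and the latter is bounded by an absolute constant of order $B$ uniformly in $\gamma_k$, precisely because the $\gamma_k$-dependent piece $-\tfrac32\gamma_k^2$ is deterministic given the history and drops out of the variance. The potentially huge squared-mean overshoot (when $\eta_k|\overline{g}_0(\gamma_k)|$ exceeds $\gamma_k-\gamma^\star$) is then tamed by the truncation, made quantitative through the unused rate $\chi_k=(-(\gamma_k+\eta_kY_k))^+$ with $(\gamma_k+\eta_kY_k)\chi_k=-\chi_k^2$ (Cases 2(a)/2(b)); this is the heavy-traffic device announced in the introduction. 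To repair your proof you would either have to import this conditional-mean/variance decomposition plus the unused-rate bookkeeping, or strengthen the delay-moment assumption and redo the higher-order drift including the $\eta_k^2\gamma_k^{2p+2}$ term with an explicit treatment of the $\gamma_k\gtrsim 1/\eta_k$ region where the projection, not the slope of $\overline{g}_0$, provides the contraction.
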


One challenge in the proof of Theorem \ref{thm:rate-converge} is that $\gamma_k$ is unbounded and the second moment of $Y_k$ is unbounded. We notice that $Y_k$ could become very large when $\gamma_k$ is much larger than the true value $\gamma^\star$, but the truncation of $(\gamma_k+\eta_k Y_k)^+$ to non-negative part actually prevents the actual update $|(\gamma_k+\eta_k Y_k)^+-\gamma_k|$ from becoming too large. Based on this observation, we adopt a method from the heavy-traffic analysis by introducing the unused rate $\chi_k:=(-(\gamma_k+\eta_kY_k))^+$, then prove that the variance of the amount of the actual updating $(\eta_kY_k+\chi_k)$ is finite. Detailed proofs are provided in Appendix~\ref{pf:rate-converge}.

\begin{theorem}\label{thm:mse-as}
The average MSE under policy $\pi_{\mathsf{online}}$ converges to $\overline{\mathcal{E}}_{\pi^\star}$ almost surely, i.e.,
\begin{equation}
\limsup_{k\rightarrow\infty}\frac{\int_{t=0}^{S_{k+1}}(X_t-\hat{X}_t)^2{\rm{d}}t}{S_{k+1}}=\overline{\mathcal{E}}_{\pi^\star}, \hspace{0.3cm}\text{w.p.1}.\label{eq:thm-1-}
\end{equation} 
\end{theorem}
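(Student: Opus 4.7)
The plan is to reduce the almost sure convergence of the time-averaged MSE to three ingredients: the a.s.\ convergence $\gamma_k\to\gamma^\star$ from Theorem~\ref{thm:dep-converge}, a Cesaro argument on conditional means, and an $L^2$ strong law of large numbers (SLLN) for martingale differences. Since $f_{\mathsf{max}}=\infty$ forces $U_k\le 0$ pathwise, $\nu_k\equiv 0$, so the threshold in frame $k$ is $\tau_k=\sqrt{3\gamma_k}$, which is measurable with respect to the history $\mathcal{F}_{k-1}$ up to the start of frame $k$. I would decompose the cumulative error as $\int_0^{S_{K+1}}(X_t-\hat{X}_t)^2\,\mathrm{d}t=\sum_{k=1}^K Q_k$ with $Q_k:=\int_{S_k}^{S_{k+1}}(X_t-\hat{X}_t)^2\,\mathrm{d}t$, and invoke the per-frame Wiener identity underlying the reformulation in Lemma~\ref{coro:sig-dep-reformulate}:
\[\mathbb{E}[Q_k\mid\mathcal{F}_{k-1}]=\overline{D}\cdot\mathbb{E}[L_k\mid\mathcal{F}_{k-1}]+\tfrac{1}{6}\mathbb{E}[(X_{S_{k+1}}-X_{S_k})^4\mid\mathcal{F}_{k-1}].\]

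By Lemma~\ref{lemma:cond-l}, the two conditional expectations equal the continuous deterministic functions $h_1(\tau):=\mathbb{E}[\max\{\tau^2,Z_D^2\}]$ and $h_2(\tau):=\tfrac{1}{6}\mathbb{E}[\max\{\tau^2,Z_D^2\}^2]$ evaluated at $\tau_k$. Since $\tau_k\to\tau^\star:=\sqrt{3\gamma^\star}$ a.s., the continuous mapping theorem combined with Cesaro averaging yields $\frac{1}{K}\sum_{k=1}^K h_j(\tau_k)\xrightarrow{\text{a.s.}}h_j(\tau^\star)$ for $j=1,2$. Writing each $L_k$ and $\tfrac{1}{6}(X_{S_{k+1}}-X_{S_k})^4$ as its conditional mean plus a martingale difference, an $L^2$ martingale SLLN (Chow's theorem) forces the Cesaro averages of those differences to vanish a.s., provided their conditional second moments are controlled. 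Plugging these limits into $\frac{\sum_k Q_k}{S_{K+1}}=\frac{\sum_k Q_k}{\sum_k L_k}$ then delivers $\overline{D}+\frac{h_2(\tau^\star)}{h_1(\tau^\star)}=\overline{\mathcal{E}}_{\pi^\star}$, which is exactly \eqref{eq:thm-1-}.

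The main obstacle will be the martingale SLLN step, because $\tau_k$ is random and is not a priori bounded by a deterministic constant. The relevant conditional variances scale like $\tau_k^4+\mathbb{E}[Z_D^4]$ and $\tau_k^8+\mathbb{E}[Z_D^8]$; Assumption~1 together with the identity $\mathbb{E}[Z_D^{2p}]=(2p-1)!!\,\mathbb{E}[D^p]$ makes the $Z_D$ terms finite, but the $\tau_k$-powers require pathwise control. I would handle this by combining the a.s.\ convergence $\gamma_k\to\gamma^\star$ with a truncation onto the event $\{\tau_k\le C\}$ for a large constant $C$, using the $L^2$ estimate $\mathbb{E}[(\gamma_k-\gamma^\star)^2]=\mathcal{O}(\eta_k)$ from Theorem~\ref{thm:rate-converge} and a Borel--Cantelli bound on the complementary tail event. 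Finally, the nonvanishing denominator $S_{K+1}/K\to h_1(\tau^\star)>0$ ensures the ratio limit is well defined and equals $\overline{\mathcal{E}}_{\pi^\star}$, closing the argument.
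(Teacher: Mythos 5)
Your route is sound and genuinely different from the paper's. The paper does not argue via a direct law of large numbers: it defines the running deviation $\theta_k=\frac{1}{k}\sum_{k'\le k}\bigl(E_{k'}-(\gamma^\star+\overline{D})L_{k'}\bigr)$, views it as a stochastic-approximation iterate with step size $1/k$, and invokes the perturbed ODE method of Kushner--Yin (limit ODE $\dot\theta=-\theta$), which requires verifying a list of conditions on martingale-noise and bias terms; the bias terms coming from $\gamma_k\neq\gamma^\star$ are controlled using the $L^2$ rate of Theorem~\ref{thm:rate-converge}. Your plan replaces that machinery with a Cesaro argument on the conditional means $h_1(\tau_k),h_2(\tau_k)$ (using only the almost-sure convergence $\gamma_k\to\gamma^\star$ from Theorem~\ref{thm:dep-converge} and continuity of $l,q$) plus a martingale SLLN for the fluctuations, and closes with $q(\gamma^\star)/l(\gamma^\star)=\gamma^\star$ so that $\overline{D}+h_2(\tau^\star)/h_1(\tau^\star)=\overline{\mathcal{E}}_{\pi^\star}$; this is more elementary and avoids verifying the ODE-method hypotheses, at the cost of having to do the truncation/second-moment bookkeeping yourself (the paper's Corollary~\ref{coro:squareub} and Lemma~\ref{lemma:4order} are exactly the estimates you would reuse).

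Two points in your outline need repair, though neither is fatal. First, your per-frame identity is misstated: conditioned on $\mathcal{F}_{k-1}$ the delay portion of $Q_k$ gives $\overline{D}\,(X_{S_k}-X_{S_{k-1}})^2$, where $(X_{S_k}-X_{S_{k-1}})^2=\max\{3\gamma_{k-1},\delta X_{k-1}^2\}$ is $\mathcal{F}_{k-1}$-measurable and is \emph{not} $\mathbb{E}[L_k\mid\mathcal{F}_{k-1}]$; you need either an index shift (its conditional mean given $\mathcal{F}_{k-2}$ is $h_1(\tau_{k-1})$, and $\tau_{k-1}\to\tau^\star$ as well, so the Cesaro limit is unchanged) or one extra layer of martingale-difference decomposition — this is precisely the role of the bias/noise terms $\beta_{k,1},\beta_{k,2}$ in the paper's proof. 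Second, your Borel--Cantelli step for the truncation does not work as described: Theorem~\ref{thm:rate-converge} only gives $\mathbb{E}[(\gamma_k-\gamma^\star)^2]=\mathcal{O}(k^{-\alpha})$ with $\alpha\le 1$, which is not summable, so Chebyshev plus Borel--Cantelli cannot show $\{\tau_k>C\}$ occurs finitely often. You do not need it: almost-sure convergence of $\gamma_k$ already implies that for any $C>\tau^\star$ the event $\{\tau_k>C\}$ occurs only finitely often with probability one, so you can couple with the truncated sequence $\min\{\gamma_k,C\}$, whose per-frame conditional variances are bounded by constants depending on $C$ and $B$, and apply the martingale SLLN there. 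With these two adjustments your argument goes through and yields \eqref{eq:thm-1-}.
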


With the mean-square convergence of $\gamma_k$, the proof of Theorem~\ref{thm:mse-as} is a direct application of the perturbed ODE method \cite{Kushner2003} and is provided in Appendix~\ref{pf:mse-rate} of the supplementary material.

By using Theorem~\ref{thm:rate-converge} and Theorem~\ref{thm:mse-as}, we can upper bound the growth rate of the cumulative MSE optimality gap in the following corollary:
\begin{corollary}\label{thm:mse-rate}
If $\alpha=1$, then the growth rate of the cumulative MSE optimality gap up to the $k$-th sample can be bounded as follows:
\begin{equation}
\left(\mathbb{E}\left[\int_0^{S_{k+1}}(X_t-\hat{X}_t)^2{\rm d}t\right]-\overline{\mathcal{E}}_{\pi^\star}\mathbb{E}[S_k]\right)=\mathcal{O}\left(\ln k\right). 
\end{equation}
\end{corollary}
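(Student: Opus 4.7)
The plan is to write the cumulative MSE regret as a telescoping sum of per-frame regret contributions $h(\gamma_j)$ with $h(\gamma^\star)=0$, bound each contribution quadratically in $|\gamma_j-\gamma^\star|$, and feed in the mean-square rate from Theorem~\ref{thm:rate-converge}. First, I would redo the It\^o/Wald computation underlying Lemma~\ref{coro:sig-dep-reformulate} frame by frame, conditioning on the $\sigma$-algebra generated by all data before frame $j$ (which determines $\gamma_j$ and hence the threshold $\sqrt{3\gamma_j}$). Because the Wiener increment on frame $j$ and the fresh service time $D_j$ are independent of this $\sigma$-algebra, Lemma~\ref{lemma:cond-l} applies conditionally. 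After telescoping and absorbing the first/last-frame boundary contributions into an $\mathcal O(1)$ remainder, I expect to obtain
\begin{equation*}
\mathbb E\!\left[\int_0^{S_{k+1}}\!(X_t-\hat X_t)^2\,\mathrm d t\right] - \overline{\mathcal E}_{\pi^\star}\mathbb E[S_k] = \sum_{j=1}^{k}\mathbb E[h(\gamma_j)] + \mathcal O(1),
\end{equation*}
where $h(\gamma):=\tfrac{1}{6}\mathbb E[\max\{3\gamma,Z_D^2\}^2] - \gamma^\star\mathbb E[\max\{3\gamma,Z_D^2\}]$, and $h(\gamma^\star)=\overline g_0(\gamma^\star)=0$ by the necessary condition \eqref{eq:equation-offline}.

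Next I would establish a global quadratic bound on $h$. Differentiating the maxima under the expectation gives
\begin{equation*}
h'(\gamma) = 3(\gamma-\gamma^\star)\,\mathbb P(Z_D^2 < 3\gamma),
\end{equation*}
so $h'(\gamma^\star)=0$ and $|h'(\gamma)|\le 3|\gamma-\gamma^\star|$. Integrating from $\gamma^\star$ to $\gamma$ yields the pointwise bound $0\le h(\gamma)\le \tfrac{3}{2}(\gamma-\gamma^\star)^2$ for every $\gamma\ge 0$. Combining this with the $\mathcal O(1/j)$ mean-square rate from Theorem~\ref{thm:rate-converge} (active because $\alpha=1$ forces $\eta_j\propto 1/j$),
\begin{equation*}
\sum_{j=1}^{k}\mathbb E[h(\gamma_j)] \le \tfrac{3}{2}\sum_{j=1}^{k}\mathbb E[(\gamma_j-\gamma^\star)^2] = \mathcal O\!\left(\sum_{j=1}^{k}\tfrac{1}{j}\right) = \mathcal O(\ln k),
\end{equation*}
which together with the $\mathcal O(1)$ boundary contribution gives the claim.

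The main obstacle is the first step. Lemma~\ref{coro:sig-dep-reformulate} is stated for stationary policies, whereas here the threshold $\sqrt{3\gamma_j}$ changes from frame to frame, and the integral over $[S_j,R_j]$ couples the $j$-th frame to the $(j-1)$-st sample via the cross contribution $\int_{S_j}^{R_j}(X_t-X_{S_{j-1}})^2\,\mathrm d t$. I would handle it by expanding this integral in terms of the Wiener increment past $S_j$ and $\Delta_{j-1}:=X_{S_j}-X_{S_{j-1}}$, then using independence of $D_j$ from the pre-$S_j$ history together with Wald's second identity $\mathbb E[\Delta_{j-1}^2\mid \gamma_{j-1}] = \mathbb E[S_j-S_{j-1}\mid \gamma_{j-1}]$ to show that all such cross contributions telescope to exactly $\overline D\,\mathbb E[S_k]+\mathcal O(1)$. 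Once this per-frame decomposition is in hand, the quadratic bound on $h$ and the harmonic-series sum are routine.
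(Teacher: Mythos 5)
Your proposal is correct and follows essentially the same route as the paper's proof: the same per-frame decomposition via Wald's identity and conditional application of Lemma~\ref{lemma:cond-l}, giving the regret $\sum_{j=1}^{k}\mathbb{E}\left[q(\gamma_j)-\gamma^\star l(\gamma_j)\right]$ up to an $\mathcal{O}(1)$ boundary term, followed by a quadratic bound of this per-frame gap by a constant times $(\gamma_j-\gamma^\star)^2$ and the harmonic sum from Theorem~\ref{thm:rate-converge}. The only minor difference is how the quadratic bound is obtained: you differentiate $h(\gamma)=q(\gamma)-\gamma^\star l(\gamma)$ directly to get $|h'(\gamma)|\le 3|\gamma-\gamma^\star|$, whereas the paper invokes the optimality of the threshold $\gamma_j$ (so $q(\gamma_j)-\gamma_j l(\gamma_j)\le q(\gamma^\star)-\gamma_j l(\gamma^\star)$), the condition $q(\gamma^\star)=\gamma^\star l(\gamma^\star)$, and the Lipschitz bound $|l(\gamma)-l(\gamma^\star)|\le 3|\gamma-\gamma^\star|$; both yield the same order.
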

The proof of Corollary~\ref{thm:mse-rate} is provided in Appendix~\ref{pf:sig-dep-reformulate} of the supplementary material.

\begin{theorem}\label{thm:converse}
For any distribution $\mathbb{P}$, let $\pi^\star(\mathbb{P})$ denote the MSE minimum sampling policy when the delay $D\sim\mathbb{P}$. The threshold obtained by solving equation \eqref{eq:equation-offline} is denoted by $\gamma^\star(\mathbb{P})$. After $k$-samples are taken, the minimax estimation error $\gamma^\star(\mathbb{P})$ is lower bounded by:
\begin{equation}
\inf_{\hat{\gamma}}\sup_{\mathbb{P}}\mathbb{E}\left[(\hat{\gamma}-\gamma^\star(\mathbb{P}))^2\right]=\Omega(1/k). \label{eq:converse-est}
\end{equation}

	Let $p_w(\mathbb{P}):={\rm Pr}(Z_D^2\leq 3\gamma^\star(\mathbb{P})|D\sim\mathbb{P})$ denote the probability of waiting by using policy $\pi^\star(\mathbb{P})$.% and let $\mathcal{P}_u(\mu):=\{\mathbb{P}|p_w(\mathbb{P})\geq\mu \}$. 
Specifically, let $p_{\rm w, \rm{uni}}^\star:={\rm{Pr}}(Z_D^2\leq 3\gamma^\star_{{\rm uni}}|D\sim{\rm Uni}([0, 1]))$. Let $\Pi_h$ denote the set of policies which the sampling decision $S_k$ is made based on historical information $\mathcal{H}_{k-1}$. We have the following result: \begin{align}
	&\inf_{\pi\in\Pi_h}\sup_{\mathbb{P}}\left(\mathbb{E}\left[\int_0^{S_{k+1}}(X_t-\hat{X}_t)^2{\rm d}t\right]-\overline{\mathcal{E}}_{\pi^\star(\mathbb{P})}\mathbb{E}[S_{k+1}]\right)\nonumber\\
	\geq&\frac{1}{4}\left(\frac{1}{24}(1-\delta)\delta p_{\text{w,uni}}^\star\right)^2\times\left(\sum_{k'=1}^k\frac{1}{k'}\right)=\Omega\left(\ln k\right). \label{eq:regconverse}
\end{align}
\end{theorem}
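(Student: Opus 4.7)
The plan is to prove both parts of Theorem~\ref{thm:converse} via Le Cam's two-point method, combined with a local quadratic expansion of the per-frame MSE around the optimal threshold parameter $\gamma^\star(\mathbb{P})$. Throughout, I would work with two delay laws on $[0,1]$: the base $\mathbb{P}_0=\mathrm{Uni}([0,1])$ and a one-parameter perturbation $\mathbb{P}_\epsilon$ with density $1+\epsilon h(d)$ for a bounded mean-zero $h$. Because $\partial_\gamma\overline{g}_{0}(\gamma;\mathbb{P}_0)\big|_{\gamma^\star}=-\mathbb{E}[\max\{3\gamma^\star,Z_D^2\}]\neq 0$, the implicit function theorem shows that $\mathbb{P}\mapsto\gamma^\star(\mathbb{P})$ is differentiable, and $h$ can be chosen so that $|\gamma^\star(\mathbb{P}_\epsilon)-\gamma^\star(\mathbb{P}_0)|=\Theta(\epsilon)$. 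The hypothesis $\mu\le p_{w,\mathrm{uni}}^\star/2$ together with continuity of $p_w$ in $\mathbb{P}$ guarantees $\mathbb{P}_\epsilon\in\mathcal{P}_u(\mu)$ for all sufficiently small $\epsilon$.

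For the estimation bound \eqref{eq:converse-est}, the observed history is a causal function of $k$ i.i.d.\ delay samples and Wiener increments whose law does not depend on $\mathbb{P}$. By tensorization of KL and the data-processing inequality, $\mathrm{KL}(\mathbb{P}_0^{\otimes k}\|\mathbb{P}_\epsilon^{\otimes k})=k\cdot\Theta(\epsilon^2)$. Setting $\epsilon=c/\sqrt{k}$ keeps this $O(1)$, and Le Cam's inequality yields
\[
\inf_{\hat\gamma}\sup_{\mathbb{P}}\mathbb{E}\bigl[(\hat\gamma-\gamma^\star(\mathbb{P}))^2\bigr]\ \ge\ \tfrac{1}{4}\bigl(\gamma^\star(\mathbb{P}_\epsilon)-\gamma^\star(\mathbb{P}_0)\bigr)^2\bigl(1-\mathrm{TV}\bigr)=\Omega(1/k).
\]

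For the regret bound \eqref{eq:regconverse}, I would first derive a local quadratic lower bound on the per-unit-time MSE. Differentiating $\overline{\mathcal{E}}(\gamma;\mathbb{P})=\overline D+\mathbb{E}[\max\{3\gamma,Z_D^2\}^2/6]/\mathbb{E}[\max\{3\gamma,Z_D^2\}]$ twice at $\gamma^\star(\mathbb{P})$, where the first derivative vanishes by Condition~1, produces curvature of order $p_w(\mathbb{P})/\mathbb{E}[\max\{3\gamma^\star,Z_D^2\}]$, hence $\overline{\mathcal{E}}(\gamma;\mathbb{P})-\overline{\mathcal{E}}(\gamma^\star;\mathbb{P})\ge c\,p_w(\mathbb{P})(\gamma-\gamma^\star(\mathbb{P}))^2$ locally. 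Since the within-frame optimal continuation is a threshold rule (cf.\ the discussion after \eqref{eq:opt-dep}), any $\pi\in\Pi_h$ may, without loss of generality for a lower bound, be identified in frame $j$ with a causal threshold estimator $\hat\gamma_j\in\sigma(\mathcal{H}_{j-1})$, so that the expected frame-$j$ regret is at least $c'\mu\,\mathbb{E}[(\hat\gamma_j-\gamma^\star(\mathbb{P}))^2]$ uniformly over $\mathbb{P}\in\mathcal{P}_u(\mu)$. Applying Part~1 to $\hat\gamma_j$, which uses at most $j-1$ i.i.d.\ delay samples, gives $\sup_{\mathbb{P}\in\mathcal{P}_u(\mu)}\mathbb{E}[(\hat\gamma_j-\gamma^\star(\mathbb{P}))^2]=\Omega(1/j)$, and summing over $j=1,\dots,k$ yields $\sum_{j=1}^k c'\mu\cdot\Omega(1/j)=\tfrac{1}{2}\mu\cdot\Omega(\ln k)$.

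The hardest step will be the second-part reduction: justifying that for an arbitrary causal $\pi\in\Pi_h$ the within-frame behavior can be replaced, for the purposes of a regret lower bound, by an effective threshold rule with a causal random threshold $\hat\gamma_j$, and carefully tracking constants in the quadratic expansion so that the curvature depends linearly on $p_w(\mathbb{P})$ (which is what yields the uniform $\mu$ factor after restricting to $\mathcal{P}_u(\mu)$). The remaining ingredients---constructing the two-point family, verifying $\mathbb{P}_\epsilon\in\mathcal{P}_u(\mu)$, and tensorizing KL---are relatively routine.
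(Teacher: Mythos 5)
Your Part 1 is essentially the paper's argument: a two-point Le Cam construction with the uniform law on $[0,1]$ and an $O(1/\sqrt{k})$-size density perturbation, with the induced shift of $\gamma^\star$ quantified by a first-order (implicit-function/Taylor) argument and the KL divergence tensorized and kept $O(1)$. The paper is more explicit where you are schematic: it proves quantitatively that the perturbation moves the root, via $h_2(\gamma_1^\star)\geq\frac{c\delta(1-\delta)}{12\sqrt{k}}$ together with $|h_2'(\gamma)|\leq 2$, and it verifies that the perturbed law remains in $\mathcal{P}_u(\mu)$; your claim that ``$h$ can be chosen so that $|\gamma^\star(\mathbb{P}_\epsilon)-\gamma^\star(\mathbb{P}_0)|=\Theta(\epsilon)$'' needs exactly such a non-degeneracy check, but this is fixable and the route is the same.

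Part 2 contains a genuine gap: the reduction of an arbitrary causal policy $\pi\in\Pi_h$ to ``an effective threshold rule with a causal random threshold $\hat\gamma_j$'' is the step that carries the whole argument, you flag it yourself as the hardest step, and you never supply it. It is not clear it can be carried out as stated: within a frame a policy in $\Pi_h$ may use an arbitrary (randomized, signal-dependent, non-threshold) stopping rule, and there is no canonical threshold $\hat\gamma_j$ attached to it whose estimation error lower-bounds the frame regret. The paper avoids this issue entirely by parametrizing each frame by its expected duration $l_\pi=\mathbb{E}[\tau]$, a scalar that is well defined for every stopping rule, and proving the per-frame bound \eqref{eq:regrettoerror}, namely $\mathbb{E}[\int_0^{\tau}Z_t^2\,\mathrm{d}t]-\gamma^\star\mathbb{E}[\tau]\geq\frac{1}{6}p_w(\mathbb{P})(l_\pi-l^\star(\mathbb{P}))^2$; this rests on the constrained stopping problem \eqref{eq:lfixopt}, where threshold rules are optimal among all rules with the same expected length (\cite[Theorem 7]{sun_wiener}), followed by a global case analysis valid for all $l\geq\overline{D}$ — not merely a local quadratic expansion, which matters because an arbitrary policy may sit far from the optimum in some frames. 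Le Cam's method is then applied to the estimation of $l^\star(\mathbb{P})$ (inequality \eqref{eq:converses2}) rather than of $\gamma^\star(\mathbb{P})$, with the policy's causal expected frame length playing the role of the estimator $\hat{l}$, and summing $\Omega(1/j)$ over frames gives the $\frac{1}{2}\mu\cdot\Omega(\ln k)$ bound. To repair your proposal you would either have to prove the threshold-reduction rigorously for arbitrary causal policies, or switch, as the paper does, to the frame-length parametrization.
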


As the transmission delay $\mathbb{P}_D$ considered in the paper does not belong to a specific family and could be quite general, obtaining a point-wise converse bound on $\mathbb{E}[(\hat{\gamma}-\gamma^\star(\mathbb{P}))^2]$ for each distribution $\mathbb{P}$ is impossible. As an alternative, a minimax risk bound $\mathbb{E}[(\hat{\gamma}-\gamma^\star(\mathbb{P}))^2]$ over a general distribution set $\mathcal{P}$ can be obtained using Le Cam's two point method for non-parametric estimation \cite{nonpara}. The core idea is to construct two distributions $\mathbb{P}_1, \mathbb{P}_2$, whose $\ell_1$ distance $|\mathbb{P}_1^{\otimes k}-\mathbb{P}_2^{\otimes k}|_1$ can be upper bounded by a constant, but $(\gamma^\star(\mathbb{P}_1)-\gamma^\star(\mathbb{P}_2))^2\geq\Omega(1/k)$ is difficult to distinguish. Such a construction is still challenging because $\gamma^\star(\mathbb{P})$ cannot be obtained in closed form even for the simpliest distribution families such as the delta distribution or exponential distribution. Notice that the estimation error of $\gamma^\star$ is closely related to the estimation error $\overline{g}_\nu(\cdot)$ at a given point. Therefore, the construction of $\mathbb{P}_1$ and $\mathbb{P}_2$ for obtaining the converse bound of Hölder smooth functions \cite[Chapter 2]{nonpara} are adopted. The proof of inequality \eqref{eq:regconverse} is a direct application of the minimax estimation error \eqref{eq:converse-est}. Detailed proof of Theorem~\ref{thm:converse} is provided in Appendix \ref{pf:converse}.

	\section{Simulation Results}
	In this section, we provide simulation results to verify the theoretic findings and illustrate the performance of our proposed algorithms. We notice that the MSE minimization problem is closely related to the AoI minimization problem, where the AoI at time $t$, denoted by $A(t)=t-S_{i(t)}$. For signal-ignorant sampling policies (i.e., the sensor cannot always observe the time-varying process), according to the analysis in \cite[Section IV-B]{sun_17_tit}, policies that minimize the average AoI achieves the minimum MSE. Therefore, we choose both offline and online AoI minimization policies ($\pi_{\text{AoI}}^\star$ from \cite{sun_17_tit}, $\pi_{\text{itr}}$ from \cite{chichun-19-isit}) for comparison. To show the convergence of online learning algorithm, we plotted the average MSE performance of the optimum off-line algorithm $\pi^\star$ from \cite{sun_wiener}. 
	
	%	In this section, we simulate the remote estimation performance achieved by  the following sampling policies:
	%	\begin{itemize}
		%		\item[1.] Zero-wait sampling $\pi_{\text{zw}}$ that takes a new sample immediately after the ACK of the last sample is received, i.e., $W_k\equiv 0$. 
		%		\item[2.] Age-optimal sampling $\pi_{\text{AoI}}^\star$ from \cite{sun_17_tit}. The updating threshold $\tau_{\text{AoI}}^\star$ is computed offline, and the waiting time $W_k=(\tau_{\text{AoI}}^\star-D_k)^+$. 
		%		\item[3.] The online sampling policy $\pi_{\text{itr}}$ that minimizes the average AoI \cite{chichun-19-isit} when $\mathbb{P}_D$ is unknown and no sampling frequency constraint exists.
		%		\item[4.] The MSE minimum sampling policy $\pi^\star$ when $\mathbb{P}_D$ is known \cite{sun_wiener} a priori. 
		%		\item[5.] When the sampling constraint exists, the constant wait policy $\pi_{\text{const}}$ that selects $W_k\equiv\frac{1}{f_{\text{max}}}$. 
		%		\item[6.] The proposed online sampling policy $\pi_{\text{online}}$ in Section 3.2. 
		%	\end{itemize}
	The transmission delay follows the log-normal distribution parameterized by $\mu$ and $\sigma$ such that the density function of the probability measure $\mathbb{P}_D$ is:
	\[p(x):=\frac{\mathbb{P}_D(\text{d}x)}{\text{d}x}=\frac{1}{\sigma\sqrt{2\pi}}\exp\left(-\frac{(\ln x-\mu)^2}{2\sigma^2}\right). \]
	In simulations, we set $\mu=0.8$ and $\sigma=1.2$, the expected time-averaged MSE is computed by taking the average of 20 runs. Fig.~\ref{fig:mse-frame} depicts the time-average MSE performance up to the $k$-th frame of different sampling policies. The evolution of $\{\gamma_k\}$ and the MSE regret $\mathbb{E}\left[\int_0^{S_{k+1}}(X_t-\hat{X}_t)^2\text{d}t\right]-\overline{E}_{\pi^\star}\mathbb{E}[S_{k+1}]$ are depicted in Fig.~\ref{fig:gap}. From Fig.~\ref{fig:mse-frame}, with $5\times 10^4$ samples, the time averaged MSE is almost the same as using the optimum policy. From Fig.~\ref{fig:reg}, the MSE regret is almost a logarithm function of frame $k$. The asymptotic MSE behaviour is consistent with the convergence results in Theorem~\ref{thm:mse-as} and Corollary~\ref{thm:mse-rate}. 
	\begin{figure}[h]
		\centering
		\includegraphics[width=.5\textwidth]{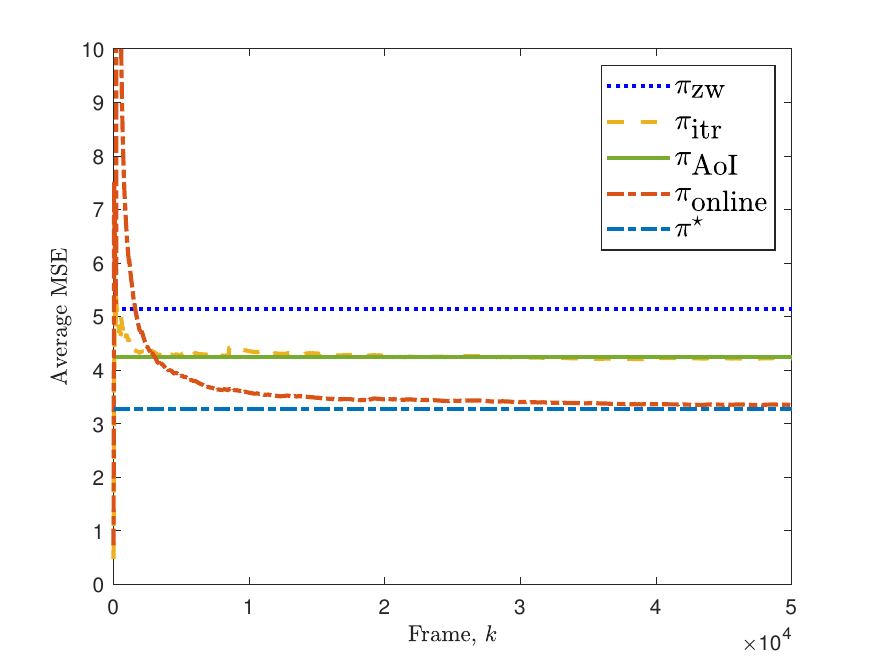}
		\caption{The time average MSE evolution as a function of frame $k$. }
		\label{fig:mse-frame}
	\end{figure}
	
	\begin{figure}[h]
		\centering
		\includegraphics[width=.5\textwidth]{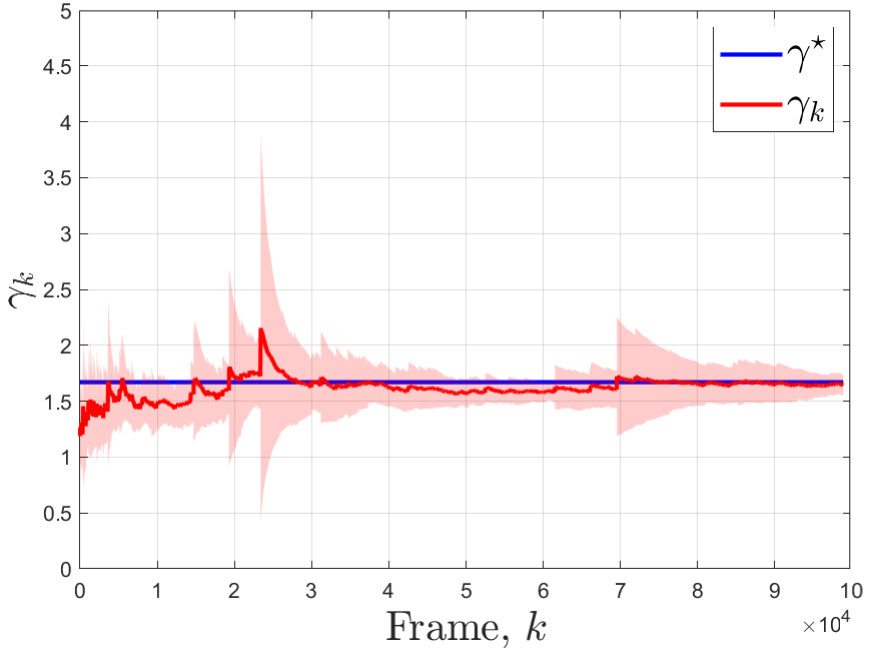}
		\caption{The evolution of the threshold estimate $\gamma_k$. }
		\label{fig:gap}
	\end{figure}
	\begin{figure}[h]
		\centering
		\includegraphics[width=.5\textwidth]{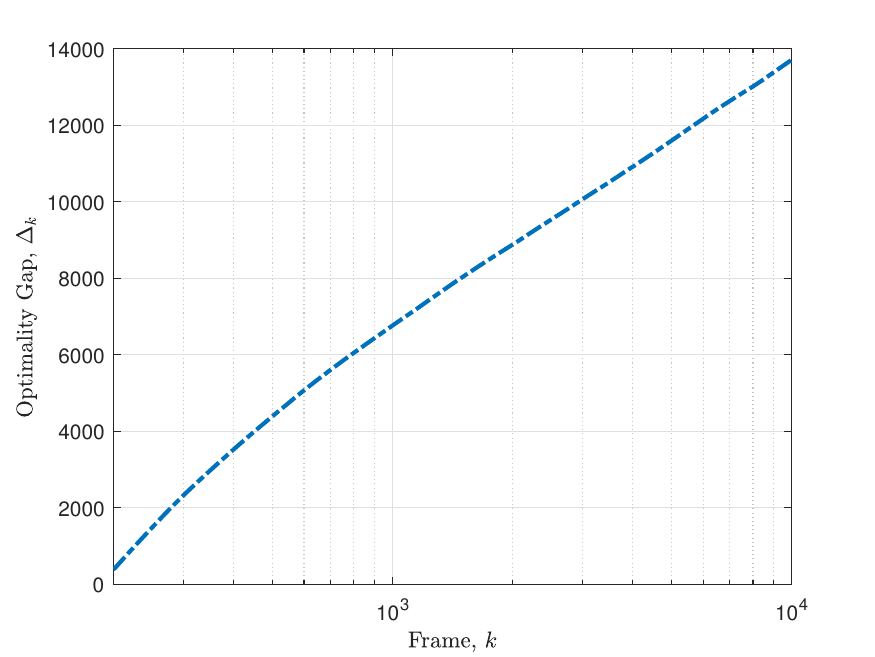}
		\caption{the MSE regret $\Delta_k:=\mathbb{E}\left[\int_0^{S_{k+1}}(X_t-\hat{X}_t)^2\text{d}t\right]-\overline{\mathcal{E}}_{\pi^\star}\mathbb{E}[S_{k+1}]$ (right)}
		\label{fig:reg}
	\end{figure}
	
	When there is a sampling frequency constraint, the average MSE and the average sampling interval achieved by policy $\pi_{\text{online}}$ are depicted in Fig.~\ref{fig:cons} and Fig.~\ref{fig:inte}, respectively. We set $f_{\text{max}}=\frac{1}{10\overline{D}}$. From these figures, one can observe that the average MSE of $\pi_{\text{online}}$ is close to the optimum MSE $\overline{\mathcal{E}}_{\pi^\star}$ and the sampling frequency can be satisfied. In addition, by choosing  a larger $V$, a smaller MSE performance can be achieved, whereas a larger number of iterations are needed to meet the sampling frequency constraint. 
	
	\begin{figure}[h]
		\centering
		\includegraphics[width=.5\textwidth]{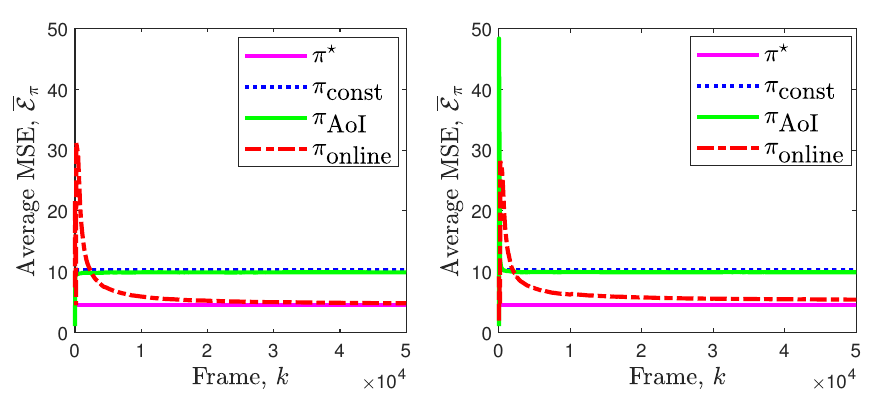}
		\caption{The time average MSE evolution as a function of frame $k$. (Left: $V=10$, Right: $V=1$. ) }
		\label{fig:cons}
	\end{figure}
	
	\begin{figure}[h]
		\centering
		\includegraphics[width=.5\textwidth]{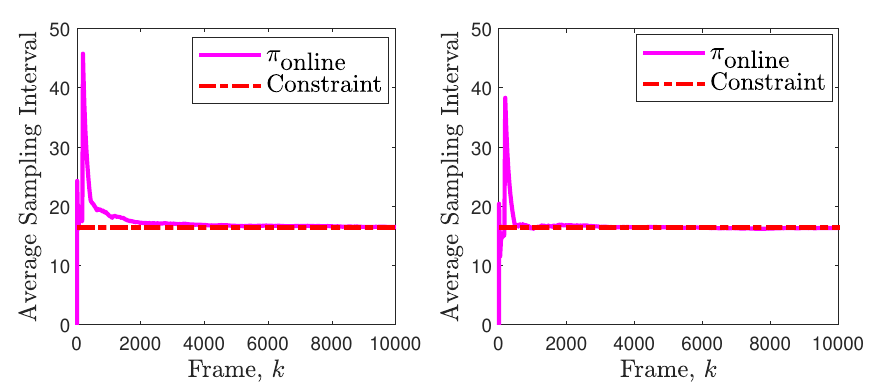}
		\caption{The average sampling interval under different constant $V$. (Left: $V=10$, Right: $V=1$. )}
		\label{fig:inte}
	\end{figure}

		\section{Conclusions}

In this work, we studied the problem of sampling a Wiener process for remote estimation over a channel with unknown delay statistics. By reformulating the MSE minimization problem as a renewal-reward process, we proposed an online sampling algorithm that can adaptively learn the optimum algorithm as the number of samples grows. We showed that the average MSE obtained by the proposed algorithm converges to the minimum MSE almost surely, and the cumulative MSE has an order of $\mathcal{O}(\ln k)$, where $k$ is the number of samples. We then prove that the cumulative MSE regret of any algorithm is at best $\Omega(\ln k)$. Numerical simulation results validate the convergence behaviors of the proposed algorithm. 
% \section*{Acknowledgment}
% The authors are grateful to Prof. Pengkun Yang for helpful discussions on the proof of Theorem~\ref{thm:converse}. % The work of H. Tang and L. Tassiulas was supported by the NSF CNS-2112562 AI Institute for Edge Computing Leveraging Next
%Generation Networks (Athena) and the ONR N00014-19-1-2566. The work of Y. Sun was supported by the NSF grant CCF-1813078 and the ARO grant W911NF-21-1-0244.
\appendices\label{sec:pf}
%We start with the case when there is no sampling frequeny constraint, i.e., $f_{\mathsf{max}}=\infty$. In such cases, the sampling frequency debt $U_k\equiv 0, \forall k$ and the online sampling policy $\pi_{\mathsf{o, dep}}$ from equation \eqref{eq:o-dep-wait} can be simplified to:
%\begin{equation}
%S_{k+1}=\inf\{\tau\geq R_k|\left|X_\tau-X_{S_k}\right|\geq\sqrt{3\gamma_k}\}. 
%\end{equation}
\section{Notations and Preliminary Lemmas}
In Table~\ref{tab:notations}, we summarize the notations used in the following proofs. Throughout the proofs, we use $N_1, N_2, \cdots$ to denote absolute constants and $C_1(\cdot), C_2(\cdot)$ to denote polynomials with finite order. For ease
of exposition, the specific values and expressions of the constants and functions may vary across different context.
\begin{table}[h]
	\caption{Notations}
	\label{tab:notations}
	\begin{tabular}{|c|p{7cm}|}
		\hline
		Notation&Meaning\\
		\hline
		$Z_t$& a Wiener process staring from time 0\\
		$l_\gamma$ & length of running time using stopping rule $\tau_\gamma:=\inf\{{t\geq D}||Z_t|\geq\sqrt{3\gamma}\}$\\ 
		$\delta X_k$ & $\delta X_k:=X_{S_k+D_k}-X_{S_k}$\\
		$Q_k$ & $Q_k:=\frac{1}{6}\left(X_{S_k+D_k}-X_{S_k}\right)^4$\\
		$L_k$ & $L_k:=S_{k+1}-S_k=D_k+W_k$, frame length $k$\\
		$E_k$ & $E_k:=\int_{S_k}^{S_{k+1}}(X_t-\hat{X}_t)^2\text{d}t$, cumulative estimation error in frame $k$\\
		$q(\gamma)$ & $q(\gamma):=\frac{1}{6}\mathbb{E}\left[\max\{3\gamma_k, Z_D^2\}^2\right]$, the expectation of $Q_k$ when $\gamma_k=\gamma$\\
		$l(\gamma)$ & $l(\gamma):=\mathbb{E}[\max\{3\gamma, Z_D^2\}]$, expected frame length $L_k$ when $\gamma_k=\gamma$\\
		$\mathcal{I}_k$ & $(D_k, (X_{t}-X_{S_k})_{S_k\leq t<S_{k+1}}$, information in frame $k$\\
		$\mathcal{H}_k$ & $\mathcal{H}_k:=\{\mathcal{I}_{\kappa}\}_{\kappa\leq k}$ historical information up to the end of frame $k$\\
		$\mathbb{E}_k[\cdot]$ & Conditional expectation  $\mathbb{E}[\cdot|\mathcal{H}_{k-1}]$\\
		$t_k$ & $t_k:=\sum_{i=1}^k\eta_k$ or $t_k:=\sum_{i=1}^k\epsilon_k$, the cumulative step-sizes depending on the context\\
		$m(t)$ & $m(t)$ is the unique $k$ so that $t_k\leq t\leq t_{k+1}$\\
		\hline
	\end{tabular}
\end{table}

\begin{lemma}\label{coro:gammadep-bound}
	Let $M:=\mathbb{E}[D^2]$, the optimum ratio $\gamma^\star$ is upper and lower bounded by:
	\begin{equation}
		\frac{1}{6}\overline{D}\leq \gamma^\star\leq\frac{1}{2}\frac{M+2\overline{D}\frac{1}{f_{\mathsf{max}}}+\frac{1}{f_{\mathsf{max}}^2}}{\overline{D}+\frac{1}{f_{\mathsf{max}}}}. 
	\end{equation}
\end{lemma}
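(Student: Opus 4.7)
The lemma gives a two-sided bound on the optimal cost-to-length ratio $\gamma^{\star}$ of the MDP in Problem~\ref{pb:rr}. My plan is to handle the two directions separately, using the threshold-optimal characterization of $\pi^{\star}$ for the lower bound and a single, explicitly feasible policy for the upper bound. Both directions will reduce to one moment identity, so I would not bother with any variational manipulation of the Lagrangian.

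For the lower bound I would use that the optimal policy $\pi^{\star}$ has threshold structure with $\tau^{\star}=\sqrt{3(\gamma^{\star}+\nu^{\star})}$, so that Lemma~\ref{lemma:cond-l} (together with the equality case of \eqref{eq:theta}) identifies
\begin{equation*}
\gamma^{\star}\;=\;\frac{\tfrac{1}{6}\,\mathbb{E}\!\left[\max\{\tau^{\star 2},Z_D^{2}\}^{2}\right]}{\mathbb{E}\!\left[\max\{\tau^{\star 2},Z_D^{2}\}\right]}.
\end{equation*}
Applying Jensen's inequality to the convex map $x\mapsto x^{2}$ in the numerator gives $\mathbb{E}[\max\{\tau^{\star 2},Z_D^{2}\}^{2}]\geq(\mathbb{E}[\max\{\tau^{\star 2},Z_D^{2}\}])^{2}$, so the ratio collapses to $\tfrac{1}{6}\mathbb{E}[\max\{\tau^{\star 2},Z_D^{2}\}]$. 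Bounding the maximum from below by $Z_D^{2}$ and using the quadratic-variation identity $\mathbb{E}[Z_D^{2}]=\mathbb{E}[\mathbb{E}[Z_D^{2}\mid D]]=\overline{D}$ then delivers $\gamma^{\star}\geq\overline{D}/6$.

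For the upper bound I would instantiate the MDP objective at the deterministic constant-wait rule $W_k\equiv 1/f_{\mathsf{max}}$, which is a degenerate element of $\Pi_{\mathsf{recent}}$ and is feasible because $\mathbb{E}[L_k]=\overline{D}+1/f_{\mathsf{max}}\geq 1/f_{\mathsf{max}}$. Since the waiting time is independent of the source path, conditioning on $L_k$ gives $X_{S_{k+1}}-X_{S_k}\sim\mathcal{N}(0,L_k)$, so $\mathbb{E}[(X_{S_{k+1}}-X_{S_k})^{4}\mid L_k]=3L_k^{2}$. Expanding $\mathbb{E}[L_k^{2}]=M+2\overline{D}/f_{\mathsf{max}}+1/f_{\mathsf{max}}^{2}$ and substituting into \eqref{eq:sig-dep-rr-goal} produces exactly $\tfrac{1}{2}(M+2\overline{D}/f_{\mathsf{max}}+1/f_{\mathsf{max}}^{2})/(\overline{D}+1/f_{\mathsf{max}})$, and the optimality of $\pi^{\star}$ yields the stated upper bound.

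I do not anticipate a real obstacle, as each direction is a single moment computation once the right object is picked. The only subtle choices are applying Jensen only to the numerator in the lower-bound step, which collapses the ratio without requiring any information about $\tau^{\star}$, and opting for a constant-wait rule rather than a nontrivial threshold in the upper-bound step so that the fourth-moment calculation stays in closed form while the sampling-frequency constraint is automatically satisfied.
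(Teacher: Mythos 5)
Your proposal matches the paper's proof in both directions: the upper bound via the feasible constant-wait policy $W_k\equiv 1/f_{\mathsf{max}}$ with the Gaussian fourth-moment computation $\mathbb{E}[(X_{S_{k+1}}-X_{S_k})^4\mid L_k]=3L_k^2$, and the lower bound by writing $\gamma^\star$ as the ratio $\tfrac{1}{6}\mathbb{E}[\max\{\tau^{\star 2},Z_D^2\}^2]/\mathbb{E}[\max\{\tau^{\star 2},Z_D^2\}]$ and applying Jensen (the paper invokes Cauchy--Schwarz, which is the same inequality here) followed by $\max\{\tau^{\star 2},Z_D^2\}\geq Z_D^2$ and $\mathbb{E}[Z_D^2]=\overline{D}$. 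The argument is correct and essentially identical to the paper's.
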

The proof is provided in Appendix~\ref{pf:gammadep-bound}.

\begin{lemma}\label{lemma:4order}
	For threshold $\gamma<\infty$, the first, second and fourth order moments of the stopping time $\tau_\gamma$ are bounded, i.e., 
	\begin{subequations}
		\begin{align}
			&\mathbb{E}[l_\gamma]\leq 3\gamma+\overline{D},\\
			&\mathbb{E}[l_{\gamma}^2]\leq\frac{10}{3}\left((3\gamma)^2+3\sqrt{B}\right), \\ &\mathbb{E}\left[l_{\gamma}^4\right]<4^3\left((3\gamma)^4+105B\right)<\infty. 
		\end{align}
	\end{subequations}
\end{lemma}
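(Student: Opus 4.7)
The plan rests on the observation that, by continuity of Brownian motion and the definition of $\tau_\gamma$, we have the pointwise identity $Z_{\tau_\gamma}^{2}=\max\{3\gamma,\,Z_D^2\}$, hence $Z_{\tau_\gamma}^{2n}\le(3\gamma)^{n}+Z_D^{2n}$ for every $n\ge 1$. Since $Z_D\mid D\sim\mathcal{N}(0,D)$, all even moments of $Z_D$ reduce to moments of $D$:
\[
\mathbb{E}[Z_D^2]=\overline{D},\qquad \mathbb{E}[Z_D^4]=3\,\mathbb{E}[D^2]\le 3\sqrt{B},\qquad \mathbb{E}[Z_D^8]=105\,\mathbb{E}[D^4]\le 105B,
\]
the last two via Cauchy--Schwarz and Assumption~1. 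It then remains to relate moments of $\tau_\gamma$ to moments of $Z_{\tau_\gamma}$, for which I plan to apply optional stopping to the Hermite-type Brownian martingales of orders $2$, $4$, and $8$.

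The first-moment bound is immediate from Lemma~\ref{lemma:cond-l} (equivalently, from Wald's identity $\mathbb{E}[\tau_\gamma]=\mathbb{E}[Z_{\tau_\gamma}^2]$) together with $\max\{a,b\}\le a+b$. For the second moment I would apply optional stopping to the order-$4$ martingale $Z_t^4-6tZ_t^2+3t^2$, obtaining
\[
3\,\mathbb{E}[\tau_\gamma^2]+\mathbb{E}[Z_{\tau_\gamma}^4]\;=\;6\,\mathbb{E}[\tau_\gamma\, Z_{\tau_\gamma}^2],
\]
then applying Cauchy--Schwarz to the right-hand side and solving the resulting quadratic in $\sqrt{\mathbb{E}[\tau_\gamma^2]/\mathbb{E}[Z_{\tau_\gamma}^4]}$ to deduce $\mathbb{E}[\tau_\gamma^2]\le\tfrac{5+2\sqrt{6}}{3}\mathbb{E}[Z_{\tau_\gamma}^4]<\tfrac{10}{3}\mathbb{E}[Z_{\tau_\gamma}^4]$. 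Combining with $\mathbb{E}[Z_{\tau_\gamma}^4]\le(3\gamma)^2+3\sqrt{B}$ closes the second-moment claim.

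For the fourth moment I would iterate the strategy using the degree-$8$ Hermite martingale $Z_t^8-28tZ_t^6+210t^2Z_t^4-420t^3Z_t^2+105t^4$. Optional stopping produces the identity
\[
105\,\mathbb{E}[\tau_\gamma^4]=-\mathbb{E}[Z_{\tau_\gamma}^8]+28\,\mathbb{E}[\tau_\gamma Z_{\tau_\gamma}^6]-210\,\mathbb{E}[\tau_\gamma^2 Z_{\tau_\gamma}^4]+420\,\mathbb{E}[\tau_\gamma^3 Z_{\tau_\gamma}^2],
\]
whose cross-terms I would control by Cauchy--Schwarz (or Young's inequality) against the already-bounded quantities $\mathbb{E}[\tau_\gamma],\mathbb{E}[\tau_\gamma^2]$ and $\mathbb{E}[Z_{\tau_\gamma}^8]\le(3\gamma)^4+105B$, producing a scalar linear inequality in $\mathbb{E}[\tau_\gamma^4]$ that solves to give the stated $4^3\bigl((3\gamma)^4+105B\bigr)$. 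A cleaner alternative is to invoke Burkholder--Davis--Gundy to bound $\mathbb{E}[\tau_\gamma^4]$ by $\mathbb{E}[\sup_{s\le\tau_\gamma}Z_s^8]$, then split the supremum at time $D$ (using $|Z_s|\le\sqrt{3\gamma}$ for $s\in[D,\tau_\gamma]$) and apply Doob's $L^8$ inequality on $[0,D]$.

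The main technical obstacle will be justifying the three optional-stopping steps, since $\tau_\gamma$ is a.s.\ finite but unbounded and the martingales of orders $4$ and $8$ carry large positive powers of $t$. I plan to bootstrap: first establish $\mathbb{E}[\tau_\gamma]<\infty$ from the order-$2$ identity, then stop the order-$4$ martingale at $\tau_\gamma\wedge n$ and pass to the limit $n\to\infty$ by dominated convergence, with domination supplied by $\sup_{s\le\tau_\gamma}|Z_s|\le\sqrt{3\gamma}\vee\sup_{s\le D}|Z_s|$ and Doob's inequality applied on $[0,D]$. Iterating the procedure once more at order $8$ closes the fourth-moment bound; it is at this last step that Assumption~1 ($\mathbb{E}[D^4]\le B$) is essential, as it supplies precisely the uniform integrability needed to take $n\to\infty$ in the degree-$8$ identity.
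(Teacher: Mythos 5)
Your proposal is correct and takes essentially the same route as the paper: optional stopping of the order-$2$, $4$, and $8$ Brownian (Hermite-type) polynomial martingales at the truncated time $\tau_\gamma\wedge T$ (the paper generates exactly these polynomials by differentiating the exponential martingale $\exp(\theta Z_t-\tfrac{\theta^2}{2}t)$ at $\theta=0$), control of $\mathbb{E}[Z_{\tau_\gamma}^{2n}]$ through $Z_{\tau_\gamma}^2=\max\{3\gamma,Z_D^2\}$ and $\mathbb{E}[D^4]\le B$, self-bounding Cauchy--Schwarz/H\"older (resp.\ Young) estimates solved for the desired moment, and finally $T\to\infty$. The only cosmetic difference is that your quadratic-in-$\sqrt{\mathbb{E}[\tau_\gamma^2]/\mathbb{E}[Z_{\tau_\gamma}^4]}$ argument yields the marginally sharper constant $\tfrac{5+2\sqrt{6}}{3}<\tfrac{10}{3}$ for the second moment.
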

The proof of Lemma~\ref{lemma:4order} is provided in Appendix~\ref{pf:4order}.

%Recall that $\mathcal{I}_{k}=(D_k, (X_{S_k+t}-X_{S_k})_{0\leq t\leq S_{k+1}-S_k})$ is the information in frame $k$. Let $\mathcal{H}_k:=\{\mathcal{I}_{\kappa}\}_{\kappa\leq k}$ be the historical transmission records up to frame $k$. Let $\mathbb{E}_k[\cdot]:=\mathbb{E}\left[\cdot|\mathcal{H}_k\right]$ denote the conditional expectation given historical transmission record before frame $k$, i.e., $\mathcal{H}_{k-1}$. We then have the following Lemma:

% Let $l(\gamma):=\mathbb{E}\left[\max\{3\gamma, Z_D^2\}\right]$ denote the expected frame length when threshold $\gamma$ is used. Then we have the following theorem
\begin{lemma}\label{lemma:g}
	Function $\overline{g}_0(\gamma)=q(\gamma)-\gamma l(\gamma)$ and has the following properties:
	\begin{itemize}
		\item[(i)] $\overline{g}_0(\gamma)$ is concave and monotonically decreasing. The second order derivative $-3\leq\overline{g}_0''(\gamma)\leq 0$. 
		
		\item[(ii)] $\overline{g}_0(\gamma^\star)=0$
		
		\item[(iii)] For $\gamma\neq\gamma^\star$, 
		$(\gamma-\gamma^\star)\overline{g}_0(\gamma)\leq-l(\gamma^\star)(\gamma-\gamma^\star)^2\leq 0$. 
	\end{itemize}
\end{lemma}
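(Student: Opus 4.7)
The plan is to prove the three items in the order (ii), (i), (iii), since (ii) is a one-line restatement of earlier work in the paper, (i) reduces to direct computation of two expectations, and (iii) then follows from combining the structural facts established in (i) and (ii).

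For (ii), observe that Section~\ref{sec:dep-analysis} adopts the standing assumption $f_{\max}=\infty$, so the complementary-slackness condition \eqref{eq:cs} forces $\nu^\star=0$. Equation \eqref{eq:equation-offline} then reads $\overline{g}_0(\gamma^\star)=\overline{g}_{\nu^\star}(\gamma^\star)=0$, which is exactly the claim.

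For (i), I would differentiate the expectations defining $q(\gamma)$ and $l(\gamma)$ under the integral sign. Since $Z_D$ has a density (a mixture of centered Gaussians indexed by $D>0$), the kink of $\max\{3\gamma,z^2\}$ at $z^2=3\gamma$ is a null event, and pointwise one has $\partial_\gamma\max\{3\gamma,z^2\}=3\,\mathbb{1}[z^2\leq 3\gamma]$ and $\partial_\gamma\max\{3\gamma,z^2\}^2=18\gamma\,\mathbb{1}[z^2\leq 3\gamma]$. Interchange of derivative and expectation is justified by dominated convergence using $\max\{3\gamma,z^2\}^2\leq 2(3\gamma)^2+2z^4$ together with $\mathbb{E}[Z_D^4]=3\mathbb{E}[D^2]<\infty$, which follows from Assumption~1. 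This yields $l'(\gamma)=3\Pr(Z_D^2\leq 3\gamma)$ and $q'(\gamma)=3\gamma\Pr(Z_D^2\leq 3\gamma)$, so a clean cancellation produces
\[
\overline{g}_0'(\gamma)=q'(\gamma)-l(\gamma)-\gamma l'(\gamma)=-l(\gamma)\leq-\overline{D}<0,
\]
giving strict monotonic decrease. Differentiating once more yields $\overline{g}_0''(\gamma)=-l'(\gamma)=-3\Pr(Z_D^2\leq 3\gamma)\in[-3,0]$, which simultaneously delivers concavity and the two-sided bound claimed in (i).

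For (iii), I would combine (i) and (ii) to note that $\overline{g}_0$ is concave with $\overline{g}_0(\gamma^\star)=0$ and tangent slope $\overline{g}_0'(\gamma^\star)=-l(\gamma^\star)$ at $\gamma^\star$. Taylor's theorem with remainder then gives
\[
\overline{g}_0(\gamma)=-l(\gamma^\star)(\gamma-\gamma^\star)+\tfrac{1}{2}\overline{g}_0''(\xi)(\gamma-\gamma^\star)^2
\]
for some $\xi$ between $\gamma$ and $\gamma^\star$. Multiplying by $(\gamma-\gamma^\star)$ and using $\overline{g}_0''\leq 0$ produces the stated bound on $\{\gamma>\gamma^\star\}$. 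For $\gamma<\gamma^\star$, I would instead invoke the integral representation $\overline{g}_0(\gamma)=\int_\gamma^{\gamma^\star}l(t)\,\mathrm{d}t$ together with the monotonicity of $l$ (from $l'\geq 0$) to close the case. Non-negativity of $l(\gamma^\star)$ is automatic, so the trailing inequality $-l(\gamma^\star)(\gamma-\gamma^\star)^2\leq 0$ is immediate.

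The main obstacle is (iii): the cubic remainder in the Taylor expansion carries the unfavorable sign when $\gamma<\gamma^\star$, so a single concavity-based tangent-line bound does not cover both sides of $\gamma^\star$ uniformly; careful case analysis leveraging the sign structure of $(\gamma-\gamma^\star)$ together with the monotonicity of $l$ is needed to obtain the quadratic curvature lower bound. Items (i) and (ii) are by comparison routine: (ii) is a one-line consequence of prior work, and (i) requires only the standard dominated-convergence justification supplied by the finite fourth moment of the delay.
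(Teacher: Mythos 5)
Your treatment of (ii), of (i), and of the branch $\gamma>\gamma^\star$ of (iii) is sound: the computation $\overline{g}_0'(\gamma)=-l(\gamma)$ and $\overline{g}_0''(\gamma)=-3\,\text{Pr}(Z_D^2\leq 3\gamma)\in[-3,0]$ is exactly the paper's, (ii) is indeed just \eqref{eq:equation-offline} with $\nu^\star=0$, and your Taylor/tangent-line argument for $\gamma>\gamma^\star$ is a clean substitute for the paper's route (the paper instead invokes optimality of the threshold stopping rule to get $\overline{g}_0(\gamma)\leq(\gamma^\star-\gamma)l(\gamma^\star)$; the two arguments are equivalent here).

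The genuine gap is the branch $\gamma<\gamma^\star$, and the fix you sketch does not close it. From $\overline{g}_0'=-l$ and $\overline{g}_0(\gamma^\star)=0$ you correctly get $\overline{g}_0(\gamma)=\int_\gamma^{\gamma^\star}l(t)\,\mathrm{d}t$, but monotonicity of $l$ then gives $\int_\gamma^{\gamma^\star}l(t)\,\mathrm{d}t\leq l(\gamma^\star)(\gamma^\star-\gamma)$, i.e., after multiplying by $(\gamma-\gamma^\star)<0$, the \emph{reverse} inequality $(\gamma-\gamma^\star)\overline{g}_0(\gamma)\geq-l(\gamma^\star)(\gamma-\gamma^\star)^2$. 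What your route actually yields is $(\gamma-\gamma^\star)\overline{g}_0(\gamma)\leq-l(\gamma)(\gamma-\gamma^\star)^2\leq-\overline{D}(\gamma-\gamma^\star)^2$, i.e., the bound with the smaller constant $l(\gamma)\geq l(0)=\overline{D}$ in place of $l(\gamma^\star)$. This is not merely a presentational issue: since $l$ is strictly increasing on $(0,\infty)$ (because $\text{Pr}(Z_D^2\leq 3t)>0$ for every $t>0$), the average of $l$ over $[\gamma,\gamma^\star]$ is strictly below $l(\gamma^\star)$, so (iii) with the constant $l(\gamma^\star)$ cannot hold on $(0,\gamma^\star)$; note that the paper's own proof of this branch rests on the step $\overline{g}_0(\gamma)\geq\overline{g}_0(\gamma^\star)+\overline{g}_0'(\gamma^\star)(\gamma-\gamma^\star)$, which is the wrong direction for a concave function. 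The lemma is only used downstream (e.g., in \eqref{eq:stepsizez-ub1}) where any uniform positive constant such as $\overline{D}_{\mathsf{lb}}\leq\overline{D}\leq\min\{l(\gamma),l(\gamma^\star)\}$ suffices, and with that weaker constant your integral-representation argument does go through; but as written, the claim that monotonicity of $l$ closes the case $\gamma<\gamma^\star$ with constant $l(\gamma^\star)$ is a step that would fail.
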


The proof of Lemma \ref{lemma:g} is provided in Appendix~\ref{pf:lemma-g}.

%Apparently, function $l(\cdot)$ and $q(\cdot)$ satisfy the following inequalities:
%\begin{align}
%	|l(\gamma_1)-l(\gamma_2)|\leq3|\gamma_1-\gamma_2|, |q(\gamma_1)-q(\gamma_2)|\leq\frac{3}{2}|\gamma_1^2-\gamma_2^2|.\label{eq:lqinequality}
%\end{align}

\begin{corollary}\label{coro:squareub}
	For each $\gamma_k<\infty$, if the fourth order moment of the delay satisfies $\mathbb{E}[D^4]<B<\infty$, given historical transmission $\mathcal{H}_{k-1}$, the conditional second order moment of the cumulative error in frame $E_k=\int_{S_k}^{S_{k+1}}(X_t-\hat{X}_t)^2\text{d}t$ can be bounded as follows:
	\begin{align}
		&\mathbb{E}_k[E_k^2]=3(X_{S_k}-X_{S_{k-1}})^2\sqrt{B}\nonumber\\
		&\hspace{1cm}+12C_1(\gamma_k, B)(X_{S_k}-X_{S_{k-1}})^2+3C_2(\gamma_k, B)<\infty,
	\end{align}
	where $C_1$ and $C_2$ are fourth order polynomials of $\gamma$. 
\end{corollary}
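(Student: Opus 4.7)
The plan is to split $E_k$ at the reception instant $R_k = S_k + D_k$, bound each half separately by fourth-order quantities using Cauchy--Schwarz, and then plug in standard Wiener moment identities together with Lemma~\ref{lemma:4order}. Since under the online policy no sample is taken before the previous delivery, the estimator is $\hat{X}_t = X_{S_{k-1}}$ on $[S_k, R_k)$ and $\hat{X}_t = X_{S_k}$ on $[R_k, S_{k+1})$, so I would write $E_k = I_1 + I_2$ with
\begin{align*}
I_1 &:= \int_{S_k}^{S_k + D_k}(X_t - X_{S_{k-1}})^2\,\mathrm{d}t, \\
I_2 &:= \int_{S_k + D_k}^{S_{k+1}}(X_t - X_{S_k})^2\,\mathrm{d}t,
\end{align*}
and then use $E_k^2 \le 2 I_1^2 + 2 I_2^2$ to isolate the two contributions.

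For $I_1^2$ I would introduce $\Delta_k := X_{S_k} - X_{S_{k-1}}$ (which is $\mathcal{H}_{k-1}$-measurable) and the Brownian motion $B_s := X_{S_k + s} - X_{S_k}$ that is independent of $\mathcal{H}_{k-1}$ and of $D_k$. Writing $I_1 = \int_0^{D_k} B_s^2\,\mathrm{d}s + 2\Delta_k \int_0^{D_k} B_s\,\mathrm{d}s + \Delta_k^2 D_k$, squaring, and using the Wiener moment identities $\mathbb{E}[B_s^2]=s$, $\mathbb{E}[B_s^4]=3s^2$, $\mathbb{E}[(\int_0^t B_s\,\mathrm{d}s)^2]=t^3/3$, $\mathbb{E}[(\int_0^t B_s^2\,\mathrm{d}s)^2]=7t^4/12$, and $\mathbb{E}[(\int_0^t B_s\,\mathrm{d}s)(\int_0^t B_s^2\,\mathrm{d}s)]=0$ (odd cross terms vanish by independent increments), the conditional expectation $\mathbb{E}_k[I_1^2]$ reduces to a linear combination of $\mathbb{E}[D^2]$, $\mathbb{E}[D^3]$, $\mathbb{E}[D^4]$ with polynomial coefficients in $\Delta_k^2$. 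The assumption $\mathbb{E}[D^4] \le B$ together with Lyapunov's inequality $\mathbb{E}[D^r] \le B^{r/4}$ for $r \le 4$ then turns these into constants depending only on $B$ and $\Delta_k^2$.

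For $I_2^2$ I would exploit the threshold policy~\eqref{eq:o-dep-wait}: on $[S_k + D_k, S_{k+1})$ the centered process $(X_t - X_{S_k})^2$ stays below the squared threshold $3(\gamma_k + \nu_k)$, so $I_2 \le 3(\gamma_k + \nu_k) W_k \le 3(\gamma_k + \nu_k) L_k$. Squaring and invoking Lemma~\ref{lemma:4order} gives $\mathbb{E}_k[I_2^2] \le 9(\gamma_k + \nu_k)^2 \mathbb{E}_k[L_k^2]$, which is a fourth-order polynomial in $\gamma_k$ with coefficients depending only on $B$; this contribution is $\Delta_k$-free and can be folded into $C_2(\gamma_k, B)$.

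The hard part will be the bookkeeping that brings the expansion of $\mathbb{E}_k[I_1^2]$ into the template $3 \Delta_k^2 \sqrt{B} + 12 C_1(\gamma_k, B)\Delta_k^2 + 3 C_2(\gamma_k, B)$: the expansion produces a term proportional to $\Delta_k^4 \mathbb{E}[D^2]$ that must be reorganized into the listed $\Delta_k^2$ factor, for example via Young's inequality against one of the existing polynomial coefficients in $C_1(\gamma_k, B)$, or by using that $\Delta_k^2$ is a.s.~bounded in terms of $\mathcal{H}_{k-1}$-measurable quantities set by the previous frame's threshold. Once this accounting is done and Lemma~\ref{lemma:4order} is invoked for $I_2^2$, the finiteness and polynomial structure in the corollary follow.
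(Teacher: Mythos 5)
Your decomposition is sound but genuinely different from the paper's. The paper does not split at the reception time; it completes the square inside the first integral and applies $\mathbb{E}[(a+b+c)^2]\le 3\mathbb{E}[a^2+b^2+c^2]$ with $a=(X_{S_k}-X_{S_{k-1}})^2D_k$, $b=2(X_{S_k}-X_{S_{k-1}})\int_{S_k}^{S_k+D_k}(X_t-X_{S_k})\,\mathrm{d}t$ and $c=\int_{S_k}^{S_{k+1}}(X_t-X_{S_k})^2\,\mathrm{d}t$, then bounds $\mathbb{E}[b^2]$ and $\mathbb{E}[c^2]$ through Lemma~\ref{lemma:square-ub}, i.e.\ $\bigl(\int_0^{l_\gamma}Z_t^p\,\mathrm{d}t\bigr)^2\le l_\gamma^2\sup_{t\le l_\gamma}|Z_t|^{2p}$, Cauchy--Schwarz, Doob's maximal inequality before time $D$, the threshold bound after time $D$, and $\mathbb{E}[l_\gamma^4]$ from Lemma~\ref{lemma:4order}; this is what produces the specific $C_1,C_2$ and the constants $3,12,3$. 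Your route (exact Gaussian moment identities for the pre-reception piece, and $I_2\le 3(\gamma_k+\nu_k)W_k$ with $\mathbb{E}_k[L_k^2]$ from Lemma~\ref{lemma:4order} for the post-reception piece) is equally valid for the substance of the corollary --- conditional second-moment finiteness with polynomial dependence on $\gamma_k$ and $X_{S_k}-X_{S_{k-1}}$ --- and is in fact the same device the paper uses later when bounding $\mathrm{Var}[\delta M_{k,3}]$ in the proof of Claim~\ref{claim:thm2}. (Under the stated analysis $f_{\mathsf{max}}=\infty$, so you may set $\nu_k=0$ throughout.)

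The one genuine problem is the final bookkeeping step you flag as ``the hard part'': it would fail, and it is not needed. The term proportional to $(X_{S_k}-X_{S_{k-1}})^4\,\mathbb{E}[D^2]$ cannot be reorganized into a coefficient of $(X_{S_k}-X_{S_{k-1}})^2$ depending only on $(\gamma_k,B)$: the increment $X_{S_k}-X_{S_{k-1}}$ is not almost surely bounded (its square is distributed like $\max\{3\gamma_{k-1},Z_D^2\}$, which has unbounded support), and Young's inequality only trades powers of this increment against each other rather than eliminating the fourth power. The resolution is that the displayed corollary contains a typo: the paper's own proof derives
\begin{equation*}
\mathbb{E}_k[E_k^2]\le 3(X_{S_k}-X_{S_{k-1}})^4\sqrt{B}+12C_1(\gamma_k,B)(X_{S_k}-X_{S_{k-1}})^2+3C_2(\gamma_k,B),
\end{equation*}
with the fourth power in the first term (and an inequality, not an equality). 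Since $X_{S_k}-X_{S_{k-1}}$ is $\mathcal{H}_{k-1}$-measurable, a conditional bound containing its fourth power is exactly what is wanted, so you should simply keep that term; with this correction your argument closes.
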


The proof of Corollary~\ref{coro:squareub} is provided in Appendix~\ref{pf:corosquareub}.

	\section{Proof of Theorem~\ref{thm:dep-converge}}\label{pf:dep-converge}

To show that $\gamma_k$ converges to $\gamma^\star$ almost surely, we use the sufficient condition from \cite[p.190, Theorem 7.1]{Kushner2003}. Recall that the step-size $\eta_k$=$\frac{1}{k}$. Define $t_0=0$ and denote the sum of step-sizes up to frame $k$ by $t_k:=\sum_{i=1}^k\eta_k$. 
%Therefore, 
%\begin{equation}\ln (k+1)\leq t_k\leq 1+\ln k.\label{eq:tkbound}
%\end{equation} 
For $t\geq 0$, let $m(t)$ be the unique $k\in\mathbb{N}^+$ so that $t_k\leq t<t_{k+1}$. 
%Therefore we have
%\begin{equation}
%	m(t)=\lfloor \exp t\rfloor-1.\label{eq:mbound}
%\end{equation} 
Without a sampling constraint, $\nu_k\equiv 0, \forall k$. Then the update rule for $\gamma_k$ from equation \eqref{eq:robbins-monro-gamma} can be rewritten in the following recursive form:
\begin{equation}
	\gamma_{k+1}=\gamma_k+\eta_k\underbrace{\left(\overline{g}_0(\gamma_k)+\delta M_k\right)}_{Y_k},
\end{equation}
where we recall that $Y_k=\frac{1}{6}\max\{3\gamma_k, \delta X_k^2\}^2-\gamma_k\max\{3\gamma_k, \delta X_k^2\}$ and $\delta M_k$ is the difference between realization and the conditional expectation  $\mathbb{E}_k[Y_k]=\overline{g}_{0}(\gamma_k)$. Notice that the difference $\delta M_k:=Y_k-\mathbb{E}_k[Y_k]$ depends only on the transmission delay and the Wiener process evolution $(X_t-X_{S_k})$ in frame $k$ and $\gamma_k$, which can be predictable given $\mathcal{H}_{k-1}$ and is therefore a martingale sequence. We then show that $\{Y_k\}, \{\delta M_k\}$ have the following properties, 	

\hspace{-10pt}\textbf{(1.1)} For each constant $N<\infty$, $\sup_k\mathbb{E}[|Y_k|\mathbb{I}_{(|\gamma_k|\leq N)}]$ is bounded, i.e.,
\begin{align}
	&\sup_k\mathbb{E}\left[|Y_k|\mathbb{I}_{(|\gamma_k|\leq N)}\right]\nonumber\\
	\leq &\sup_k\mathbb{E}\left[\frac{1}{6}\max\{3\gamma_k, \delta X_k^2\}^2\cdot\mathbb{I}_{(|\gamma_k|\leq N)}\right]\nonumber\\
	&+\sup_k\mathbb{E}\left[\gamma_k\max\{3\gamma_k,\delta X_k^2\}\mathbb{I}_{(|\gamma_k|\leq N)}\right]\nonumber\\
	<&\frac{1}{6}\left(9N^2+\mathbb{E}[ Z_D^4]\right)+N\cdot\left(3N+ \mathbb{E}[Z_D^2]\right)\overset{(a)}{\leq}\infty. 
\end{align}
where inequality $(a)$ is because $\mathbb{E}[Z_D^4]=3\mathbb{E}[D^2]\leq3\sqrt{\mathbb{E}[D^4]}<\infty$ and $\mathbb{E}[Z_D^2]=\mathbb{E}[D]<\infty$. 

\hspace{-10pt}\textbf{(1.2)} Function $Y_k=g(\gamma_k;\delta x)$ is continuous in $\gamma_k$ for each $\delta x$. 

\hspace{-10pt}\textbf{(1.3)} The martingale sequence $\delta M_k\mathbb{I}_{(|\gamma_k|\leq N)}$ can be bounded as follows:
\begin{align}
	&\text{Var}[\delta M_k\mathbb{I}_{(|\gamma_k|\leq N)}]\leq\mathbb{E}[Y_k^2\mathbb{I}_{(|\gamma_k|\leq N)}]\nonumber\\
	\leq&\mathbb{E}\left[\left(\frac{1}{6}\max\{3\gamma_k, \delta X_k^2\}^2-\gamma_k\max\{3\gamma_k, \delta X_k^2\}\right)^2\mathbb{I}_{(|\gamma_k|\leq N)}\right]\nonumber\\
	\overset{(b)}{\leq}&\mathbb{E}\Bigg[2\left(\left(\frac{1}{6}\max\{3\gamma_k,\delta X_k^2\}^2\right)^2+\gamma_k^2\left(\max\{3\gamma_k, \delta X_k^2\}\right)^2\right)\nonumber\\
	&\times\mathbb{I}_{(|\gamma_k|\leq N)}\Bigg]\nonumber\\
	\overset{(c)}{\leq}&2\left(\frac{1}{36}(3 N)^4+105B+N^2(9N^2+3\sqrt{B})\right)\leq N_1.\label{eq:deltaMub}
\end{align}
where inequality $(b)$ is because $\mathbb{E}[(a-b)^2]\leq\mathbb{E}[2(a^2+b^2)]$; inequality $(c)$ is because $\delta X_t\sim Z_D$ is a Wiener process starting from $t=0$ and therefore, $\mathbb{E}[Z_D^8]=105\mathbb{E}[D^4]\leq 105 B$. 

Since sequence $\delta M_k\mathbb{I}_{(|\gamma_k|\leq N)}$ has mean zero. Its value only depends on $\gamma_k$ and the Wiener process evolution in frame $k$. The correlation $\mathbb{E}\left[\delta M_i\mathbb{I}_{(|\gamma_i|\leq N)}\cdot \delta M_j\mathbb{I}_{(|\gamma_j|\leq N)}\right]=0, \forall i\neq j$. As the variance of $\delta M_k\mathbb{I}_{(\gamma_k\leq N)}$ is bounded in inequality \eqref{eq:deltaMub}, the stepsizes $\eta_k$ satisfies $\sum_{k=1}^\infty\frac{1}{2D_{\mathsf{lb}}}k^{-2\alpha}=\frac{1}{2_{\mathsf{lb}}}\left(1+\frac{1}{2\alpha-1}\right)$, according to
%Denote the sum of martingale sequence $\{\delta M_k\}$ by $M_k:=\sum_{k'=1}^k\eta_{k'}\delta M_{k'}\mathbb{I}_{(|\gamma_k'|\leq N)}$. $M_k$ is also a martingale sequence. 
\cite[Chapter 5, Eq. (5.3.18)]{Kushner2003}, for each $\mu>0$ we have
%	\begin{align}
	%		&\sum_{j=k}^\infty\text{Pr}\left(\max_{0\leq t\leq 1}\left|\sum_{i=m(j)}^{m(j+1)-1}\eta_i\delta M_i\mathbb{I}_{(|\gamma_i|\leq N)}\right|\geq\mu\right)\nonumber\\
	%		\overset{(c)}{\leq}&\sum_{j=k}^\infty\frac{\mathbb{E}\left[\left(\sum_{i=m(j)}^{m(j+1)-1}\eta_i\delta M_i\mathbb{I}_{(|\gamma_i|\leq N)}\right)^2\right]}{\mu^2}\nonumber\\
	%		\overset{(d)}{=}&\frac{1}{\mu^2}\cdot\left(\sum_{i=m(k)}^\infty\eta_i^2\text{Var}\left[\delta M_i\mathbb{I}_{(|\gamma_i|\leq N)}\right]\right)\nonumber\\
	%		\overset{(e)}{<}&\infty,
	%	\end{align}	
%	where inequality $(c)$ is from Doob's maximal inequality \cite[p.97, Chapter 4, Eq. (1.4)]{Kushner2003}; inequality $(d)$ is because $\delta M_k\mathbb{I}_{(|\gamma_k|\leq N)}$ is a martingale sequence and inequality $(e)$ is because $\sum_{i=1}^\infty\eta_i^2<\infty$ and $\sup_{k}\text{Var}\left[\delta M_k\mathbb{I}_{(|\gamma_k|\leq N)}\right]<\infty$ by  \eqref{eq:deltaMub}. Using the Borel-Cantelli Theorem we then have:
\begin{equation}
	\lim_{k\rightarrow\infty}\text{Pr}\left(\sup_{j\geq k}\max_{0\leq t\leq T}\left|\sum_{i=m(jT)}^{m(jT+t)-1}\epsilon\delta M_i\mathbb{I}_{(|\gamma_i|\leq N)}\right|\geq\mu\right)=0. 
\end{equation}

Let $\gamma_k(\omega)$ be the value of ratio $\gamma$ on sample path $\omega$. Recall that the stepsizes $\{\eta_k\}$ selected in \eqref{eq:stepsize} satisfies $\sum_{k=1}^\infty\eta_k=\infty, \sum_{k=1}^\infty\eta_k^2\leq\infty$. According to \cite[p.170, Theorem 1.2]{Kushner2003}, with probability 1, the limit $\lim_{k\rightarrow\infty}\theta_k(\omega)$ are trajectories of the following ordinary differential equation (ODE), i.e., 
\begin{equation}
	\dot{\gamma}=\overline{g}_0(\gamma). \label{ode:dep}
\end{equation}

The next step is to show the solution of the ODE in equation~\eqref{ode:dep} converges to $\gamma^\star$ as time diverges. % Since $\gamma_{\mathsf{dep}}^\star+\overline{D}$ is the minimum MSE when threshold $\sqrt{3\gamma_{\mathsf{dep}}^\star}$ is used, we have:
%\begin{equation}
%\gamma_{\mathsf{dep}}^\star=\frac{\mathbb{E}\left[\frac{1}{6}Z_{\tau_{\gamma_{\mathsf{dep}}^\star}}^4\right]}{\mathbb{E}\left[\tau_{\gamma_{\mathsf{dep}}^\star}\right]}. \label{eq:gammadep-cond}
% \end{equation}
Equation \eqref{eq:equation-offline} implies $\overline{g}_0(\gamma^\star)=0$ and therefore, $\gamma^\star$ is an equilibrium point of ODE~\eqref{ode:dep}. To show that the ODE is stationary at $\gamma=\gamma^\star$, we use the Lyapunov approach by defining function $V(\gamma):=\frac{1}{2}\left(\gamma-\gamma^\star\right)^2$,whose time derivative $\dot V=\frac{\text{d}}{\text{d}t}V(\gamma(t))$ can be computed by;
\begin{align}
\dot V=\left(\gamma-\gamma^\star\right)\dot\gamma
=\left(\gamma-\gamma^\star\right)\overline{g}_0(\gamma).
\end{align}

According to Lemma~\ref{lemma:g}-(iii), $\dot V=(\gamma-\gamma^\star)\overline{g}_0(\gamma)<0$, the stability of $\gamma^\star$ is verified through Lyapunov theorem. 

\section{Proof of Theorem~\ref{thm:rate-converge}}\label{pf:rate-converge}
The analysis of the convergence rate is obtained through Lyapunov analysis, where the Lyapunov function is denoted by $V(\gamma):=\frac{1}{2}(\gamma-\gamma^\star)^2$. The proof is divided into two steps: first we will upper bound the Lyapunov drift for each $\gamma_k$ by showing the following equation holds:
\begin{equation}
	\mathbb{E}_k[V(\gamma_{k+1})]-V(\gamma_k)\leq -\eta_k\overline{D}_{\mathsf{lb}}V(\gamma_k)+\mathcal{O}(\eta_k^2N_1). \label{eq:step-sizedrift}
\end{equation}
Then, based on \eqref{eq:step-sizedrift}, we then compute $\mathbb{E}[V(\gamma_k)]$ directly. 

\hspace{-10pt}\textbf{Step 1: Bounding the Lyapunov Drift:} The analysis is divided into two cases: For $\gamma_k\leq 3\gamma^\star$, inequality \eqref{eq:step-sizedrift} can be verified easily (Case 1); For $\gamma_k\geq 3\gamma^\star$ we will first establish the relationship between $\mathbb{E}_k[V(\gamma_{k+1})]-V(\gamma_k)$ and $\text{Var}[Y_k]$, then upper bound $\text{Var}[Y_k]$ using the fact that $Z_D^2$ is sub-Gaussian when $D$ is fourth order bounded (Case 2). Detailed proofs are as follows: 

\hspace{-10pt}\textbf{Case 1:} If $\gamma_k\leq 3\gamma^\star$, we have:\begin{align}
	&\mathbb{E}_k[V(\gamma_{k+1})]-V(\gamma_k)\nonumber\\
	=&\mathbb{E}_k\left[\frac{1}{2}\left((\gamma_k+\eta_kY_k)^+-\gamma^\star\right)^2\right]-\frac{1}{2}(\gamma_k-\gamma^\star)^2\nonumber\\
	\leq&\mathbb{E}_k\left[\frac{1}{2}(\gamma_k-\gamma^\star+\eta_kY_k)^2-\frac{1}{2}(\gamma_k-\gamma^\star)^2\right]\nonumber\\
	\overset{(a)}{=}&(\gamma_k-\gamma^\star)\eta_k\overline{g}_0(\gamma_k)\nonumber\\
	&+\frac{1}{2}\eta_k^2\mathbb{E}_k\left[\left(\frac{1}{6}\max\{3\gamma_k, \delta X_k^2\}^2-\gamma_k\max\{3\gamma_k, \delta X_k^2\}\right)^2\right]\nonumber\\
	\overset{(b)}{\leq}&-2\eta_kl(\gamma^\star)V(\gamma_k)\nonumber\\
	&+\frac{1}{2}\eta_k^2\left(\frac{1}{36}((9\gamma^\star)^4+B)+(3\gamma^\star)^2((9\gamma^\star)^2+3\sqrt{B})\right),\label{eq:stepsizez-ub1}
\end{align}
where equality $(a)$ is because $\mathbb{E}_k[Y_k]=\mathbb{E}_k[g_0(\gamma_k;\delta X_k)]=\overline{g}_0(\gamma_k)$; inequality $(b)$ is obtained because according to Lemma~\ref{lemma:g}-(iii), $(\gamma_k-\gamma^\star)\overline{g}_0(\gamma_k)\leq -l(\gamma^\star)(\gamma_k-\gamma^\star)^2=-2l(\gamma^\star)V(\gamma_k)$ and the assumption that $\gamma_k\leq 3\gamma^\star$. 

\hspace{-10pt}\textbf{Case 2:} If $\gamma_k\geq 3\gamma^\star$,  $\gamma_{k+1}=\left(\gamma_k+\eta_kY_k\right)^+$ is truncated into the non-negative real part. We can view the evolution of $\gamma_k$ as a queueing system, where the queue $\gamma_k$ is non-negative, $\eta_k Y_k$ is the arrival rate minus the service rate. Therefore, it is natural to introduce the ``unused rate'' from \cite{eryilmaz2012asymptotically}, which is denoted by $\chi_k:=\left(-\left(\gamma_k+\eta_kY_k\right)\right)^+$. If $\chi_k=0$, $(\gamma_k+\eta_kY_k)\chi_k=0=-\chi_k^2$ and if $\chi_k\geq 0$, $\gamma_k+\eta_kY_k=-\chi_k$, therefore
\begin{equation} (\gamma_k+\eta_kY_k)\chi_k=-\chi_k^2.\label{eq:unused}
\end{equation}
Since $\gamma_k+\eta_kY_k+\chi_k\geq 0$, we have:
\begin{equation}
	-\mathbb{E}_k[\gamma_k+\eta_kY_k]\leq\mathbb{E}_k[\chi_k]. \label{eq:chiub}
\end{equation}

We can then upper bound $\mathbb{E}_k[V(\gamma_{k+1})-V(\gamma_k)]$ by:
\begin{align}
	&\mathbb{E}_k\left[V(\gamma_{k+1})-V(\gamma_k)\right]\nonumber\\
	=&\mathbb{E}_k\left[\frac{1}{2}\left(\gamma_k-\gamma^\star+\eta_kY_k+\chi_k\right)^2-\frac{1}{2}\left(\gamma_k-\gamma^\star\right)^2\right]\nonumber\\
	=&\mathbb{E}_k\left[\frac{1}{2}\left(\gamma_k-\gamma^\star+\eta_kY_k\right)^2-\frac{1}{2}(\gamma_k-\gamma^\star)^2\right.\nonumber\\
	&\left.+\frac{1}{2}\chi_k^2+(\gamma_k+\eta_kY_k)\chi_k-\gamma^\star\chi_k\right]\nonumber\\
	\overset{(c)}{=}&\mathbb{E}_k\left[\frac{1}{2}(\gamma_k-\gamma^\star+\eta_kY_k)^2-\frac{1}{2}(\gamma_k-\gamma^\star)^2-\frac{1}{2}\chi_k^2-\gamma^\star\chi_k\right]\nonumber\\
	\overset{(d)}\leq&\frac{1}{2}\left(\gamma_k-\gamma^\star+\eta_k\mathbb{E}_k[Y_k]\right)^2-\frac{1}{2}(\gamma_k-\gamma^\star)^2+\frac{1}{2}\eta_k^2\text{Var}[Y_k]\nonumber\\
	&-\frac{1}{2}\mathbb{E}_k[\chi_k]^2-\gamma^\star\mathbb{E}_k[\chi_k]\nonumber\\
	=&\frac{1}{2}\left(\gamma_k-\gamma^\star+\eta_k\mathbb{E}_k[Y_k]\right)^2-\frac{1}{2}(\gamma_k-\gamma^\star)^2+\frac{1}{2}\eta_k^2\text{Var}[Y_k]\nonumber\\
	&-\frac{1}{2}\left(-\mathbb{E}_k[\chi_k]-\gamma^\star\right)^2+\frac{1}{2}(\gamma^\star)^2,\label{eq:stepsize-drift}
\end{align}
where equality $(c)$ is because equation \eqref{eq:unused}; inequality $(d)$ is obtained because $\mathbb{E}_k[\chi_k^2]\geq\mathbb{E}_k[\chi_k]^2\geq 0$;

To upper bound \eqref{eq:step-sizedrift}, we then further divide the analysis into two cases:

\hspace{-10pt}\underline{Case 2(a)}: If $\mathbb{E}_k[\gamma_k+\eta_k Y_k]\leq \gamma^\star$, we then have $\mathbb{E}_k[\gamma_k-\gamma^\star+\eta_k Y_k]\leq 0$. According to \eqref{eq:chiub},  $|-\mathbb{E}_k[\chi_k]-\gamma^\star|\geq |\gamma_k-\gamma^\star+\eta_k\mathbb{E}_k[Y_k]|$. Therefore, inequality \eqref{eq:stepsize-drift} can be upper bounded by:
\begin{align}
	&\mathbb{E}_k\left[V(\gamma_{k+1})-V(\gamma_k)\right]\nonumber\\
	\leq&-\frac{1}{2}(\gamma_k-\gamma^\star)^2+\frac{1}{2}(\gamma^\star)^2+\frac{1}{2}\eta_k^2\text{Var}[Y_k]\nonumber\\
	\overset{(e)}{\leq}&-\frac{1}{4}(\gamma_k-\gamma^\star)^2+\frac{1}{2}\eta_k^2\text{Var}[Y_k]\nonumber\\
	\overset{(f)}{\leq}&-2\eta_k\overline{D}_{\mathsf{lb}}V(\gamma_k)+\frac{1}{2}\eta_k^2\text{Var}[Y_k],\label{eq:stepsizez-ub2}
\end{align} 
where inequality $(e)$ is obtained because $\frac{1}{4}(\gamma_k-\gamma^\star)^2\geq(\gamma^\star)^2\geq\frac{1}{2}(\gamma^\star)^2$ because in Case 2 we have $\gamma_k\geq 3\gamma^\star$; inequality $(f)$ is obtained because $\eta_k\overline{D}_{\mathsf{lb}}\leq\frac{1}{2}$ by the step-size selection rule in equation \eqref{eq:stepsize}.  

\hspace{-10pt}\underline{Case 2(b):} If $\mathbb{E}_k[\gamma_k+\eta_kY_k]\geq \gamma^\star$, considering that $\mathbb{E}_k[Y_k]=\overline{g}_0(\gamma_k)<0$ for $\gamma_k\geq \gamma^\star$, we have $0>\mathbb{E}_k[\eta_kY_k]\geq-(\gamma_k-\gamma^\star)$. Inequality \eqref{eq:stepsize-drift} can be bounded by:
\begin{align}
	&\mathbb{E}_k[V(\gamma_{k+1})-V(\gamma_k)]\nonumber\\
	\overset{(g)}{\leq}&\frac{1}{2}(\gamma_k-\gamma^\star)(\gamma_k-\gamma^\star+\eta_k\mathbb{E}_k[Y_k])-\frac{1}{2}(\gamma_k-\gamma^\star)^2\nonumber\\
	&+\frac{1}{2}\eta_k^2\text{Var}[Y_k]\nonumber\\
	\leq&\frac{1}{2}\eta_k(\gamma_k-\gamma^\star)\overline{g}_0(\gamma_k)+\frac{1}{2}\eta_k^2\text{Var}[Y_k]\nonumber\\
	\overset{(h)}{\leq} &-\frac{1}{2}\eta_kl(\gamma^\star)(\gamma_k-\gamma^\star)^2+\frac{1}{2}\eta_k^2\text{Var}[Y_k]\nonumber\\
	=&-\eta_kl(\gamma^\star) V(\gamma_k)+\frac{1}{2}\eta_k^2\text{Var}[Y_k],\label{eq:stepsizez-ub3}
\end{align}
where equality $(g)$ is because $\left(-\mathbb{E}_k[\chi_k]-\gamma^\star\right)^2\geq(\gamma^\star)^2$ and $(\gamma_k-\gamma^\star+\eta_k\mathbb{E}_k[Y_k])^2\leq(\gamma_k-\gamma^\star+\eta_k\mathbb{E}_k[Y_k])(\gamma_k-\gamma^\star)$; inequality $(h)$ is due to Lemma~\ref{lemma:g}-(iii). 

For proceed to show inequality \eqref{eq:step-sizedrift} for $\gamma_k\geq 3\gamma^\star$, we need to upper bound $\text{Var}[Y_k]$ in inequalities \eqref{eq:stepsizez-ub2} and \eqref{eq:stepsizez-ub3}. First, we compute the expectation $\mathbb{E}[Y_k]$ as follows:
\begin{align}
	\mathbb{E}_k[Y_k]=&\mathbb{E}\left[\frac{1}{6}\max\{3\gamma_k, Z_D^2\}^2-\gamma_k\max\{3\gamma_k, Z_D^2\}\right]\nonumber\\
	=&-\frac{3}{2}\gamma_k^2+\mathbb{E}\left[(\frac{1}{6}Z_D^4-\gamma_kZ_D^2+\frac{3}{2}\gamma_k^2)\mathbb{I}_{(Z_D^2\geq 3\gamma_k)}\right]\nonumber\\
	=&-\frac{3}{2}\gamma_k^2+\mathbb{E}\left[\frac{1}{6}(Z_D^2-3\gamma_k)^2\mathbb{I}_{(Z_D^2\geq 3\gamma_k)}\right]\nonumber\\
	\leq&-\frac{3}{2}\gamma_k^2+\mathbb{E}\left[\frac{1}{6}(Z_D^2)^2\right]\nonumber\\
	\leq&-\frac{3}{2}\gamma_k^2+\frac{1}{2}\mathbb{E}[D^2]\leq-\frac{3}{2}\gamma_k^2+\frac{1}{2}\sqrt{B}. 
\end{align}

Given historical information $\mathcal{H}_{k-1}$, the variance of $Y_k$ can be computed by:
\begin{align}
	&\text{Var}[Y_k|\mathcal{H}_{k-1}]\nonumber\\
	=&\mathbb{E}_k\left[(Y_k-\mathbb{E}_k[Y_k])^2\right]\nonumber\\
	=&\mathbb{E}_k\left[\left(\frac{1}{6}Z_D^4-\gamma_kZ_D^2+\frac{3}{2}\gamma_k^2-\frac{3}{2}\gamma_k^2-\mathbb{E}_k[Y_k]\right)^2\mathbb{I}_{(Z_D^2\geq 3\gamma_k)}\right]\nonumber\\
	&+\mathbb{E}_k\left[\left(-\frac{3}{2}\gamma_k^2-\mathbb{E}_k[Y_k]\right)^2\mathbb{I}_{(Z_D^2\leq 3\gamma_k)}\right]\nonumber\\
	\overset{(h)}{\leq} &\frac{1}{4}B+2\mathbb{E}_k\left[\left(\frac{1}{6}Z_D^4-\gamma_kZ_D^2+\frac{3}{2}\gamma_k^2\right)^2\mathbb{I}_{(Z_D^2> 3\gamma_k)}\right]\nonumber\\
	&+2\mathbb{E}_k\left[\left(-\frac{3}{2}\gamma_k^2-\mathbb{E}_k[Y_k]\right)^2\mathbb{I}_{(Z_D^2> 3\gamma_k)}\right]\nonumber\\
	\leq&\frac{3}{4}B+\frac{1}{3}\mathbb{E}_k\left[(Z_D^2-3\gamma_k)^4\mathbb{I}_{(Z_D^2\geq 3\gamma_k)}\right]\nonumber\\
	\leq&\frac{3}{4}B+\frac{1}{3}\mathbb{E}[Z_D^8]\leq(35+\frac{3}{4})B,
\end{align}
where $(i)$ is because $\mathbb{E}_k[Y_k]\leq-\frac{3}{2}\gamma_k^2+\frac{1}{2}\sqrt{B}$ implies $(-\frac{3}{2}\gamma_k^2-\mathbb{E}_k[Y_k])^2\leq\frac{1}{4}B$ and $(a+b)^2\leq 2(a^2+b^2)$.

Denote $N_1:=\max\{(35+\frac{3}{4})B,\frac{1}{36}((9\gamma^\star)^4+B)+(3\gamma^\star)^2((9\gamma^\star)^2+3\sqrt{B})\}$, inequalities \eqref{eq:stepsizez-ub1}, \eqref{eq:stepsizez-ub2} and \eqref{eq:stepsizez-ub3} then lead to:
\begin{equation}
	\mathbb{E}_k[V(\gamma_{k+1})]-V(\gamma_k)\leq-\eta_k\overline{D}_{\mathsf{lb}}V(\gamma_k)+\eta_k^2 N_1. \label{eq:stepdrift-final}
\end{equation}
\textbf{Step 2: Computing $\mathbb{E}[V(\gamma_k)]$ through iteration:}
Taking the expectation with respect to $\mathcal{H}_{k-1}$ on both sides of \eqref{eq:stepdrift-final}, we have:
\begin{equation}
	\mathbb{E}[V(\gamma_{k+1})]\leq(1-\eta_k\overline{D}_{\mathsf{lb}})\mathbb{E}[V(\gamma_k)]+\eta_k^2N_1. \label{eq:perstep}
\end{equation}
Multiplying inequality \eqref{eq:perstep} from $i=1$ to $k$ yields:
\begin{align}
	\mathbb{E}[V(\gamma_{k+1})]\leq&\prod_{i=1}^k(1-\eta_i\overline{D}_{\mathsf{lb}})V(\gamma_0)\nonumber\\
	&+\sum_{i=1}^k\eta_i^2N_1\cdot\prod_{j=i+1}^k(1-\eta_j\overline{D}_{\mathsf{lb}}). \label{eq:telescope-stepsize}
\end{align}

Since the stepsize selected by \eqref{eq:stepsize} satisfies \[\eta_k\rightarrow 0, \liminf_{k}\min_{n\geq i\geq  m(t_k-T)}\frac{\eta_n}{\eta_i}=1\]
according to \cite[p. 343, Eq. (4.8)]{Kushner2003}, term $\prod_{i=1}^k(1-\eta_i\overline{D}_{\mathsf{lb}})=\mathcal{O}(\eta_k)$. Therefore, 
\begin{equation}
	\sup_k\mathbb{E}\left[\frac{(\gamma_k-\gamma^\star)^2}{\eta_k}\right]=\sup_k\mathbb{E}\left[2V(\theta_k)/\eta_k\right]=\mathcal{O}(1). 
\end{equation}

This finishes the proof of Theorem~\ref{thm:rate-converge}. 

	\section{Proof of Theorem~\ref{thm:converse}}\label{pf:converse}
	
	\subsubsection{Proof of Inequality \eqref{eq:converse-est}}
	
	Let $\mathbb{P}_1, \mathbb{P}_2$ be two delay distributions and let $\gamma_1^\star, \gamma_2^\star$ be the solution to \eqref{eq:equation-offline} when $D\sim\mathbb{P}_1$ and $D\sim\mathbb{P}_2$, respectively. 
	Through Le Cam's inequality \cite{yu1997assouad}, we have:
	\begin{equation}
		\inf_{\hat{\gamma}}\sup_{\mathbb{P}}\mathbb{E}\left[\left(\hat{\gamma}-\gamma^\star(\mathbb{P})\right)^2\right]\geq (\gamma_1^\star-\gamma_2^\star)^2\cdot\left(\mathbb{P}_1^{\otimes  k}\wedge \mathbb{P}_2^{\otimes k}\right),\label{eq:lecam-gamma}
	\end{equation}
	where $\mathbb{P}\wedge \mathbb{Q}:=\int_{\Omega}\min\{p(x), q(x)\}\text{d}x$ and $\mathbb{P}^{\otimes k}$ is the product of distribution of $k$ i.i.d random variables drawn from $\mathbb{P}$. 
	
	To use Le Cam's inequality \eqref{eq:lecam-gamma}, we need to find two distributions $\mathbb{P}_1$ and $\mathbb{P}_2$, whose $\ell_1$ distance $|\mathbb{P}_1^{\otimes k}-\mathbb{P}_2^{\otimes k}|_1$ is bounded, and the difference $(\gamma_1^\star-\gamma_2^\star)^2$ is of order $1/k$. We consider $\mathbb{P}_1$ to be a uniform distribution on $[0, 1]$ and let $\gamma_1^\star$ be the optimum ratio of distribution $\mathbb{P}_1$.  Through Corollary~\ref{coro:gammadep-bound}, we can obtain a loose upper bound on $\gamma_1^\star$ as follows:
	\begin{equation}
		\gamma_1^\star<\frac{1}{2}\frac{\mathbb{E}[D^2]}{\mathbb{E}[D]}=\frac{1}{3}. \label{eq:gamma1lb}
	\end{equation} 
	
	Let $c\leq\frac{1}{2}$ be a constant and we denote
	\begin{equation}
		\delta=\min\{1-3\gamma_1^\star, 1/3, p_{\text{w, uni}}^\star/2\}.\label{eq:deltadef}
	\end{equation} 
	Let $\mathbb{P}_2$ be a probability distribution with probability density function $p_2(x)$ defined as follows:
	\begin{equation}
		p_2(x)=\begin{cases}
			1-c\sqrt{1/k}, &x\leq \frac{1}{2}\delta;\\
			1, &\frac{1}{2}\delta<x\leq 1-\frac{1}{2}\delta;\\
			1+c\sqrt{1/k}, &x>1-\frac{1}{2}\delta;\\
			0, &\text{otherwise}.
		\end{cases}\label{eq:p2def}
	\end{equation}
	
	We will first bound $(\gamma_1^\star-\gamma_2^\star)^2$ (in Step 1) and $\mathbb{P}_1^{\otimes k}\wedge\mathbb{P}_2^{\otimes k}$ (in Step 2) as follows:
	
	\hspace{-10pt}\textbf{Step 1: Lower bounding $\gamma_2^\star-\gamma_1^\star$: }
	For notational simplicity, denote function $h_1(\gamma):=	\mathbb{E}_{D\sim\mathbb{P}_1}[\frac{1}{6}\max\{3\gamma, Z_D^2\}^2-\gamma\max\{3\gamma, Z_D^2\}]$ and $h_2(\gamma):=	\mathbb{E}_{D\sim\mathbb{P}_2}[\frac{1}{6}\max\{3\gamma, Z_D^2\}^2-\gamma\max\{3\gamma, Z_D^2\}]$. According to the definition of $\mathbb{P}_2$ in \eqref{eq:p2def}, for each $\gamma$, the difference between $h_1(\gamma)$ and $h_2(\gamma)$ can be computed by:
	\begin{align}
		&h_2(\gamma)-h_1(\gamma)\nonumber\\
		=&\int_{1-\delta/2}^1\frac{c}{\sqrt{k}}\mathbb{E}\left[\frac{1}{6}\max\{3\gamma, Z_D^2\}^2-\gamma\max\{3\gamma, Z_D^2\}\big|D=x\right]\text{d}x\nonumber\\
		&-\int_{0}^{\delta/2}\frac{c}{\sqrt{k}}\mathbb{E}\left[\frac{1}{6}\max\{3\gamma, Z_D^2\}^2-\gamma\max\{3\gamma, Z_D^2\}\big|D=x\right]\text{d}x\nonumber\\
		\overset{(a)}{=}&\int_{1-\delta/2}^1\frac{c}{\sqrt{k}}\mathbb{E}\left[\frac{1}{6}(Z_D^2-3\gamma)^2\mathbb{I}_{(Z_D^2\geq 3\gamma)}|D=x\right]\text{d}x\nonumber\\
		&-\int_{0}^{\delta/2}\frac{c}{\sqrt{k}}\mathbb{E}\left[\frac{1}{6}(Z_D^2-3\gamma)^2\mathbb{I}_{(Z_D^2\geq 3\gamma)}|D=x\right]\text{d}x, \label{eq:hdiff}
	\end{align}
	where inequality $(a)$ is obtained because 
	\begin{align}&\frac{1}{6}\max\{3\gamma, Z_D^2\}^2-\gamma\max\{3\gamma, Z_D^2\}\nonumber\\
		=&-\frac{3}{2}\gamma^2+\frac{1}{6}(Z_D^2-3\gamma)^2\mathbb{I}_{(Z_D^2\geq 3\gamma)}.
	\end{align}
	Since $\gamma_1^\star$ is the optimum ratio for delay distribution $\mathbb{P}_1$, we have $h_1(\gamma_1^\star)=0$. 
	%We will then show $h_2(\gamma_1^\star)\asymp\frac{1}{\sqrt{k}}$. The upper bound of $h_2(\gamma_1^\star)$ is:
	%\begin{align}
	%	h_2(\gamma_1^\star)\leq&\int_{1-\delta/2}^1\frac{c}{\sqrt{k}}\mathbb{E}\left[\frac{1}{6}(Z_D^2-3\gamma_1^\star)^2\big|D=x\right]\text{d}x\nonumber\\
	%	=&\int_{1-\delta/2}^1\frac{c}{\sqrt{k}}\mathbb{E}\left[\frac{1}{6}Z_D^4\big|D=x\right]\text{d}x	\leq \frac{1}{4}c\delta\sqrt{\frac{1}{k}}. \label{eq:h2ubf}
	%\end{align}
	%
	%
	%We then proceed to lower bound function $h_2(\gamma_1^\star)$. 
	According to equation \eqref{eq:hdiff}, function $h_2(\gamma_1^\star)$ can be lower bounded by:
	\begin{align}
		&h_2(\gamma_1^\star)\nonumber\\
		\overset{(b)}{\geq}&\frac{c}{\sqrt{k}}\cdot\int_{1-\delta/2}^1\mathbb{E}\left[\frac{1}{6}(Z_D^2-3\gamma_1^\star)^2\mathbb{I}_{(Z_D^2\geq 3\gamma_1^\star)}\big|D=x\right]\text{d}x\nonumber\\
		&-\int_{0}^{\delta/2}\frac{c}{\sqrt{k}}\frac{1}{2}x^2\text{d}x\nonumber\\
		\geq&\frac{c}{\sqrt{k}}\cdot\int_{1-\delta/2}^1\mathbb{E}\left[\frac{1}{6}(Z_D^2-3\gamma_1^\star)^2\mathbb{I}_{(Z_D^2\geq 3\gamma_1^\star)}\big|D=x\right]\text{d}x\nonumber\\
		&-\frac{c}{\sqrt{k}}\frac{1}{6}\left(\frac{\delta}{2}\right)^3. \label{eq:h2mid}
	\end{align}
	where inequality $(b)$ is because $\mathbb{E}[\frac{1}{6}(Z_D^2-3\gamma)^2\mathbb{I}_{(Z_D^2\geq 3\gamma)}|D=x]\leq\mathbb{E}[\frac{1}{6}Z_D^4|D=x]=\frac{1}{2}x^2$. 
	
	We then proceed to lower bound $\mathbb{E}\left[\frac{1}{6}(Z_D^2-3\gamma_1^\star)^2\mathbb{I}_{(Z_D^2\geq 3\gamma_1^\star)}|D=x\right]$ for each delay realization $x\in[1-\delta/2, 1]$ as follows:
	\begin{align}
		&\mathbb{E}\left[\frac{1}{6}(Z_D^2-3\gamma_1^\star)^2\mathbb{I}_{(Z_D^2\geq 3\gamma_1^\star)}|D=x\right]\nonumber\\
		\overset{(c)}{\geq}&\mathbb{E}\left[\frac{1}{6}(Z_D^2-3\gamma_1^\star)^2\mathbb{I}_{(3\gamma_1^\star\leq Z_D^2\leq x)}+\frac{1}{6}(Z_D^2-x)^2\mathbb{I}_{(Z_D^2\geq x)}|D\!=\!x\right]\nonumber\\
		\geq&\mathbb{E}\left[\frac{1}{6}(Z_D^2-3\gamma_1^\star)^2\mathbb{I}_{(3\gamma_1^\star\leq Z_D^2\leq x)}\right]\nonumber\\
		&+\frac{1}{6}\left(\text{Var}[Z_D^2|D=x]-x^2\text{Pr}\left(Z_D^2\leq x|D=x\right)\right)\nonumber\\
		\overset{(d)}{\geq}&\frac{1}{6}x^2\geq\frac{1}{6}(1-\delta/2)^2,\label{eq:h2lb}
	\end{align}
	where inequality $(c)$ is because $\delta\geq1-3\gamma_1^\star$ by equation \eqref{eq:deltadef}, and for the conditional mean $\mathbb{E}[Z_D^2|D=x]=x\geq 1-\delta/2\geq 1-\delta \geq3\gamma_1^\star$; inequality $(d)$ is because $\text{Var}[Z_D^2|D=x]=2x^2$ and $x^2\text{Pr}(Z_D^2\leq x)\leq x^2$ and $\mathbb{E}\left[\frac{1}{6}(Z_D^2-3\gamma_1^\star)^2\mathbb{I}_{(3\gamma_1^\star\leq Z_D^2\leq x)}\right]\geq 0$. 
	Plugging inequality \eqref{eq:h2lb} into \eqref{eq:h2mid} and recall that $\delta<1$ by definition, we have the lower bound of $h_2(\gamma_1^\star)$:
	\begin{align}
		h_2(\gamma_1^\star)\geq&\frac{c}{\sqrt{k}}\frac{\delta}{2}\frac{1}{6}\left(\left(1-\frac{\delta}{2}\right)^2-\left(\frac{\delta}{2}\right)^2\right)\nonumber\\
		\geq&\frac{c}{\sqrt{k}}\frac{\delta}{12}(1-\delta)>0. \label{eq:h2lbf}
	\end{align}
	By Lemma~\ref{lemma:g}-(i), function $h_2(\cdot)$ is monotonically decreasing. Since $h_2(\gamma_1^\star)> 0$ and $h_2(\gamma_2^\star)=0$, we can conclude that $\gamma_2^\star\geq\gamma_1^\star$. We then proceed to bound $\gamma_2^\star-\gamma_1^\star$ through Taylor expansion at $\gamma=\gamma_1^\star$.
	\begin{equation}
		h_2(\gamma_2^\star)=h_2(\gamma_1^\star)+h_2'(\gamma)(\gamma_2^\star-\gamma_1^\star),
	\end{equation}
	where $\gamma\in[\gamma_1^\star, \gamma_2^\star]$. Therefore, $\gamma_2^\star$ can be computed by:
	\begin{equation}
		\gamma_2^\star-\gamma_1^\star=-\frac{h_2(\gamma_1^\star)}{h_2'(\gamma)}. 
	\end{equation}
	%
	%Since $h_2'(\gamma)=-\mathbb{E}\left[\max\{Z_D^2, 3\gamma\}\right]\leq -\mathbb{E}_{\mathbb{P}_2}[D]\leq -1$, the upper bound on $\gamma_2^\star-\gamma_1^\star$ can be obtained by inequality \eqref{eq:h2ubf}:
	%\begin{equation}
	%	\gamma_2^\star-\gamma_1^\star\leq\frac{1}{4}c\delta\sqrt{\frac{1}{k}}. \label{eq:gamma2lb}
	%\end{equation}
	
	To lower bound $\gamma_2^\star$, we will first find a loose upper bound of $\gamma_2^\star$ using Lemma~\ref{coro:gammadep-bound}: \begin{equation}\gamma_2^\star\leq\frac{1}{2}\frac{\mathbb{E}_{D\sim\mathbb{P}_2}[D^2]}{\mathbb{E}_{D\sim\mathbb{P}_2}[D]}\leq\frac{1}{2}\left(\frac{1}{3}+\delta \cdot c\sqrt{1/k}\right),\label{eq:gamma2lbloose}
	\end{equation}
	Therefore, since $\delta<1/3$, we have $|h_2'(\gamma)|\leq|h_2'(\gamma_2^\star)|=\mathbb{E}[\max\{3\gamma_2^\star, Z_D^2\}]\leq\overline{D}+3\gamma_{\text{2, ub}}\leq 1+\frac{1}{2}+\frac{3}{2}c\sqrt{\frac{1}{k}}\delta\leq 2$. Then by inequality \eqref{eq:h2lbf}, we have 
	\begin{equation}
		\gamma_2^\star-\gamma_1^\star\geq\frac{-h_2(\gamma_1^\star)}{h_2'(\gamma_2^\star)}\geq\frac{1}{24}(1-\delta)\delta c\sqrt{\frac{1}{k}}. \label{eq:gamma2lb}
	\end{equation}
	
	\hspace{-10pt}\textbf{Step 2: Lower bounding $\mathbb{P}_1^{\otimes k}\wedge \mathbb{P}_2^{\otimes k}$: }Let $|\mathbb{P}-\mathbb{Q}|=\int_{\Omega}|\text{d}\mathbb{P}-\text{d}\mathbb{Q}|$ be the $\ell_1$ distance between probability distribution $\mathbb{P}$ and $\mathbb{Q}$. Then
	\begin{align}
		\mathbb{P}_1^{\otimes k}\wedge \mathbb{P}_2^{\otimes k}=&\int\min\{\mathbb{P}_1^{\otimes k}(\text{d}x), \mathbb{P}_2^{\otimes k}(\text{d}x)\}\nonumber\\
		=&\int\mathbb{P}_1^{\otimes k}(\text{d}x)\cdot\left(1-\frac{\left(\mathbb{P}_2^{\otimes k}(\text{d}x)-\mathbb{P}_1^{\otimes k}(\text{d}x)\right)^+}{\mathbb{P}_1^{\otimes k}(\text{d}x)}\right)\nonumber\\
		=&1-\int\left(\mathbb{P}_2^{\otimes k}(\text{d}x)-\mathbb{P}_1^{\otimes k}(\text{d}x)\right)^+\nonumber\\
		=&1-\frac{1}{2}|\mathbb{P}_1^{\otimes k}-\mathbb{P}_2^{\otimes k}|_1.\label{eq:wedgepq}
	\end{align}
	Equality \eqref{eq:wedgepq} enables us to lower bound $\mathbb{P}_1^{\otimes k}\wedge \mathbb{P}_2^{\otimes k}$ by upper bounding the $\ell_1$ distance $|\mathbb{P}_1^{\otimes k}-\mathbb{P}_2^{\otimes k}|_1$, which can be obtained by the Pinsker's inequality:
	\begin{align}
		&\frac{1}{2}\left|\mathbb{P}_1^{\otimes k}-\mathbb{P}_2^{\otimes k}\right|_1\nonumber\\
		\leq&\sqrt{\frac{1}{2}D_{\mathsf{KL}}(\mathbb{P}_2^{\otimes k}||\mathbb{P}_1^{\otimes k})}\nonumber\\
		=&\sqrt{\frac{1}{2}kD_{\mathsf{KL}}(\mathbb{P}_2||\mathbb{P}_1)}\nonumber\\
		\overset{(e)}{\leq}&\sqrt{\frac{1}{2}k\int_0^1p_2(x)\ln p_2(x)\text{d}x}\nonumber\\
		\overset{(f)}{\leq}&\sqrt{\frac{1}{2}k\int_0^1\left(p_2(x)-1+\frac{1}{\min\{p_2(x), 1\}}(p_2(x)-1)^2\right)\text{d}x}\nonumber\\
		\overset{(g)}{\leq}&\sqrt{\frac{1}{2}k\frac{1}{\inf_{0\leq d\leq 1} p_2(d)}\int_0^1(p_2(x)-1)^2\text{d}x}\nonumber\\
		\leq&\sqrt{\frac{1}{2}k\frac{1}{1-c\sqrt{1/k}}\delta\frac{c^2}{k}}\leq\sqrt{\delta c^2},\label{eq:klub-last}
	\end{align}
	where inequality $(e)$ is because the density function $p_1(x)=1$ for uniform distribution, therefore $D_{\mathsf{KL}}(\mathbb{P}_2||\mathbb{P}_1)=\int_0^1p_2(x)\ln p_2(x)\text{d}x$; inequality $(f)$ is because function $g(t):=(t\ln t)$ is convex, its derivative $g(t)''=1/t$, therefore, through Taylor expansion we have $g(t)\leq g(1)+(t-1)+\frac{1}{2}\frac{1}{\min\{t, 1\}}(t-1)^2=(t-1)+\frac{1}{2}\frac{1}{\min\{t, 1\}}(t-1)^2$; inequality $(g)$ is because $\int_0^1p_2(x)\text{d}x=1$. 
	
	By choosing $c=1/2$ and recall that $\delta<1$, inequality \eqref{eq:klub-last} can be upper bounded by:
	\begin{equation}
		\frac{1}{2}|\mathbb{P}_1^{\otimes k}-\mathbb{P}_2^{\otimes k}|_1\leq\frac{1}{2}. \label{eq:klub}
	\end{equation}
	
	Plugging \eqref{eq:klub} into \eqref{eq:wedgepq} yields:
	\begin{equation}
		\mathbb{P}_1^{\otimes k}\wedge \mathbb{P}_2^{\otimes k}\geq\frac{1}{2}. \label{eq:infub}
	\end{equation}
	
	Finally, plugging \eqref{eq:infub} and \eqref{eq:gamma2lb} into the Le Cam's inequality \eqref{eq:lecam-gamma} finishes the proof of inequality \eqref{eq:converse-est}:
	\begin{equation}
		\inf_{\hat{\gamma}}\sup_{\mathbb{P}}(\hat{\gamma}- \gamma^{\star}(\mathbb{P}))^2\geq\frac{1}{2}\left(\frac{1}{24}(1-\delta)\delta p_{\text{w,uni}}^\star\right)^2\cdot\frac{1}{k}=:N.\label{eq:minimaxf}
	\end{equation} 
	\subsubsection{Proof of Inequality \eqref{eq:regconverse}}
	The proof is divided in to three step: First we decomposite the cumulative MSE gap up to $S_{k+1}$ into the cumulative MSE gap within each frame, and then lower bound the MSE regret in each frame using the difference between frame-length $L_k$ and the optimum frame-length $l^\star(\mathbb{P})$; then we obtain the minimax lower bound of $(\mathbb{E}[L_k]-l^\star(\mathbb{P}))^2$ and finish the proof. 

\noindent\textbf{Step 1}: Cumulative MSE decomposition:
\begin{align}
	&\mathbb{E}\left[\int_0^{S_{k+1}}(X_t-\hat{X}_t)^2\text{d}t-(\gamma^\star+\overline{D})S_{k+1}\right]\nonumber\\
	=&\sum_{k'=1}^k\mathbb{E}\left[\int_{S_{k'}}^{S_{k'+1}}(X_t-\hat{X}_t)^2\text{d}t-(\gamma^\star+\overline{D})L_{k'}\right]\nonumber\\
	=&\sum_{k'=1}^k\bigg(\mathbb{E}\bigg[\int_{S_{k'}}^{S_{k'}+D_{k'}}(X_t-X_{S_{k'-1}})^2\text{d}t\nonumber\\
	&+\int_{S_{k'}+D_{k'}}^{S_{k'+1}}(X_t-X_{S_{k'}})^2\text{d}t-(\gamma^\star+\overline{D})L_{k'}\bigg]\bigg)\nonumber\\
	=&\sum_{k'=1}^k\bigg(\mathbb{E}\bigg[\int_{S_{k'}}^{S_{k'}+D_{k'}}(X_t-X_{S_{k'}}+X_{S_{k'}}-X_{S_{k'-1}})^2\text{d}t\nonumber\\
	&+\int_{S_{k'}+D_{k'}}^{S_{k'+1}}(X_t-X_{S_{k'}})^2\text{d}t-(\gamma^\star+\overline{D})L_{k'}\bigg]\bigg)\nonumber\\
	\overset{(a)}{=}&\sum_{k'=1}^k\bigg(\mathbb{E}\bigg[\int_{S_{k'}}^{S_{k'}+D_{k'}}(X_t-X_{S_{k'}})^2\text{d}t+(X_{S_{k'}}-X_{S_{k'-1}})^2D_{k'}\nonumber\\
	&+\int_{S_{k'}+D_{k'}}^{S_{k'+1}}(X_t-X_{S_{k'}})^2\text{d}t-(\gamma^\star+\overline{D})L_{k'}\bigg]\bigg)\nonumber\\
	=&\sum_{k'=1}^{k}\left(\mathbb{E}\left[\int_{S_{k'}}^{S_{k'+1}}(X_t-X_{S_{k'}})^2\text{d}t-\gamma^\star L_{k'}\right]\right)\nonumber\\
	&+\sum_{k'=1}^k\mathbb{E}\left[(X_{S_{k'}}-X_{S_{k'-1}})^2D_{k'}-\overline{D} L_{k'}\right]\nonumber\\
	\overset{(b)}{=}&\sum_{k'=1}^{k}\left(\mathbb{E}\left[\int_{S_{k'}}^{S_{k'+1}}(X_t-X_{S_{k'}})^2\text{d}t-\gamma^\star L_{k'}\right]\right)\nonumber\\
	&+\sum_{k'=1}^k\left(\mathbb{E}[\overline{D}L_{k'-1}-\overline{D} L_{k'}]\right)\nonumber\\
	=&\sum_{k'=1}^{k}\underbrace{\mathbb{E}\left[\int_{S_{k'}}^{S_{k'+1}}(X_t-X_{S_{k'}})^2\text{d}t-\gamma^\star L_{k'}\right]}_{=:\Upsilon_k}-\overline{D}\mathbb{E}[L_k],\label{eq:decompose}
\end{align}
where equation $(a)$ is because for $\mathbb{E}[(X_t-X_{S_{k'}})(X_{S_{k'}}-X_{S_{k'-1}})]=0$, equation $(b)$ is because $D_{k'}$ is independent of $X_{S_{k'}}-X_{S_{k'-1}}$ and $\mathbb{E}[D_{k'}]=\overline{D}$, $\mathbb{E}[(X_{S_{k'}}-X_{S_{k'-1}})^2]=\mathbb{E}[S_{k'}-S_{k'-1}]=L_{k'-1}$. 

We then proceed to lower bound each item $\Upsilon_k$ in equation \eqref{eq:decompose} using the following Lemma:
\begin{lemma}\label{lemma:regrettoerror}
	For each sample policy $\pi$ with a random sampling interval $\tau$, let $l_\pi:=\mathbb{E}[\tau]=\mathbb{E}[Z_\tau^2]$ denote the expected running length. Recall that $\gamma^\star(\mathbb{P}), l^\star(\mathbb{P})$ are the optimum ratio and optimum frame length when delay distribution $D\sim\mathbb{P}$ and  $p_w(\mathbb{P}):=\text{Pr}\left(Z_D^2\leq3\gamma^\star(\mathbb{P})\right)$ be the probability of waiting
	, the following inequality holds:
	\begin{equation}
		\mathbb{E}\left[\int_{t=0}^{\tau}Z_t^2\text{d}t\right]-\gamma^\star(\mathbb{P})\mathbb{E}[\tau]\geq\frac{1}{6}p_w(\mathbb{P})\left(l_\pi-l^\star(\mathbb{P})\right)^2,\label{eq:regrettoerror}
	\end{equation}
	where $l^\star(\mathbb{P}):=\mathbb{E}_{D\sim\mathbb{P}}[\max\{3\gamma^\star(\mathbb{P}), Z_D^2\}]$ is the average frame length when the optimum policy $\pi^\star(\mathbb{P})$ is used. 
\end{lemma}

Proof for Lemma~\ref{lemma:regrettoerror} is provided in Appendix~\ref{pf:regrettoerror}. Notice that $X_t-X_{S_k}$ is a Wiener Process starting from time $t=S_k$, $\mathcal{H}_{k-1}$ records the previous delay and Wiener process evolution at the beginning of frame $k$. Since the sampling policy in frame $k$ depends on $\mathcal{H}_{k-1}$ and $\delta X_{k}=X_{S_k+R_k}-X_{S_k}$, we can lower bound the worst case regret of $\Upsilon_k$ as follows:
% For any policy $\pi$ that has access to $k$ i.i.d samples $\{\delta X_1, \cdots , \delta X_k\}\overset{\text{i.i.d}}{\sim}\mathbb{P}$, we can obtain the minimax bound of term $\mathbb{E}\left[\int_{t=0}^{\tau}Z_t^2\text{d}t\right]-\gamma^\star(\mathbb{P})\mathbb{E}[\tau]\geq\frac{1}{6}p_w(\mathbb{P})\left(l_\pi-l^\star(\mathbb{P})\right)^2$ by taking the minimax operator on both sides of inequality \eqref{eq:regrettoerror}:
\begin{align}
	&\inf_{\pi}\sup_{\mathbb{P}}\Upsilon_k\nonumber\\
	=&\inf_\pi\sup_{\mathbb{P}}\mathbb{E}\left[\int_{t=S_k}^{S_{k+1}}(X_t-X_{S_{k}})^2\text{d}t-\gamma^\star(\mathbb{P})L_k\right]\nonumber\\
	\geq&\inf_{\pi}\sup_{\mathbb{P}}\frac{1}{6}p_w(\mathbb{P})\mathbb{E}_{\mathcal{H}_{k-1}}\left[\left(\mathbb{E}[L_k|\mathcal{H}_{k-1}]-l^\star(\mathbb{P})\right)^2\right]\nonumber\\
	\geq&\inf_{\pi}\max_{\mathbb{P}\in\{\mathbb{P}_1, \mathbb{P}_2\}}\frac{1}{6}p_w(\mathbb{P})\mathbb{E}_{\mathcal{H}_{k-1}}\left[\left(\mathbb{E}[L_k|\mathcal{H}_{k-1}]-l^\star(\mathbb{P})\right)^2\right]\nonumber\\
	\geq&\frac{1}{6}\underbrace{\min\{p_w(\mathbb{P}_1), p_w(\mathbb{P}_2)\}}_{=:H_1}\nonumber\\
	&\times\underbrace{\inf_{\pi}\max_{\mathbb{P}\in\{\mathbb{P}_1, \mathbb{P}_2\}}\mathbb{E}_{\mathcal{H}_{k-1}}\left[\left(\mathbb{E}[L_k|\mathcal{H}_{k-1}]-l^\star(\mathbb{P})\right)^2\right]}_{=:H_2}. \label{eq:regrettoerrormax}
\end{align}

Inequality \eqref{eq:regrettoerrormax} works for any distribution $\mathbb{P}_1$ and $\mathbb{P}_2$. We select $\mathbb{P}_1$ to be the uniform distribution over interval $[0, 1]$ and $\mathbb{P}_2$ using equation \eqref{eq:p2def}. Then the first term $H_1$ in \eqref{eq:regrettoerrormax} can be lower bounded by:
\begin{align}
	H_1&=\min\{p_w(\mathbb{P}_1), p_w(\mathbb{P}_2)\}\nonumber\\
	&=\min\{\text{Pr}\left(Z_D^2\geq 3\gamma_1^\star|D\sim\mathbb{P}_1\right), \text{Pr}\left(Z_D^2\geq 3\gamma_2^\star|D\sim\mathbb{P}_2\right)\}\nonumber\\
	&\overset{(c)}{\geq}\min\{\frac{\mathbb{E}_{\mathbb{P}_1}[Z_D^2]}{3\gamma_1^\star}, \frac{\mathbb{E}_{\mathbb{P}_2}[Z_D^2]}{3\gamma_2^\star}\}\nonumber\\
	&\overset{(d)}{\geq}\min\{\frac{\mathbb{E}_{\mathbb{P}_1}[D]}{3\times\frac{1}{3}}, \frac{\mathbb{E}_{\mathbb{P}_2}[D]}{3\times\frac{7}{24}}\}\nonumber\\
	&\overset{(e)}{\geq}\min\{1/2, 4/7\}=1/2,\label{eq:h1lb}
\end{align} 
where inequality $(c)$ is by Markov inequality; inequality $(d)$ is because $\mathbb{E}[Z_D^2]=\mathbb{E}[D]$ by the optimal stopping theorem, $\gamma_1^\star\leq\frac{1}{3}$ from \eqref{eq:gamma1lb} and $\gamma_2^\star\leq\frac{1}{2}\left(\frac{1}{3}+\delta\cdot c\sqrt{1/k}\right)\leq\frac{7}{24}$; inequality $(e)$ is because $\mathbb{E}_{\mathbb{P}_1}[D]=1/2$ for uniform distribution $\mathbb{P}_1$ and $\mathbb{E}_{\mathbb{P}_2}[D]\geq\mathbb{E}_{\mathbb{P}_1}[D]=1/2$ due to the distribution of $\mathbb{P}_2$ in equation \eqref{eq:p2def}. It then remains to prove that the second term $H_2$ in \eqref{eq:regrettoerrormax}.

\noindent\textbf{Step 2}: Since $L_k$ is made using $k$ i.i.d samples $\delta X^{\otimes k}=\{\delta X_k=(X_{S_k+D_k}-X_{S_k})\}$, where $\delta X^{\otimes (k-1)}$ are from $\mathcal{H}_{k-1}$ and $\delta X_k=X_{S_k+R_k}-X_{S_k}$, $\mathbb{E}[L_k|\mathcal{H}_{k-1}]$ can be viewed as an deterministic estimator for the corresponding $l^\star(P)$. Let $\hat{l}:\mathbb{R}^k\mapsto\mathbb{R}^+$ an arbitrary deterministic estimation function, term $H_2$ in equation \eqref{eq:regrettoerrormax} is equivalent to:
\begin{align}
	H_2=\inf_{\hat{l}}\max\Big(&\mathbb{E}_{\mathbb{P}_1}[(\hat{l}(\delta X^{\otimes k})-l^\star(\mathbb{P}))^2],\nonumber\\
	& \mathbb{E}_{\mathbb{P}_2}[(\hat{l}(\delta X^{\otimes k})-l^\star(\mathbb{P}))^2]\Big). \label{eq:h2equiv}
\end{align}

To obtain the lower bound of \eqref{eq:h2equiv}, we come up with the following optimization problem:
\begin{pb}{}
	\label{pb:minimax}
	\begin{subequations}
	\begin{align}
		\epsilon^\star:=&\min_{\epsilon, \hat{l}} \epsilon, \\
		\text{s.t., }&\mathbb{E}_{\mathbb{P}_1}[(\hat{l}(\delta X^{\otimes k})-l_1^\star)^2]\leq\epsilon,\\
		&\mathbb{E}_{\mathbb{P}_2}[(\hat{l}(\delta X^{\otimes k})-l_2^\star)^2]\leq\epsilon.
	\end{align}
\end{subequations}
\end{pb}

The minimum $\epsilon^\star$ satisfies:
\begin{equation}
	\epsilon^\star\geq\frac{1}{6}\left(\frac{1}{24}(1-\delta)\delta p_{\text{w, uni}}^\star\right)^2\frac{1}{k}.\label{eq:h2lbfinal}
\end{equation}

Detailed proof is provided in Appendix~\ref{pf:pbminimax}.

\noindent\textbf{Step 3:} Plugging \eqref{eq:h1lb} and \eqref{eq:h2lbfinal} into \eqref{eq:regrettoerrormax}, we have:
\begin{align}
	&\inf_{\pi}\sup_{\mathbb{P}}\Upsilon_k\geq\frac{1}{24}\left(\frac{1}{24}(1-\delta)\delta p_{\text{w, uni}}^\star\right)^2\frac{1}{k}.\label{eq:minimaxl}
\end{align}

Summing up $\Upsilon_k$ from $k'=\{1, 2, \cdots, k\}$ and plugging \eqref{eq:minimaxl} into \eqref{eq:decompose}, we have:
\begin{align}
	&\inf_\pi\mathbb{E}\left[\int_0^{S_{k+1}}(X_t-\hat{X}_t)^2\text{d}t-(\gamma^\star+\overline{D})S_{k+1}\right]\nonumber\\
	\geq&\sum_{k'=1}^k\inf_\pi\sup_\mathbb{P}\Upsilon_k-\overline{D}\mathbb{E}[L_k]\nonumber\\
	=&\sum_{k'=1}^k\inf_\pi\sup_\mathbb{P}\Upsilon_k-\overline{D}(\mathbb{E}[L_k]-l^\star(\mathbb{P}))-\overline{D}l^\star(\mathbb{P})\nonumber\\
	\geq&\frac{1}{24}\left(\frac{1}{24}(1-\delta)\delta p_{\text{w, uni}}^\star\right)^2\times\left(\sum_{k'=1}^k\frac{1}{k'}\right)=\Omega(\ln k). 
\end{align}

\section{Solution to Problem~\ref{pb:minimax}}\label{pf:pbminimax}
	We use the Lagrange method for solving the optimization problem. Let $\rho(\cdot):\mathbb{R}^k\mapsto\mathbb{R}$ and $\lambda_1, \lambda_2\geq 0$ be Lagrange multipliers, the Lagrange function for solving Problem~\ref{pb:minimax} is as follows:
\begin{align}
	\mathcal{L}(\varepsilon, \hat{l}, \lambda_1, \lambda_2)=&\varepsilon+\lambda_1(\mathbb{E}_{\mathbb{P}_1}[(\hat{l}(\delta X^{\otimes k})-l_1^\star)^2]-\varepsilon)\nonumber\\
	&+\lambda_2(\mathbb{E}_{\mathbb{P}_2}[(\hat{l}(\delta X^{\otimes k})-l_2^\star)^2]-\varepsilon). 
\end{align}
The Gâteaux derivative of the Lagrange $\mathcal{L}$ in the direction of $\rho(\cdot):\mathbb{R}^k\mapsto \mathbb{R}$ is defined as
\begin{align}
	&\delta \mathcal{L}(\hat{l}; \varepsilon, \lambda_1, \lambda_2, \rho)\nonumber\\
	:=&\lim_{\epsilon\rightarrow0}\frac{\mathcal{L}(\varepsilon, \hat{l}+\epsilon\rho, \lambda_1, \lambda_2)-\mathcal{L}(\varepsilon, \hat{l}, \lambda_1, \lambda_2)}{\epsilon}\nonumber\\
	=&2\rho(\delta X^{\otimes k})\Big(\lambda_1p_1(\delta X^{\otimes k})(\hat{l}(\delta X^{\otimes k})-l_1^\star)\nonumber\\
	&+\lambda_2 p_2(\delta X^{\otimes k})(\hat{l}(\delta X^{\otimes k})-l_2^\star)\Big). 
\end{align}

Let $(\hat{l}^\star, \varepsilon^\star, \lambda_1^\star, \lambda_2^\star)$ be the dual optimizer. To satisfy the KKT condition, we require:
\begin{subequations}
	\begin{align}
		&\delta\mathcal{L}(\hat{l};\varepsilon^\star, \lambda_1^\star, \lambda_2^\star, \rho)\Big|_{\hat{l}=\hat{l}^\star}=0,\forall \rho, \label{eq:grad0}\\
		&\frac{\partial \mathcal{L}(\hat{l};\varepsilon^\star, \lambda_1^\star, \lambda_2^\star, \rho) }{\partial \varepsilon}\Big|_{\varepsilon=\varepsilon^\star}=1-(\lambda_1^\star+\lambda_2^\star)=0,\label{eq:kkt-1} \\
	\end{align}
and the Complete Slackness (CS) condition require:
\begin{align}
		&\lambda_1^\star\left(\mathbb{E}_{\mathbb{P}_1}\left[\left(\hat{l}^{\star}(\delta X^{\otimes k})-l_1^\star\right)^2\right]-\varepsilon^\star\right)=0, \label{eq:cs-1}\\
		&\lambda_2^\star \left(\mathbb{E}_{\mathbb{P}_2}\left[\left(\hat{l}^\star(\delta X^{\otimes k})-l_2^\star\right)^2\right]-\varepsilon^\star\right)=0,  \label{eq:cs-2}
	\end{align}

The KKT condition in equation \eqref{eq:grad0} implies the optimum estimator $\hat{l}^\star$ is:
\begin{equation}
	\hat{l}^\star(\delta X^{\otimes k})=\frac{\lambda_1^\star p_1(\delta X^{\otimes k})l_1^\star+\lambda_2^\star p_2(\delta X^{\otimes k})l_2^\star}{\lambda_1^\star p_1(\delta X^{\otimes k})+\lambda_2^\star p_2(\delta X^{\otimes k})}, \label{eq:muopt}
\end{equation}
and equation \eqref{eq:kkt-1} requires:
\begin{equation}
	\lambda_1^\star+\lambda_2^\star=1. 
\end{equation}
\end{subequations}
It can be verified that $\lambda_1^\star\neq 0$ and $\lambda_2^\star\neq 0$ because if $\lambda_1^\star=0$, to satisfy equation \eqref{eq:muopt}, we have $\hat{l}^\star(\delta X^{\otimes k})\equiv l_2^\star$. Then $\epsilon^\star=(l_2^\star-l_1^\star)^2$ is clearly not the optimum value. Then for fixed $\lambda_1, \lambda_2$, by plugging function \eqref{eq:muopt} into \eqref{eq:cs-1} and \eqref{eq:cs-2}, we have:
\begin{align}
	\epsilon^\star=&\mathbb{E}_{\mathbb{P}_1}\left[(\hat{l}^\star(\delta X^{\otimes k})-l_1^\star)^2\right]\nonumber\\
	=&(l_2^\star-l_1^\star)^2\int\frac{(\lambda_2^\star p_2(\delta X^{\otimes k}))^2p_1(\delta X^{\otimes k})}{(\lambda_1^\star p_1(\delta X^{\otimes k})+\lambda_2^\star p_2(\delta X^{\otimes k}))^2}\text{d}\delta X^{\otimes k}, \label{eq:eps-1} \\
	\epsilon^\star=&\mathbb{E}_{\mathbb{P}_2}\left[(\hat{l}^\star(\delta X^{\otimes k})-l_2^\star)^2\right]\nonumber\\
	=&(l_2^\star-l_1^\star)^2\int\frac{(\lambda_1^\star p_1(\delta X^{\otimes k}))^2 p_2(\delta X^{\otimes k})}{(\lambda_1^\star p_1(\delta X^{\otimes k})+\lambda_2^\star p_2(\delta X^{\otimes k}))^2}\text{d}\delta X^{\otimes k}. \label{eq:eps-2}
\end{align}

Since $\lambda_1^\star+\lambda_2^\star=1$, \eqref{eq:eps-1} and \eqref{eq:eps-2} imply:
\begin{align}
	\varepsilon^\star&=\lambda_1^\star\varepsilon^\star+\lambda_2^\star\varepsilon^\star\nonumber\\
	&=\lambda_1^\star\mathbb{E}_{\mathbb{P}_1}\left[(\hat{l}^\star(\delta X^{\otimes k})-l_1^\star)^2\right]+\lambda_2^\star\mathbb{E}_{\mathbb{P}_2}\left[(\hat{l}^\star(\delta X^{\otimes k})-l_2^\star)^2\right]\nonumber\\
	&=(l_2^\star-l_1^\star)^2\int\frac{\lambda_1^\star p_1(\delta X^{\otimes k})\times\lambda_2^\star p_2(\delta X^{\otimes k})}{\lambda_1^\star p_1(\delta X^{\otimes k})+\lambda_2^\star p_2(\delta X^{\otimes k})}\text{d}\delta X^{\otimes k}\nonumber\\
	&\overset{(f)}{\geq}(l_2^\star-l_1^\star)^2\int\frac{1}{2}\min\{\lambda_1^\star p_1(\delta X^{\otimes k}), \lambda_2^\star p_2(\delta X^{\otimes k})\}\text{d}\delta X^{\otimes k}\nonumber\\
	&\geq\frac{1}{2}(l_2^\star-l_1^\star)^2\min\{\lambda_1^\star, \lambda_2^\star\}\left(\mathbb{P}_1^{\otimes k}\wedge\mathbb{P}_2^{\otimes k}\right)\label{eq:minimaxbound}. 
\end{align}
where inequality $(f)$ is because $\frac{a\times b}{a+b}\geq\frac{1}{2}\min\{a, b\}$. 

Next, we bound each term in \eqref{eq:minimaxbound} respectively. 

\noindent\textbf{Term 1} The lower bound of $l_2^\star-l_1^\star$ is as follows: \begin{align}
	l_2^\star-l_1^\star
	=&\int_{0}^1\mathbb{E}\left[\max\{3\gamma_2^\star, Z_D^2\}|D=x\right]\text{d}x\nonumber\\
	&+\int_{1-\delta/2}^1\frac{c}{\sqrt{k}}\mathbb{E}\left[\max\{3\gamma_2^\star, Z_D^2\}|D=x\right]\text{d}x\nonumber\\
	&-\int_{0}^{\delta/2}\frac{c}{\sqrt{k}}\mathbb{E}\left[\max\{3\gamma_2^\star, Z_D^2\}|D=x\right]\text{d}x\nonumber\\
	&-\int_{0}^1\mathbb{E}\left[\max\{3\gamma_1^\star, Z_D^2\}|D=x\right]\text{d}x.\label{eq:ldiff}
\end{align}

Notice that if $x_1\geq x_2$, 
\begin{align}
	\mathbb{E}[\max\{3\gamma, Z_D^2\}|D=x_1]-\mathbb{E}[\max\{3\gamma, Z_D^2\}|D=x_2]\geq 0.
\end{align} Therefore, inequality \eqref{eq:ldiff} can be bounded by:
\begin{align}
	l_2^\star-l_1^\star\geq&\int_{0}^1\mathbb{E}\left[\max\{3\gamma_2^\star, Z_D^2\}|D=x\right]\text{d}x\nonumber\\
	&-\int_{0}^1\mathbb{E}\left[\max\{3\gamma_1^\star, Z_D^2\}|D=x\right]\text{d}x\nonumber\\
	\geq&3(\gamma_2^\star-\gamma_1^\star)\mathbb{E}_{D\sim\mathbb{P}_1}[\text{Pr}(Z_D^2\leq 3\gamma_1^\star)]\nonumber\\
	\overset{(g)}{\geq}&\frac{1}{24}(1-\delta)\delta c p_{\text{w, uni}}^\star\sqrt{\frac{1}{k}},\label{eq:llb}
\end{align}
where inequality $(g)$ is obtained by equation \eqref{eq:gamma2lb}. 

\noindent\textbf{Term 2} To lower bound $\min\{\lambda_1, \lambda_2\}$, recall that equation \eqref{eq:eps-1} equals \eqref{eq:eps-2}, we have:
\begin{align}
	&(\lambda_2^\star)^2\int \frac{p_1(\delta X^{\otimes k})\times p_2(\delta X^{\otimes k})}{\lambda_1^\star p_1(\delta X^{\otimes k})+\lambda_2^\star p_2(\delta X^{\otimes k})}p_2(\delta X^{\otimes k})\text{d}\delta X^{\otimes k}\nonumber\\
	=&(\lambda_1^\star)^2\int \frac{p_1(\delta X^{\otimes k})\times p_2(\delta X^{\otimes k})}{\lambda_1^\star p_1(\delta X^{\otimes k})+\lambda_2^\star p_2(\delta X^{\otimes k})}p_1(\delta X^{\otimes k})\text{d}\delta X^{\otimes k}. \label{eq:lambdabd}
\end{align}
Equation \eqref{eq:lambdabd} implies we can upper and lower bound $\lambda_1/\lambda_2$ as follows:
\begin{equation}
	\inf\sqrt{\frac{p_2(\delta X^{\otimes k})}{p_1(\delta X^{\otimes k})}}\leq\frac{\lambda_1}{\lambda_2}\leq\sup\sqrt{\frac{p_2(\delta X^{\otimes k})}{p_1(\delta X^{\otimes k})}}. 
\end{equation}

According to the density function defined in \eqref{eq:p2def}, we have:
\begin{equation}
	\sqrt{1-c\sqrt{\frac{1}{k}}}\leq\frac{\lambda_1}{\lambda_2}\leq\sqrt{1+c\sqrt{\frac{1}{k}}}
\end{equation}

Since $c\leq1/2$, we have $\sqrt{1/2}\leq\frac{\lambda_1}{\lambda_2}\leq\sqrt{3/2}$ and therefore \begin{equation}
	\min\{\lambda_1, \lambda_2\}\geq1/3.\label{eq:lambdamin}
\end{equation}.

Finally, $\left(\mathbb{P}_1^{\otimes k}\wedge\mathbb{P}_2^{\otimes k}\right)\geq1/2$ according to \eqref{eq:infub}. Plugging \eqref{eq:llb} and \eqref{eq:lambdamin} into \eqref{eq:minimaxbound}, we have:
\begin{equation}
	H_2\geq\frac{1}{6}\left(\frac{1}{24}(1-\delta)\delta cp_{\text{w, uni}}^\star\right)^2\frac{1}{k}.
\end{equation}

\section{Proof of Theorem~\ref{thm:mse-as}}\label{pf:mse-as}
%Since $\{\gamma_k\}$ converges to $\gamma^\star$ almost surely, we have the following claim:
%\begin{claim}\label{claim:fbdd}
%	For each polynomial $f_p(\cdot)$, the expectation $\sup_k\mathbb{E}[f_p(\gamma_k)]$ is bounded, i.e.,  \begin{equation}
	%	\sup_k\mathbb{E}[f_p(\gamma_k)]<\Gamma_f. \label{eq:fbdd}
	%	\end{equation} 
%\end{claim}
%\begin{proof} According to \cite[Proposition 2.6\& Proposition 2.7]{hajek_2015}, for each bounded function $f(\cdot)$, sequence $\lim_{k\rightarrow\infty}\mathbb{E}[f(\gamma_k)]=\mathbb{E}\left[f(\gamma_{\mathsf{dep}}^\star)\right]$. Therefore, for each function $f$, there exists $K_f$ so that $\left|\mathbb{E}[f(\gamma_k)]-f(\gamma_{\mathsf{dep}}^\star)\right|\leq |f(\gamma_{\mathsf{dep}}^\star)|, \forall k\geq K_f$. Then, denote $\Gamma_f:=\max\{\{\mathbb{E}[f(\gamma_k)]\}_{k\leq K_f}, 2|f(\gamma_{\mathsf{dep}}^\star)|\}$ and inequality \eqref{eq:fbdd} is verified. \end{proof}

Notice that the waiting time $W_k\geq 0, \forall k$, we have:
\begin{equation}
	\liminf_{k\rightarrow\infty}\frac{1}{k}\sum_{k'=1}^k(D_{k'}+W_{k'})\geq\liminf_{k\rightarrow\infty}\frac{1}{k}\sum_{k'=1}^kD_{k'}=\overline{D}>0, \text{w.p.1}.
\end{equation}

Therefore, to show sequence $\{\frac{\int_{0}^{S_{k+1}}(X_t-\hat{X}_t)^2\text{d}t}{S_{k+1}}\}$ converges to $\overline{\mathcal{E}}_{\pi^\star}$ with probability 1, it is sufficient to show that the following sequence
\begin{align}
	\theta_k:=&\frac{1}{k}\int_0^{S_{k+1}}(X_t-\hat{X}_t)^2\text{d}t-(\gamma^\star+\overline{D})S_{k+1}\nonumber\\
	=&\frac{1}{k}\sum_{k'=1}^k\left(\int_{S_{k'}}^{S_{k'+1}}(X_t-\hat{X}_t)^2\text{d}t-(\gamma^\star+\overline{D})L_{k'}\right)
\end{align}
converges to 0 with probability 1.

Recall that $E_{k'}=\int_{S_{k'}}^{S_{k'+1}}(X_t-\hat{X}_t)^2\text{d}t$ is the cumulative error in frame $k'$, we can rewrite $\theta_k$ in the following recursive form:
\begin{align}
	\theta_k=&\frac{1}{k}\left((k-1)\theta_{k-1}+E_k-(\gamma^\star+\overline{D})L_k\right)\nonumber\\
	=&\theta_{k-1}+\frac{1}{k}\left(-\theta_{k-1}+E_k-(\gamma^\star+\overline{D})L_k\right).\label{eq:G-def}
\end{align}

For notational simplicity, denote $G_k:=\left(-\theta_{k-1}+E_k-(\gamma^\star+\overline{D})L_k\right)$, which can be viewed as the descent direction and can be further decomposed into:
\begin{align}
	G_k=&-\theta_{k-1}+\int_{S_{k}}^{S_{k}+D_k}(X_t-X_{S_{k-1}})^2\text{d}t\nonumber\\
	&+\int_{S_k+D_k}^{S_k+D_k+W_k}(X_t-X_{S_k})^2\text{d}t-(\gamma^\star+\overline{D})L_k\nonumber\\
	=&-\theta_{k-1}+\int_{S_k}^{S_k+D_k}(X_t-X_{S_k}+X_{S_k}-X_{S_{k-1}})^2\text{d}t\nonumber\\
	&+\int_{S_k}^{S_{k+1}}(X_t-X_{S_k})^2\text{d}t-(\gamma^\star+\overline{D})L_k\nonumber\\
	=&-\theta_{k-1}+\underbrace{(X_{S_k}-X_{S_{k-1}})^2D_k}_{=:G_{k, 1}}\nonumber\\
	&+2\underbrace{(X_{S_k}-X_{S_{k-1}})\cdot\int_{S_k}^{S_k+D_k}(X_t-X_{S_k})\text{d}t}_{=:G_{k, 2}}\nonumber\\
	&+\underbrace{\int_{S_k}^{S_{k+1}}(X_t-X_{S_k})^2\text{d}t}_{=:G_{k, 3}}-\underbrace{(\gamma^\star+\overline{D})L_k}_{=:G_{k, 4}}.\label{eq:theta-recurse}
\end{align}

Give historical transmissions $\mathcal{H}_{k-1}$, $\gamma_k$ can be predicted and $X_{S_k}-X_{S_{k-1}}$ is fixed, $X_t-X_{S_k}$ evolves like a Wiener process and is independent of $X_{S_k}-X_{S_{k-1}}$. Therefore, the conditional mean of $G_{k, 1}, \cdots, G_{k, 4}$ can be computed as follows:
\begin{subequations}
	\begin{align}
		\mathbb{E}_k\left[G_{k, 1}\right]=&\overline{D}(X_{S_k}-X_{S_{k-1}})^2,\label{eq:g1}\\
		\mathbb{E}_k\left[G_{k, 2}\right]=&0,\label{eq:g2}\\
		\mathbb{E}_k\left[G_{k, 3}\right]=&\frac{1}{6}\mathbb{E}_k\left[\max\{3\gamma_k, Z_D^2\}^2\right]=q(\gamma_k),\label{eq:g3}\\
		\mathbb{E}_k\left[G_{k, 4}\right]=&(\gamma^\star+\overline{D})\mathbb{E}_k\left[\max\{3\gamma_k, Z_D^2\}\right]=(\gamma^\star+\overline{D})l(\gamma_k).\label{eq:g4}
	\end{align}
\end{subequations}
where equation \eqref{eq:g1} is because $D_k$ is independent of $X_{S_k}-X_{S_{k-1}}$; equation \eqref{eq:g2} is because $X_t-X_{S_k}$ is independent of $X_{S_k}-X_{S_{k-1}}$ and has mean 0 for all $t\geq S_k$; equation \eqref{eq:g3} and \eqref{eq:g4} is because of Lemma~\ref{lemma:cond-l}. With equation \eqref{eq:g1}-\eqref{eq:g4}, given historical transmissions $\mathcal{H}_{k-1}$, we can compute the conditional expectation of $G_k$ as follows:
\begin{align}
	&\mathbb{E}_k[G_k]\nonumber\\
	=&\mathbb{E}_k\left[-\theta_{k-1}+G_{k, 1}+2G_{k, 2}+G_{k, 3}-G_{k, 4}\right]\nonumber\\
	=&-\theta_{k-1}+(X_{S_k}-X_{S_{k-1}})^2\overline{D}+q(\gamma_k)-(\gamma^\star+\overline{D})l(\gamma_k)\nonumber\\
	=&-\theta_{k-1}+q(\gamma_k)-\gamma_k l(\gamma_k)+\underbrace{\overline{D}\left(l(\gamma_{k-1})-l(\gamma_{k})\right)}_{=:\beta_{k, 1}}\nonumber\\
	&+\underbrace{\overline{D}\left(\left(X_{S_k}-X_{S_{k-1}}\right)^2-l(\gamma_{k-1})\right)}_{=:\beta_{k, 2}}+\underbrace{(\gamma_k-\gamma^\star)l(\gamma_k)}_{=:\beta_{k, 3}}.\label{eq:beta-def}
\end{align}

Denote function
\begin{equation}
	f(\theta, \gamma):=-\theta+\mathbb{E}\left[\frac{1}{6}\max\{3\gamma, Z_D^2\}^2-\gamma\max\{3\gamma, Z_D^2\}\right],\label{eq:f}
\end{equation}
and let function $\overline{f}(\cdot)$ be:
\begin{equation}
	\overline{f}(\theta):=f(\theta, \gamma^\star). \label{eq:fbar}
\end{equation}

In the following analysis, we will prove that sequence $\{\theta_k\}$ converges to the stationary point of an ODE induced by function $\overline{f}(\theta)$. Let $\delta M_{k}:=G_k-\mathbb{E}_k[G_{k}]$ and let $\delta M_{k, i}:=G_{k, i}-\mathbb{E}_k\left[G_{k, i}\right]$ be the difference between each term and their conditional mean. We view $\frac{1}{k}=:\epsilon_k$ as the updating step-sizes, which satisfies:
\begin{equation}
	\sum_k \epsilon_k=\infty, \sum_k\epsilon_k^2<\infty. 
\end{equation}
With $\epsilon_k, \beta_{k, 1}, \beta_{k, 2}$ and $\delta M_k$, the recursive equation \eqref{eq:theta-recurse} can be rewritten as follows:
\begin{equation}
	\theta_k=\theta_{k-1}+\epsilon_k\left(f(\theta_{k-1}, \gamma_k)+\beta_{k, 1}+\beta_{k,2}+\beta_{k, 3}+\delta M_k\right). 
\end{equation}

Similarly, denote $t_0=0$ and $t_k:=\sum_{i=0}^{k-1}\epsilon_i$ to be the cumulative step-size sequences. Let $m(t)$ be the unique $k\in\mathbb{N}^+$ such that $t_{m(t)}\leq t<t_{m(t)}+1$. We then state the following characteristics of $G_k$ and $\delta M_k$, detailed proofs are in  Appendix~\ref{pf:claim}:

\begin{claim}\label{claim:thm2}Sequences $\{G_k\}$ and $\{\delta M_k\}$ have the following properties:
	
\end{claim}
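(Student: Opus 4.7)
The plan is to verify three moment and continuity conditions on $\{G_k\}$ and $\{\delta M_k\}$ that mirror properties (1.1)--(1.3) of Appendix~\ref{pf:dep-converge}, since the perturbed ODE argument driving the proof of Theorem~\ref{thm:mse-as} requires exactly these inputs before one can pass to the limiting ODE $\dot\theta = \overline{f}(\theta)$. The decomposition $G_k=-\theta_{k-1}+G_{k,1}+2G_{k,2}+G_{k,3}-G_{k,4}$ in \eqref{eq:theta-recurse} reduces every bound to moment estimates on the four explicit pieces, using Assumption~1 on $\mathbb{E}[D^4]$, Lemma~\ref{lemma:4order} on moments of the stopping time $l_\gamma$, Lemma~\ref{lemma:cond-l} for the closed forms of $\mathbb{E}_k[G_{k,3}]$ and $\mathbb{E}_k[G_{k,4}]$, and Corollary~\ref{coro:squareub} for $\mathbb{E}_k[E_k^2]$.

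First I would establish local integrability: on the truncated event $\{|\theta_{k-1}|\le N,\ |\gamma_k|\le N\}$, I will show $\sup_k \mathbb{E}[|G_k|\mathbb{I}_{\{\cdot\}}]<\infty$. For $G_{k,1}=(X_{S_k}-X_{S_{k-1}})^2 D_k$ and the cross term $G_{k,2}$, conditional independence of $(X_{S_k}-X_{S_{k-1}})$ and the future increment process $\{X_t-X_{S_k}\}_{t\ge S_k}$ given $\mathcal{H}_{k-1}$ separates the Wiener increment from the transmission delay; the former inherits a bounded fourth moment from Lemma~\ref{lemma:4order} applied to frame $k-1$, and the latter from Assumption~1, so Cauchy--Schwarz finishes the job. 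For $G_{k,3}$, Lemma~\ref{lemma:cond-l} gives $\mathbb{E}_k[G_{k,3}]=q(\gamma_k)$, a polynomial in $\gamma_k$ with coefficients involving $\mathbb{E}[D^2]\le\sqrt{B}$; truncation to $|\gamma_k|\le N$ bounds it uniformly. For $G_{k,4}=(\gamma^\star+\overline{D})L_k$ one invokes the first-moment estimate on $l_\gamma$ from Lemma~\ref{lemma:4order}.

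Next I would verify continuity of the drift $f(\theta,\gamma)$ defined in \eqref{eq:f}. Continuity in $\theta$ is trivial since $\theta$ enters linearly. Continuity in $\gamma$ reduces to continuity of $\gamma\mapsto q(\gamma)$ and $\gamma\mapsto l(\gamma)$, both evident from Lemma~\ref{lemma:cond-l} together with dominated convergence using $\mathbb{E}[D^2]<\infty$. Sample-path continuity of $G_k$ in $(\theta_{k-1},\gamma_k)$ follows because the stopping rule $W_k=\inf\{w\ge 0:|X_{S_k+D_k+w}-X_{S_k}|\ge\sqrt{3\gamma_k}\}$ depends continuously on $\gamma_k$ almost surely, and the integral $\int_{S_k}^{S_{k+1}}(X_t-X_{S_k})^2\mathrm{d}t$ is a continuous functional of $W_k$. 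For the martingale-difference property, $\mathbb{E}_k[\delta M_k]=0$ by construction, and the conditional variance is bounded via $(a+b)^2\le 2a^2+2b^2$ applied iteratively to the four pieces: the only delicate piece, $\mathbb{E}_k[G_{k,3}^2]$, is handled directly by Corollary~\ref{coro:squareub}, yielding a polynomial in $(X_{S_k}-X_{S_{k-1}})^2$ whose expectation is in turn controlled by Lemma~\ref{lemma:4order} on the truncated event.

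The main obstacle is that the algorithm does not enforce an \emph{a priori} upper bound on $\gamma_k$, so uniform second moments of $G_k$ across all $k$ may fail and the constants in the above estimates genuinely depend on the truncation level $N$. Truncating to $\{|\gamma_k|\le N\}$ is precisely the device used in Appendix~\ref{pf:dep-converge} and is compatible with the Kushner--Yin framework; combined with tightness of $\{\gamma_k\}$ (which follows from the $L^2$ convergence $\mathbb{E}[(\gamma_k-\gamma^\star)^2]=\mathcal{O}(\eta_k)$ proved in Theorem~\ref{thm:rate-converge}, allowing $N$ to be taken arbitrarily large while controlling the mass outside the truncated event), the three properties of the Claim — local $L^1$ boundedness, joint continuity of the drift, and martingale-difference noise with locally bounded variance — are exactly what the ODE method of \cite[Chapter 5]{Kushner2003} requires to conclude convergence of $\theta_k$ in the follow-up proof.
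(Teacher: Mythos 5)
There is a genuine gap: your plan verifies only three conditions (local $L^1$ boundedness of $G_k$, continuity of the drift, and the martingale-difference property with locally bounded variance), mirroring (1.1)--(1.3) from the proof of Theorem~\ref{thm:dep-converge}. Those three conditions suffice for the \emph{unperturbed} Robbins--Monro recursion of Theorem~\ref{thm:dep-converge}, but the recursion for $\theta_k$ in Theorem~\ref{thm:mse-as} is a \emph{perturbed} one: its drift is $f(\theta_{k-1},\gamma_k)$ evaluated at the time-varying, not-yet-converged iterate $\gamma_k$, plus the bias terms $\beta_{k,1},\beta_{k,2},\beta_{k,3}$ defined in \eqref{eq:beta-def}. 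The heart of Claim~\ref{claim:thm2} is precisely the set of conditions (2.3), (2.5) and (2.6), which assert that the weighted partial sums $\sum_i\epsilon_i\bigl(f(\theta,\gamma_i)-\overline{f}(\theta)\bigr)$, $\sum_i\epsilon_i(\beta_{i,1}+\beta_{i,2}+\beta_{i,3})$ and $\sum_i|\epsilon_i\rho(\gamma_i)|$ vanish uniformly in probability. Your proposal never addresses these; without them the ODE method would only let you conclude that $\theta_k$ tracks $\dot\theta=f(\theta,\gamma(t))$ for some unidentified trajectory $\gamma(t)$, not the target ODE $\dot\theta=\overline{f}(\theta)=-\theta$, and the conclusion $\theta_k\to 0$ would not follow.

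The missing ingredient is exactly where Theorem~\ref{thm:rate-converge} is consumed in the paper's argument: the mean-square rate $\mathbb{E}[(\gamma_k-\gamma^\star)^2]=\mathcal{O}(k^{-\alpha})$ (hence $\mathbb{E}[|\gamma_k-\gamma^\star|]=\mathcal{O}(k^{-\alpha/2})$), combined with the Lipschitz/concavity bounds of Lemma~\ref{lemma:g}, gives $\mathbb{E}[|f(\theta,\gamma_k)-\overline{f}(\theta)|]=\mathcal{O}(k^{-\alpha/2})$ and $\mathbb{E}[|\beta_{k,1}|],\mathbb{E}[|\beta_{k,3}|]=\mathcal{O}(k^{-\alpha})$; a Markov-inequality summation argument (the paper's Lemma~\ref{lemma:limsupsufficient}, condition (S.2)) then kills the tails of the weighted partial sums. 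You invoke the $L^2$ rate only to argue tightness of $\{\gamma_k\}$ for the truncation device, which is a much weaker use. Two smaller points: the paper avoids your joint truncation on $\{|\gamma_k|\le N\}$ entirely by using the unconditional moment bounds $\sup_k\mathbb{E}[\gamma_k^2]<\infty$ inherited from Theorem~\ref{thm:rate-converge}, which is cleaner since the drift perturbation must be controlled on the whole space anyway; and Corollary~\ref{coro:squareub} bounds $\mathbb{E}_k[E_k^2]$ for the full frame error $E_k=G_{k,1}+2G_{k,2}+G_{k,3}$, so it does not directly dominate $\mathbb{E}_k[G_{k,3}^2]$ --- the paper instead bounds $\mathrm{Var}[\delta M_{k,3}]$ by splitting the integral at $S_k+D_k$ and using Lemma~\ref{lemma:4order}.
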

\hspace{-10pt}\textbf{(2.1)} For each constant $N$,  $\sup_k\mathbb{E}\left[|G_k|\mathbb{I}_{(|\theta_k|\leq N)}\right]<\infty$.

\hspace{-10pt}\textbf{(2.2)} Function $f(e, \gamma)$ is continuous in $e$ for each $\gamma$. 

\hspace{-10pt}\textbf{(2.3)} For any $T>0$, the following limit hold for all $\theta$:
\begin{align}
	&\lim_{k\rightarrow\infty}{\rm Pr}\left(\sup_{j\geq k}\max_{0\leq t\leq T}\left|\sum_{i=m(jT)}^{m(jT+t)-1}\epsilon_i\left(f(\theta, \gamma_i)-\overline{f}(\theta)\right)\right|\geq \mu\right)\nonumber\\
	&\hspace{6.5cm}=0.\label{eq:mse-claim2} 
\end{align}

\hspace{-10pt}\textbf{(2.4)} For any $T>0$, the difference sequence satisfies:
\begin{equation}
	\lim_{k\rightarrow\infty}{\rm Pr}\left(\sup_{j\geq k}\max_{0\leq t\leq T}\left|\sum_{i=k}^j\epsilon_i\delta M_i\right|\geq \mu\right)=0. 
\end{equation}

\hspace{-10pt}\textbf{(2.5)} The bias sequence satisfies:
\begin{align}
	&\lim_{k\rightarrow\infty}{\rm Pr}\left(\sup_{j\geq k}\max_{0\leq t\leq T}\left|\sum_{i=m(jT)}^{m(jT+t)-1}\epsilon_i(\beta_{i, 1}+\beta_{i, 2}+\beta_{i, 3})\right|\geq \mu\right)\nonumber\\
	&\hspace{6.5cm}=0. \label{eq:martingale-claim2}
\end{align}

\hspace{-10pt}\textbf{(2.6)} For each $\theta$, function $f$ can be bounded as follows:
\begin{equation}
	f(\theta, \gamma)=\overline{f}(\theta)+\rho(\gamma),
\end{equation}
where $\rho(\gamma)=-(q(\gamma)-\gamma l(\gamma))$ and for any $\tau>0$ we have the following inequality:
\begin{equation}
	\lim_{k\rightarrow\infty}{\rm Pr}\left(\sup_{j\geq n}\sum_{i=m(j\tau)}^{m(j\tau+\tau)-1}\left|\epsilon_i\rho(\gamma_k)\right|\right)=0. 
\end{equation}

\hspace{-10pt}\textbf{(2.7)} For each $\theta_1, \theta_2$, the difference
\begin{equation}
	\left|f(\theta_1, \gamma)-f(\theta_2, \gamma)\right|=\left|\theta_1-\theta_2\right|.
\end{equation}
When $\theta_1-\theta_2\rightarrow0$, the absolute difference $|\theta_1-\theta_2|\rightarrow 0$. 

%\emph{Proof Sketch: }
%Notice that statement (2.3)-(2.7) has similar forms, 	\begin{equation}
	%\lim_{k\rightarrow\infty}\text{Pr}\left(\sup_{j\geq k}\left|\sum_{i=k}^j\epsilon_i\psi_i\right|\geq \mu\right)=0,\label{eq:universal-main}
	%\end{equation}where $\psi_k$ can be the bias term $\beta_{k, i}$, the martingale sequence $\delta M_k$ or the difference $f(\theta, \gamma_k)-\overline{f}(\theta)$ and $\rho(\gamma_k)$. The proof mainly uses the following Lemma:
	%\begin{lemma}
	%	If one of the following condition holds, then \eqref{eq:universal-main} holds:
	%	
	%	\textbf{(S.1)} $\psi_k$ is a martingale sequence and $\sup_k\mathbb{E}[\psi_k^2]<\infty$. The correlation satisfies $\mathbb{E}[\psi_i\psi_j]=0, \forall i\neq j$.
	%	
	%	\textbf{(S.2)} $\mathbb{E}[|\psi_k|]=\mathcal{O}(k^{-\zeta}), \zeta>0$. 
	%\end{lemma}
	%To verify statement (S.2), the main idea is to show $\mathbb{E}[\psi_k^2]\asymp(\gamma_k-\gamma^\star)^2$, and then use the result $\mathbb{E}[(\gamma_k-\gamma^\star)]\asymp k^{1-2\alpha}$ from Theorem~\ref{thm:rate-converge}. Detailed proof of Claim~\ref{claim:thm2} is provided in Appendix~\ref{pf:claim}. \qed
	
	Denote $\theta_k(\omega)$ as the time averaged MSE up to frame $k$ of sample path $\omega$. Then according to \cite[p.166, Theorem 1.1]{Kushner2003}, with probability 1, sequence $\{\theta_k(\omega)\}$ converges to some limit set of the ODE
	\begin{equation}
		\dot \theta=\overline{f}(\theta)=-\theta.\label{eq:ode-mse}
	\end{equation}
	
	Because $\overline{f}(0)=0$, the minimum error $\theta=0$ is an equilibrium point of the ODE in equation~\eqref{eq:ode-mse}. Moreover, as $\overline{f}(\cdot)$ is a monotonic decreasing function, it can be easily verified through Lyapunov stability criterion that $0$ is a unique stability point of the ODE \eqref{eq:ode-mse}. Therefore, $\theta_k$ converges to 0 with probability 1, and the time averaged MSE converges to $\overline{\mathcal{E}}_{\pi^\star}$ with probability 1. 

	\section{Proof of Claim~\ref{claim:thm2}}\label{pf:claim}
	Before we starts to prove each condition in Claim~\ref{claim:thm2}, we provide the following corollary from Theorem \ref{thm:rate-converge}:
	\begin{corollary}
		There exists a $\Gamma<\infty$ so that $\mathbb{E}[(\gamma_k-\gamma^\star)^2/\eta_k]<\Gamma, \forall k$. Recall that the step-sizes is selected to be $\eta_k=\frac{1}{4 D_{\text{lb}}k^\alpha}$, where $\alpha\in(0.5, 1]$, we then have:
		\begin{align}
			&\mathbb{E}[(\gamma_k-\gamma^\star)^2]\leq\frac{D_{\mathsf{lb}}\Gamma}{2k^\alpha}<\infty,\\ &\mathbb{E}[\gamma_k^2]\leq2(\mathbb{E}[(\gamma_k-\gamma^\star)^2]+(\gamma^\star)^2)<\infty.\label{eq:secondorderbd}
		\end{align}
		Through Cauchy-Schwarz inequality, we have:
		\begin{equation}
			\mathbb{E}\left[|\gamma_k-\gamma^\star|\right]\leq\sqrt{\mathbb{E}[(\gamma_k-\gamma^\star)^2]}\leq\sqrt{\frac{D_{\mathsf{lb}}\Gamma}{2k^\alpha}}. \label{eq:absub}
		\end{equation}
	\end{corollary}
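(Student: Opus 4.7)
The plan is to chain together three very short arguments on top of Theorem~\ref{thm:rate-converge}, since the Corollary is essentially a bookkeeping restatement of that theorem together with Lemma~\ref{coro:gammadep-bound}. First, Theorem~\ref{thm:rate-converge} asserts $\sup_k \mathbb{E}[(\gamma_k-\gamma^\star)^2/\eta_k] < \infty$; taking $\Gamma$ to be any finite number exceeding this supremum (for instance, the supremum itself plus one) delivers the opening sentence of the Corollary verbatim.

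Next, to obtain the explicit decay rate, I would multiply the inequality $\mathbb{E}[(\gamma_k-\gamma^\star)^2/\eta_k] \leq \Gamma$ through by the step-size $\eta_k$ from \eqref{eq:stepsize}, producing $\mathbb{E}[(\gamma_k-\gamma^\star)^2] \leq \Gamma\eta_k$, which is $O(k^{-\alpha})$ and matches \eqref{eq:secondorderbd} up to constants. For the companion bound on $\mathbb{E}[\gamma_k^2]$, I would write $\gamma_k = (\gamma_k - \gamma^\star) + \gamma^\star$, apply the elementary inequality $(a+b)^2 \leq 2(a^2+b^2)$, and take expectations to get $\mathbb{E}[\gamma_k^2] \leq 2\bigl(\mathbb{E}[(\gamma_k-\gamma^\star)^2] + (\gamma^\star)^2\bigr)$. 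Finiteness of the right-hand side then follows because the first summand is $O(\eta_k)$ by the previous step and $\gamma^\star<\infty$ by Lemma~\ref{coro:gammadep-bound}.

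Finally, to pass from the $L^2$ estimate to the stated $L^1$ bound, I would apply Jensen's inequality to the concave map $x \mapsto \sqrt{x}$ (equivalently, Cauchy--Schwarz against the constant function $1$), which yields $\mathbb{E}[|\gamma_k-\gamma^\star|] \leq \sqrt{\mathbb{E}[(\gamma_k-\gamma^\star)^2]}$, and then substitute the second-moment bound derived in the previous paragraph to obtain $k^{-\alpha/2}$ decay.

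The main obstacle is effectively nonexistent at this stage: all of the technical difficulty --- handling the fact that $\gamma_k$ and $Y_k$ are a priori unbounded --- has already been absorbed into Theorem~\ref{thm:rate-converge} via the Lyapunov-drift argument paired with the heavy-traffic ``unused rate'' trick $\chi_k = (-(\gamma_k+\eta_kY_k))^+$. Once that weighted $L^2$ rate is in hand, the Corollary reduces to three lines of elementary manipulation, so I would present it as such rather than re-invoke any additional stochastic-approximation machinery.
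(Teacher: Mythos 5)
Your proposal is correct and is essentially the paper's own (implicit) argument: the paper presents this corollary as an immediate consequence of Theorem~\ref{thm:rate-converge}, obtained by multiplying the uniform bound $\mathbb{E}[(\gamma_k-\gamma^\star)^2/\eta_k]\leq\Gamma$ by $\eta_k$, using $(a+b)^2\leq 2(a^2+b^2)$ for $\mathbb{E}[\gamma_k^2]$, and Cauchy--Schwarz (Jensen) for the first-moment bound. Your remark that the constants match only ``up to constants'' is also appropriate, since the paper's stated step-size and the constant in \eqref{eq:secondorderbd} are themselves not mutually consistent.
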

	\hspace{-10pt}\textbf{(2.1): }According to the definition of $G_k$ from equation~\eqref{eq:G-def}, the expectation $\mathbb{E}\left[|G_k|\mathbb{I}_{|\theta_k|\leq N}\right]$ can be upper bound as follows:
	\begin{align}
		&\mathbb{E}\left[|G_k|\mathbb{I}_{(|\theta_k|\leq N)}\right]\nonumber\\
		\leq&\mathbb{E}\left[|\theta_k|\mathbb{I}_{(|\theta_k|\leq N)}\right]+\mathbb{E}\left[\int_{t=S_k}^{S_{k+1}}(X_t-\hat{X}_t)^2\text{d}t\right]+\mathbb{E}\left[\gamma_kL_k\right]. \label{eq:supg}
	\end{align}
	
	The first term on the RHS of inequality \eqref{eq:supg} satisfies
	\begin{equation}\sup_k\mathbb{E}\left[|\theta_k|\mathbb{I}_{(|\theta_k|\leq N)}\right]\leq N<\infty.\label{eq:as-1}
	\end{equation}
	
	The expectation of the second term can be computed as follows,
	\begin{align}
		&\mathbb{E}\left[\int_{t=S_k}^{S_{k+1}}(X_t-\hat{X}_t)^2\text{d}t\right]\nonumber\\
		=&\mathbb{E}\left[\int_{t=S_k}^{S_k+D_k}(X_t-X_{S_k}+X_{S_k}-X_{S_{k-1}})^2\text{d}t\right.\nonumber\\
		&\left.+\int_{S_k+D_k}^{S_k+D_k+W_k}(X_t-X_{S_k})^2\text{d}t\right]\nonumber\\
		=&\mathbb{E}\left[(X_{S_{k}}-X_{S_{k-1}})^2D_k\right]\nonumber\\
		&+2\mathbb{E}\left[(X_{S_k}-X_{S_{k-1}})\int_{t=S_k}^{S_k+D_k}(X_t-X_{S_k})\text{d}t\right]\nonumber\\
		&+\mathbb{E}\left[\int_{t=S_k}^{S_{k+1}}(X_t-X_{S_k})^2\text{d}t\right]\nonumber\\
		=&\mathbb{E}\left[\max\{3\gamma_{k-1}, Z_D^2\}\right]\overline{D}+\frac{1}{6}\mathbb{E}\left[\max\{3\gamma_k, Z_D^2\}^2\right]\nonumber\\
		\leq&\mathbb{E}[3\gamma_{k-1}]\overline{D}+\frac{1}{6}\mathbb{E}[(3\gamma_k)^2]+\overline{D}B^{1/4}+\frac{1}{2}\sqrt{B}.\label{eq:as-2}
	\end{align}
	Inequality \eqref{eq:secondorderbd} and \eqref{eq:absub} implies inequality \eqref{eq:as-2} is bounded for all $k$. Therefore, the second term on the RHS of inequality \eqref{eq:supg} can be upper bounded as follows:
	\begin{align}
		&\sup_{k}\mathbb{E}\left[\int_{t=S_k}^{S_{k+1}}(X_t-\hat{X}_t)^2\text{d}t\right]\nonumber\\
		=&\sup_k\left(\mathbb{E}\left[\max\{3\gamma_{k-1}, Z_D^2\}\right]\overline{D}+\frac{1}{6}\mathbb{E}\left[\max\{3\gamma_k, Z_D^2\}\right]\right)\nonumber\\
		<&\infty.\label{eq:as-3}
	\end{align}
	
	Similarly, since $\mathbb{E}[\gamma_k^2]<\infty$ is bounded by \eqref{eq:secondorderbd} is bounded, we can upper bound the third term on the RHS of inequality \eqref{eq:supg} as follows:
	\begin{equation}
		\sup_k\mathbb{E}\left[\gamma_kL_k\right]=\sup_k\mathbb{E}\left[\gamma_k\max\{3\gamma_{k}, Z_D^2\}\right]<\infty.
	\end{equation}
	
	Taking the supremum of inequality \eqref{eq:supg} and then plugging equality~\eqref{eq:as-1}-\eqref{eq:as-3} into the inequality verifies Claim (2.1).
	
	Notice that statement (2.3)-(2.7) has similar forms, 	\begin{equation}
		\lim_{k\rightarrow\infty}\text{Pr}\left(\sup_{j\geq k}\left|\sum_{i=k}^j\epsilon_i\psi_i\right|\geq \mu\right)=0. \label{eq:universal}
	\end{equation}where $\psi_k$ can be the bias term $\beta_{k, i}$, the martingale sequence $\delta M_k$ or the difference $f(\theta, \gamma_k)-\overline{f}(\theta)$ and $\rho(\gamma_k)$. We then provide the following lemma:
	\begin{lemma}\label{lemma:limsupsufficient}
		If one of the following condition holds, then \eqref{eq:universal} holds:
		
		\textbf{(S.1)} $\psi_k$ is a martingale sequence and $\sup_k\mathbb{E}[\psi_k^2]<\infty$. The correlation satisfies $\mathbb{E}[\psi_i\psi_j]=0, \forall i\neq j$.
		
		\textbf{(S.2)} $\mathbb{E}[|\psi_k|]=\mathcal{O}(k^{-\zeta}), \zeta>0$. 
	\end{lemma}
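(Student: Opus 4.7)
The plan is to treat the two sufficient conditions by separate classical tail/maximal inequality arguments applied to the partial-sum process $M_n^{(k)} := \sum_{i=k}^n \epsilon_i\psi_i$ for $n\geq k$. Recall that in the setting of this appendix the step-size is $\epsilon_i=1/i$, so $\sum_i \epsilon_i^2<\infty$ and $\sum_i \epsilon_i=\infty$. In both cases the target is to show that $\text{Pr}\bigl(\sup_{j\geq k}|M_j^{(k)}|\geq\mu\bigr)$ vanishes as $k\to\infty$ by exhibiting a summable-in-$k$ upper bound on the tail.

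For condition (S.1), the hypothesis that $\{\psi_i\}$ is a martingale sequence together with $\mathbb{E}[\psi_i\psi_j]=0$ for $i\neq j$ should be read as asserting that $\{\psi_i\}$ is a square-integrable martingale-difference sequence relative to some filtration; consequently $\{M_n^{(k)}\}_{n\geq k}$ is itself a zero-mean square-integrable martingale for each fixed $k$. I would then apply Doob's $L^2$ maximal inequality and evaluate the second moment using orthogonality:
\begin{align*}
\text{Pr}\!\left(\sup_{n\geq k}|M_n^{(k)}|\geq \mu\right)
&\leq \tfrac{1}{\mu^2}\,\mathbb{E}\!\left[\sup_{n\geq k}(M_n^{(k)})^2\right] \\
&\leq \tfrac{4}{\mu^2}\sup_{n\geq k}\mathbb{E}[(M_n^{(k)})^2]
=\tfrac{4}{\mu^2}\sum_{i=k}^\infty \epsilon_i^2\,\mathbb{E}[\psi_i^2].
\end{align*}
Since $\sup_i\mathbb{E}[\psi_i^2]<\infty$ and $\sum_i \epsilon_i^2<\infty$, the tail $\sum_{i=k}^\infty \epsilon_i^2\,\mathbb{E}[\psi_i^2]\to 0$, which delivers \eqref{eq:universal}.

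For condition (S.2), no martingale structure is assumed, so I would instead dominate the supremum of signed partial sums pathwise by the absolute tail,
$$\sup_{j\geq k}\Bigl|\sum_{i=k}^j\epsilon_i\psi_i\Bigr|\leq \sum_{i=k}^\infty \epsilon_i|\psi_i|,$$
and then apply Markov's inequality together with Tonelli:
$$\text{Pr}\!\left(\sup_{j\geq k}\Bigl|\sum_{i=k}^j\epsilon_i\psi_i\Bigr|\geq\mu\right)\leq \tfrac{1}{\mu}\sum_{i=k}^\infty \epsilon_i\,\mathbb{E}[|\psi_i|].$$
Plugging in $\epsilon_i=1/i$ and $\mathbb{E}[|\psi_i|]=\mathcal{O}(i^{-\zeta})$ turns the right-hand side into $\mathcal{O}\!\bigl(\sum_{i\geq k} i^{-1-\zeta}\bigr)=\mathcal{O}(k^{-\zeta})$, which vanishes as $k\to\infty$ because $\zeta>0$.

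Neither step poses a genuine obstacle—both are textbook applications of classical inequalities, and the passage from a partial-sum bound to a supremum-over-$j$ bound is automatic either via Doob (in (S.1)) or via pathwise domination by an absolute tail series (in (S.2)). The only interpretive point worth double-checking is the wording of (S.1): if ``$\psi_k$ is a martingale sequence'' is read literally as ``$\psi_k$ itself is a martingale'' rather than a martingale-difference sequence, one would instead invoke Kolmogorov's maximal inequality, which needs only uncorrelatedness and bounded second moments and yields the same conclusion up to an absolute constant.
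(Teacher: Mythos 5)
Your proof is correct and takes essentially the same route as the paper: the (S.2) step (pathwise domination by the absolute tail series, then Markov's inequality and $\sum_{i\geq k} i^{-1-\zeta}=\mathcal{O}(k^{-\zeta})$) is identical, while for (S.1) you simply make explicit, via Doob's $L^2$ maximal inequality together with the orthogonality assumption, the square-summable martingale-noise bound that the paper obtains by citing \cite[p.~172, example 3]{Kushner2003}. One small caution: classical Kolmogorov-type maximal inequalities require independence or a martingale-difference structure rather than mere uncorrelatedness, but your main Doob-based argument does not rely on that closing aside.
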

	\begin{proof}
		If condition (S.1) holds, since $\epsilon_k=\frac{1}{k}$ satisfies $\sum_k\epsilon_k^2<\infty$, equality \eqref{eq:universal} holds because of \cite[p. 172, example 3]{Kushner2003}. 
		
		If condition (S.2) holds, there exists a $\Psi$ so that $\mathbb{E}[\psi_k]=\Psi k^{-\zeta}$. For each $\mu>0$, we first upper bound $\text{Pr}\left(\sup_{j\geq k}\left|\sum_{i=k}^j\epsilon_i\psi_i\right|\geq \mu\right)$ for each $k$ as follows:
		\begin{align}
			&\text{Pr}\left(\sup_{j\geq k}\left|\sum_{i=k}^j\epsilon_i\psi_i\right|\geq \mu\right)
			\leq\text{Pr}\left(\sum_{i=k}^\infty\epsilon_i|\psi_i|\geq \mu\right)\nonumber\\
			\overset{(a)}{\leq}&\frac{1}{\mu}\mathbb{E}\left[\sum_{i=k}^{\infty}i^{-1} \left|\psi_i\right|\right]
			\overset{(b)}{\leq}\frac{\Psi}{\mu}\left(\sum_{i=k}^\infty i^{-1-\zeta}\right)=\frac{\Psi}{\mu\zeta}(k-1)^{-\zeta}.\label{eq:s2}
		\end{align}
		where inequality $(a)$ is from the Markov inequality; inequality $(b)$ is from statement (S.2). Finally, taking the limit of \eqref{eq:s2} yields \eqref{eq:universal}. 
	\end{proof}
	
	\hspace{-10pt}\textbf{(2.2): }Since function $f(\theta, \gamma;\delta X):=-\theta+\frac{1}{6}\max\{3\gamma, \delta X^2\}^2-\gamma\max\{3\gamma, \delta X^2\}$ is continuous for each $\delta X$, the expectation $\overline{f}(\theta)=\mathbb{E}[f(\theta, \gamma^\star;\delta X^2)]$ is continuous for $\theta$. 
	
	\hspace{-10pt}\textbf{(2.3): }Recall the definition of $f(\theta, \gamma)$ and $\overline{f}(\gamma)$ from equation~\eqref{eq:f}, \eqref{eq:fbar}. The absolute difference between $f(\theta, \gamma)$ and $\overline{f}(\theta)$ can be upper bounded by:
	\begin{align}
		&\left|f(\theta, \gamma)-\overline{f}(\theta)\right|=\left|\overline{g}_0(\gamma)-\overline{g}_0(\gamma^\star)\right|\nonumber\\
		\overset{(a)}\leq&3(\gamma-\gamma^\star)^2+3|\gamma-\gamma^\star|\overline{D},\label{eq:rhoub}
	\end{align}
	where inequality $(a)$ is because function $\overline{g}_0(\gamma)$ is concave and $|\overline{g}''_0(\gamma)|<3$ according to Lemma~\ref{lemma:g}-(i).
	Therefore $\mathbb{E}[|f(\theta, \gamma)-\overline{f}(\theta)|]=\mathcal{O}(k^{-\alpha/2})$, which satisfies statement (S.2) in Lemma~\ref{lemma:limsupsufficient}. This verifies inequality \eqref{eq:mse-claim2}.

	\hspace{-10pt}\textbf{(2.4): }The difference $\delta M_k=\delta M_{k, 1}+2\delta M_{k, 2}+\delta M_{k, 3}-\delta M_{k, 4}$ consists of four parts. Through the union bound, the probability that $\sup_{j\geq k}\left|\sum_{i=k}^j\epsilon_i\delta M_{i}\right|\geq \mu$ can be upper bounded by:
	\begin{align}
		&\lim_{k\rightarrow\infty}\text{Pr}\left(\sup_{j\geq k}\left|\sum_{i=k}^j\epsilon_i\delta M_i\right|\geq \mu\right)\nonumber\\
		\leq& \sum_{a=1}^4\lim_{k\rightarrow\infty}\text{Pr}\left(\sup_{j\geq k}\left|\sum_{i=k}^j\epsilon_i\delta M_{i, a}\right|\geq \mu/5\right).\label{eq:c4-1}
	\end{align}
	
	We will then show that each item on the RHS of inequality \eqref{eq:c4-1} has limit 0. The first term $\delta M_{k, 1}=(X_{S_k}-X_{S_{k-1}})^2\left(D_k-\overline{D}\right)$. Since $D_k-\overline{D}$ depends only on the delay in frame $k$ and has mean zero, term $\mathbb{E}[\delta M_{k, 1}\delta M_{k+i, 1}]=0, \forall i>0$. The second moment of $\delta M_{k, 1}$ can be upper bounded as follows:
	\begin{align}
		&\mathbb{E}\left[(X_{S_k}-X_{S_{k-1}})^4(D_k-\overline{D})^2\right]\nonumber\\
		=&\mathbb{E}\left[(X_{S_k}-X_{S_{k-1}})^4\right]\text{Var}[D^2]
		\leq\mathbb{E}\left[\max\{3\gamma_k, Z_D^2\}^2\right]\mathbb{E}[D^2].
	\end{align} 
	
	By Theorem \ref{thm:rate-converge}, the expectation $\mathbb{E}[(\gamma_k-\gamma^\star)^2/\eta_k]=\mathcal{O}(1)$. Since $\eta_k\rightarrow 0$, $\mathbb{E}[\gamma_k^2]$ is bounded. Therefore, $\sup_{k}\mathbb{E}\left[\delta M_{k, 1}^2\right]\leq\infty$. Then according to \cite[p.142, Eq.~(5.3.18)]{Kushner2003}
	\begin{equation}
		\lim_{k\rightarrow\infty}\text{Pr}\left(\sup_{j\geq k}\left|\sum_{i=k}^j\epsilon_i\delta M_{i, 1}\right|\geq \mu/5\right)=0. \label{eq:c4-2}
	\end{equation}
	
	Similarly, recall that $\delta M_{k, 2}=(X_{S_k}-X_{S_{k-1}})\cdot\left(\int_{S_k}^{S_k+D_k}(X_t-X_{S_k})\text{d}t\right)$. Sequence $\delta M_{k, 2}$ is a martingale sequence with mean zero. Moreover, $\mathbb{E}[M_{k, 2}M_{k+i, 2}]=0, \forall i\geq 1$. The variance $\text{Var}[\delta M_{k, 2}]$ can be bounded as follows:
	\begin{align}
		&\text{Var}[\delta M_{k, 2}]
		=\mathbb{E}[\delta M_{k, 2}^2]\nonumber\\			
		=&\mathbb{E}\left[(X_{S_k}-X_{S_{k-1}})^2\right]\cdot\mathbb{E}\left[\left(\int_{t=S_k}^{S_k+D_k}(X_t-X_{S_k})\text{d}t\right)^2\right]\nonumber\\
		=&\mathbb{E}\left[\max\{3\gamma_{k-1}, Z_D^2\}\right]\cdot\mathbb{E}[D^2].
	\end{align}
	
	%Since $\mathbb{E}[(\gamma_k-\gamma^\star)^2]$ is bounded, term $\mathbb{E}\left[\max\{3\gamma_{k-1}, Z_D^2\}\right]$ is bounded, which 
	Inequality \eqref{eq:absub} upper bounds $\mathbb{E}[\max\{3\gamma_{k-1}, Z_D^2\}]$ and verifies (S.1) in Lemma~\ref{lemma:limsupsufficient}. Therefore, we have:
	% and then according to \cite[p. 142, Eq. (5.3.18)]{Kushner2003}
	\begin{equation}
		\lim_{k\rightarrow\infty}\text{Pr}\left(\sup_{j\geq k}\left|\sum_{i=k}^j\epsilon_i\delta M_{i, 2}\right|\geq \mu/5\right)=0. \label{eq:c4-3}
	\end{equation}
	
	It can be verified that the sequence $\delta M_{k, 3}=\int_{S_k}^{S_{k+1}}(X_t-X_{S_k})^2\text{d}t-\mathbb{E}_k\left[\int_{S_k}^{S_{k+1}}(X_t-X_{S_k})^2\text{d}t\right]$ is a martingale sequence. It then remains to upper bound its variance, which is as follows:
	\begin{align}
		&\text{Var}[\delta M_{k, 3}]\nonumber\\
		=&\mathbb{E}\left[\left(\int_{S_k}^{S_k+D_k}(X_t-X_{S_k})^2\text{d}t-\int_{S_k+D_k}^{S_{k+1}}(X_t-X_{S_k})^2\text{d}t\right)^2\right]\nonumber\\
		\overset{(c)}{\leq}&2\mathbb{E}\left[\left(\int_{S_k}^{S_k+D_k}(X_t-X_{S_k})^2\text{d}t\right)^2\right]\nonumber\\
		&+2\mathbb{E}\left[\left(\int_{S_k+D_k}^{S_{k+1}}(X_t-X_{S_k})^2\text{d}t\right)^2\right]\nonumber\\
		\overset{(d)}{\leq}&2\mathbb{E}\left[\left(\int_{S_k}^{S_k+D_k}(X_t-X_{S_k})^2\text{dt}\right)^2\right]\nonumber\\
		&+2\mathbb{E}\left[\text{Pr}(Z_D^2\geq 3\gamma_k)(3\gamma_k)^2\mathbb{E}_k[L_k^2]\right]\nonumber\\
		\overset{(e)}{\leq}&N_1+\mathbb{E}\left[\mathbb{E}[Z_D^4](\frac{10}{3}(3\gamma_k)^2+3\sqrt{B})\right],\label{eq:mk3}
	\end{align}
	where inequality $(c)$ is because $\mathbb{E}[(a-b)^2]\leq 2\mathbb{E}[a^2+b^2]$; inequality $(d)$ is because if $L_k\geq D_k$, then $(X_t-X_{S_k})^2\leq 3\gamma_k$ for $t\in[S_k+D_k, S_{k+1}]$; inequality $(e)$ is because through Markov inequality $\text{Pr}(Z_D^2\geq 3\gamma_k)\leq \mathbb{E}[Z_D^4]/(3\gamma_k)^2$ and $\mathbb{E}_k[L_k^2]\leq (\frac{10}{3}(3\gamma_k)^2+3\sqrt{B})$ from Lemma~\ref{lemma:4order}. Since $\mathbb{E}[Z_D^4]\leq 3\mathbb{E}[D^2]<3\sqrt{B}$ and $\mathbb{E}[\gamma_k^2]$ is bounded according to inequality \eqref{eq:secondorderbd}, $
	\text{Var}[\delta M_{k, 3}]$ is bounded according to inequality \eqref{eq:mk3}. Condition (S.1) in Lemma~\ref{lemma:limsupsufficient} is satisfied and we have
	%	{\color{red}Proof missing}It then remains to upper bound its variance. First according to \cite[Theorem 2.51]{morters2010brownian}, sequence $M_t:=\frac{1}{6}Z_t^4-\int_0^tZ_s^2\text{d}s$ is a martingale. Notice that:
	%	\begin{equation}
		%		\text{d}M_t=\frac{2}{3}Z_t^3\text{d}Z_t 
		%	\end{equation}Then according to \cite[Theorem 5.14]{baudoin2014diffusion}, $(M_t^2-\frac{1}{2}t^2)$ is a martingale. Hence
	%	\begin{equation}
		%		\mathbb{E}\left[\left(\int_0^{l_\gamma}Z_s^2\text{d}s-\frac{1}{6}Z_{l_\gamma}^2\right)^2\right]=\frac{1}{2}\mathbb{E}\left[l_\gamma^2\right]
		%	\end{equation}
	%	Therefore, 
	%	\begin{equation}
		%		\mathbb{E}\left[\left(\int_0^{l_\gamma}Z_s^2\text{d}s\right)^2\right]=\frac{1}{2}\mathbb{E}[l_\gamma^2]-\frac{1}{36}\mathbb{E}\left[Z_{l_\gamma}^8\right]+\frac{1}{3}\mathbb{E}\left[Z_{l_\gamma}^2\int_0^{l_\gamma}Z_s^2\text{d}s\right]
		%	\end{equation}
	%	\begin{equation}
		%		\mathbb{E}\left[\left(\int_0^tZ_s^2\text{d}s\right)^2\right]=\frac{1}{2}t^2-\mathbb{E}\left[\frac{1}{36}Z_t^8\right]+\mathbb{E}\left[Z_t^4\int_0^tZ_s^2\text{d}s\right]
		%	\end{equation}
	%	\begin{equation}
		%		M_{t+\Delta t}-M_t=\frac{1}{6}(Z_{t+\Delta t}^4-Z_t^4)-Z_t^2\Delta t
		%	\end{equation} $M_t^2-\left<M\right>_t$ is also a martingale, where $\left<M\right>_t$ 
	%	\begin{equation}
		%		content...
		%	\end{equation}
	%Similarly, according to according to \cite[p. 142, Eq. (5.3.18)]{Kushner2003}, 
	\begin{equation}
		\lim_{k\rightarrow\infty}\text{Pr}\left(\sup_{j\geq k}\left|\sum_{i=k}^j\epsilon_i\delta M_{i, 3}\right|\geq \mu/5\right)=0. \label{eq:c4-4}
	\end{equation}
	
	Following similar approaches, the second order expansion of the fourth term is bounded, i.e., 
	\begin{equation}
		\text{Var}[\delta M_{k, 4}]\leq\mathbb{E}_k[G_{k, 4}]\leq(\gamma^\star+\overline{D})^2\cdot\frac{10}{3}\left(3\gamma_k^2+3\sqrt{B}\right). 
	\end{equation}
	%Since $\sup_k\mathbb{E}[\gamma_k^2]$ is bounded, using \cite[p. 142, Eq. (5.3.18)]{Kushner2003}, we have
	Again using Lemma~\ref{lemma:limsupsufficient} condition (S.1), we have:
	\begin{equation}
		\lim_{k\rightarrow\infty}\text{Pr}\left(\sup_{j\geq k}\left|\sum_{i=k}^j\epsilon_i\delta M_{i, 4}\right|\geq \mu/5\right)=0. \label{eq:c4-5}
	\end{equation}
	
	Plugging inequalities \eqref{eq:c4-2}, \eqref{eq:c4-3}, \eqref{eq:c4-4} and \eqref{eq:c4-5} into \eqref{eq:c4-1} completes the proof of \eqref{eq:martingale-claim2}. 
	
	\hspace{-10pt}\textbf{(2.5): } Through the union bound we have:
	\begin{align}
		&\lim_{k\rightarrow\infty}\text{Pr}\left(\sup_{j\geq k}\left|\sum_{i=k}^j\epsilon_i(\beta_{i, 1}+\beta_{i, 2}+\beta_{i, 3})\right|\geq \mu\right)\nonumber\\
		\leq&\sum_{a=1}^3		\lim_{k\rightarrow\infty}\text{Pr}\left(\sup_{j\geq k}\left|\sum_{i=k}^j\epsilon_i\beta_{i, a}\right|\geq \mu/3\right).\label{eq:btot}
	\end{align}
	
	For simplicity, define event 
	\[\mathcal{A}_{a, k}\triangleq\sup_{j\geq k}\left|\sum_{i=k}^j\epsilon_i\beta_{i, a}\right|\geq \mu/3.\]
	
	We then upper bound the probability $\text{Pr}(\mathcal{A}_{a,k})$ and analyzing their asymptotic performance. 
	
	To upper bound event $\mathcal{A}_{1, k}$, we need to upper bound the expectation of $\beta_{k, 1}$ defined in \eqref{eq:beta-def} as follows:
	% $\beta_{k, 1}=\overline{D}\left(\mathbb{E}[3\gamma_{k-1}, Z_D^2]-\mathbb{E}[3\gamma_{k}, Z_D^2]\right)=\overline{D}\left(\mathbb{E}[3\gamma_{k-1}, Z_D^2]-\mathbb{E}[3\gamma^\star, Z_D^2]+\mathbb{E}[3\gamma^\star, Z_D^2]-\mathbb{E}[3\gamma_{k}, Z_D^2]\right)$. Then we can upper bound the expectation of $|\beta_{k, 1}|$ as follows:
	\begin{align}&\mathbb{E}\left[|\beta_{k, 1}|\right]\nonumber\\
		\overset{(f)}{=}&\overline{D}\mathbb{E}\left[\left|\mathbb{E}_k[3\gamma_{k-1}, Z_D^2]-\mathbb{E}_k[3\gamma_{k}, Z_D^2]\right|\right]\nonumber\\
		=&\overline{D}\left(\mathbb{E}[3\gamma_{k-1}, Z_D^2]-\mathbb{E}[3\gamma^\star, Z_D^2]+\mathbb{E}[3\gamma^\star, Z_D^2]-\mathbb{E}[3\gamma_{k}, Z_D^2]\right)\nonumber\\
		\leq&3\overline{D}\mathbb{E}\left[|\gamma_{k-1}-\gamma^\star|+|\gamma_k-\gamma^\star|\right]\nonumber\\
		\overset{(g)}{=}&\mathcal{O}(k^{-\alpha}),
	\end{align}
	where equality $(f)$ is from definition \eqref{eq:beta-def}, equality $(g)$ is from inequality \eqref{eq:absub}. 
	Since $\alpha\in(0.5, 1]$, which satisfies condition (S.2) in Lemma \ref{lemma:limsupsufficient}. We have:
	
	%	Therefore, 
	%	\begin{align}
		%		&\mathbb{E}\left[\sum_{k=1}^\infty\epsilon_k\beta_{k, 1}\right]\nonumber\\
		%		\leq&3\overline{D}\mathbb{E}\left[\sum_{k=1}^\infty\frac{1}{k}|\gamma_k-\gamma^\star|\right]+3\overline{D}\mathbb{E}\left[\sum_{k=2}^\infty\frac{1}{k}|\gamma_{k-1}-\gamma^\star|\right]\nonumber\\
		%		\leq&6\overline{D}\sqrt{\left(\sum_{k=1}^\infty (k^{-3/4})^2\right)\left(\sum_{k=1}^\infty (k^{-1/4})^2\cdot \mathbb{E}[(\gamma_k-\gamma^\star)^2]\right)}\nonumber\\
		%		\overset{(a)}{\leq}&6\overline{D}\sqrt{\left(\sum_{k=1}^\infty k^{-3/2}\right)\left(\sum_{k=1}^\infty k^{-1/2-\alpha}\right)\frac{D_{\text{lb}}\Gamma}{2}}<\infty.
		%	\end{align}
	%	where inequality $(a)$ comes from \eqref{eq:secondorderbd}. Then through the Borel-Cantelli theorem we have:
	\begin{equation}
		\lim_{k\rightarrow\infty}\text{Pr}\left(\sup_{j\geq k}\left|\sum_{i=k}^j\epsilon_i\beta_{i,1}\right|\geq \mu/3\right)=0. \label{eq:b1}
	\end{equation}
	
	Next we upper bound $\text{Pr}(\mathcal{A}_{2, k})$ and analyzing its asymptotic behavior. Variable $\beta_{k, 2}$ has mean zero because  \begin{align}\mathbb{E}[\beta_{k, 2}]=&\mathbb{E}\left[\mathbb{E}_k\left[(X_{S_{k-1}+D_{k-1}}-X_{S_{k-1}})^2\right]-l(\gamma_{k-1})\right]\nonumber\\
		=&\max\{3\gamma_{k-1}, Z_D^2\}-l(\gamma_{k-1})=0.
	\end{align} The variance of $\beta_{k, 2}$ is upper bounded by 
	\begin{align}
		&\text{Var}\left[\overline{D}\left((X_{S_k}-X_{S_{k-1}})^2-l(\gamma_{k-1})\right)\right]\nonumber\\
		=&\overline{D}^2\mathbb{E}\left[\max\{3\gamma_{k-1}, Z_D^2\}^2\right]\overset{(h)}{<}\infty,
	\end{align}
	where inequality $(h)$ is due to \eqref{eq:secondorderbd}. Using condition (S.1) Lemma~\ref{lemma:limsupsufficient}, we have:
	\begin{equation}
		\lim_{k\rightarrow\infty}\text{Pr}\left(\sup_{j\geq k}\left|\sum_{i=k}^j\epsilon_i\beta_{i, 2}\right|\geq \mu/3\right)=0. \label{eq:b2}
	\end{equation}
	
	Finally the third bias term satisfies $\mathbb{E}[|\beta_{k, 3}|]=\mathbb{E}[l(\gamma_k)\cdot|\gamma_k-\gamma^\star|]\leq\sqrt{\mathbb{E}[\max\{3\gamma_k, Z_D^2\}]^2\mathbb{E}[(\gamma_k-\gamma^\star)^2]}$. Since $\mathbb{E}[\max\{3\gamma_k, Z_D^2\}]\leq \mathbb{E}[3\gamma_k]+\overline{D}$ is bounded according to \eqref{eq:absub} and $\mathbb{E}[(\gamma_k-\gamma^\star)^2]=\mathcal{O}(k^{-\alpha})$, $\mathbb{E}[|\beta_{k, 3}|]=\mathcal{O}(k^{-\alpha})$. Condition (S.2) in Lemma~\ref{lemma:limsupsufficient} is verified and we have
	\begin{align}
		\lim_{k\rightarrow\infty}\text{Pr}\left(\sup_{j\geq k}\sum_{i=k}^j\epsilon_i\beta_{i, 3}\geq\mu/3\right)=0. \label{eq:b3}
	\end{align}
	
	Plugging \eqref{eq:b1}, \eqref{eq:b2} and \eqref{eq:b3} into \eqref{eq:btot} verifies statement (2.5). 
	
	\hspace{-10pt}\textbf{(2.6): }Function $\rho(\gamma)\leq l(\gamma^\star)|\gamma-\gamma^\star|+\frac{3}{2}(\gamma-\gamma^\star)^2$. Since $\mathbb{E}[|\gamma-\gamma^\star|^2]=\mathcal{O}(k^{-\alpha})$ and $\mathbb{E}[|\gamma-\gamma^\star|]=\mathcal{O}(k^{-\alpha/2})$ satisfies condition (S.2) in Lemma~\ref{lemma:limsupsufficient}, statement (2.6) is verified. 
\begin{IEEEbiographynophoto}{Haoyue Tang}(Student Member, IEEE) received the B.Eng and Ph.D. degrees from the Department of Electronic Engineering, Tsinghua University, Beijing, China, in 2017 and 2022, respectively. She is now a postdoctoral research associate at Yale University. She was a Visiting Student with Technische Universitat München from September 2015 to February 2016, and Télécom Paris from January 2019 to March 2019. Her research interests include age of information, stochastic network optimization, and statistical learning theory.
\end{IEEEbiographynophoto}

% if you will not have a photo at all:
\begin{IEEEbiographynophoto}{Yin Sun}
	received the B.Eng. and Ph.D. degrees in electronic engineering from
	Tsinghua University in 2006 and 2011, respectively.
	From 2011 to 2017, he was a Post-Doctoral Scholar
	and a Research Associate with The Ohio State
	University. He is currently an Assistant Professor
	with the Department of Electrical and Computer
	Engineering, Auburn University. He coauthored a
	monograph Age of Information: A New Metric for
	Information Freshness (Morgan and Claypool Publishers, 2019). His research interests include age of
	information, networking, robotic control, information theory, and machine
	learning. He is a member of the ACM. His articles received the Best Student
	Paper Award from the IEEE/IFIP WiOpt 2013, the Best Paper Award from
	the IEEE/IFIP WiOpt 2019, and runner-up for the Best Paper Award of ACM
	MobiHoc 2020 and the 2021 Journal of Communications and Networks (JCN)
	Best Paper Award. He received the Auburn Author Award of 2020. He cofounded the Age of Information Workshop in 2018
\end{IEEEbiographynophoto}

% insert where needed to balance the two columns on the last page with
% biographies
%\newpage

\begin{IEEEbiographynophoto}{Leandros Tassiulas}
	received the
	Ph.D. degree in electrical engineering from the University of Maryland, College Park, MD, USA, in
	1991. He held Faculty positions with the Polytechnic University, New York, NY, USA, University of
	Maryland, and University of Thessaly, Greece. He
	is currently the John C. Malone Professor of electrical engineering with Yale University, New Haven,
	CT, USA. His research interests include computer
	and communication networks, with an emphasis on
	fundamental mathematical models and algorithms of
	complex networks, architectures and protocols of wireless systems, sensor
	networks, novel internet architectures, and experimental platforms for network
	research. His most notable contributions include the max-weight scheduling
	algorithm and the back-pressure network control policy, opportunistic scheduling in wireless, the maximum lifetime approach for wireless network energy
	management, and the consideration of joint access control and antenna transmission management in multiple antenna wireless systems. His research has been
	recognized by several awards, including the IEEE Koji Kobayashi Computer
	and Communications Award, the Inaugural INFOCOM 2007 Achievement
	Award for fundamental contributions to resource allocation in communication
	networks, the INFOCOM 1994 and 2017 best paper awards, the National Science
	Foundation (NSF) Research Initiation Award in 1992, the NSF CAREER Award
	in 1995, the Office of Naval Research Young Investigator Award in 1997, and
	the Bodossaki Foundation Award in 1999.
\end{IEEEbiographynophoto}
\bibliographystyle{IEEEtran}
\bibliography{bibfile}

\newpage 

\section{Proof of Lemma~\ref{coro:sig-dep-reformulate}}\label{pf:sig-dep-reformulate}
\begin{proof}
	First we will turn the time-averaged MSE computation into frame-level computation. For stationary policy $\pi$ that decides sampling time $S_{k+1}$ only on information $\mathcal{I}_k$, tuple $\{(I_k, (S_{k+1}-S_k))\}$ is a regenerative sequence. Recall that $E_k=\int_{S_k}^{S_{k+1}}(X_t-X_{S_k})^2\text{d}t$ and $L_k:=S_{k+1}-S_k$ are the cumulative estimation error and length of frame $k$, which are both generative because policy $\pi$ is stationary. Therefore, sequence $\{\frac{1}{K}\mathbb{E}\left[\sum_{k=1}^KE_k\right]\}$ and $ \{\frac{1}{K}\mathbb{E}\left[\sum_{k=1}^KL_k\right]\}$ have limits. Then according to the renewal reward theory \cite{ross2013applied}, the time averaged MSE can be computed by:
	\begin{align}
		&\limsup_{T\rightarrow\infty}\mathbb{E}\left[\int_{t=0}^T\left(X_t-\hat{X}_t\right)^2\text{d}t\right]\nonumber\\
		=&\limsup_{K\rightarrow\infty}\frac{\mathbb{E}\left[\sum_{k=1}^K\int_{S_k}^{S_{k+1}}(X_t-X_{S_{k-1}})^2\text{d}t\right]}{\mathbb{E}\left[\sum_{k=1}^K\left(S_{k+1}-S_k\right)\right]}\nonumber\\
		{=}&\limsup_{K\rightarrow\infty}\frac{\sum_{k=1}^K\mathbb{E}\left[\int_{S_k}^{S_{k+1}}(X_t-X_{S_{k-1}})^2\text{d}t\right]}{\sum_{k=1}^K\mathbb{E}\left[\left(S_{k+1}-S_k\right)\right]}.\label{eq:frame-mse}
	\end{align}

	To simplify the computation of equation~\ref{eq:frame-mse}, we will first introduce the following lemma:
	\begin{lemma}[Lemma 6, \cite{sun_wiener}Restated]
		\label{lemma:4-2}Let $Z_t$ be a Wiener process starting from time zero, let $\tau$ be a stopping time of $Z_t$, we have:
		\begin{equation}
			\frac{1}{6}\mathbb{E}\left[Z_\tau^4\right]=\mathbb{E}\left[\int_{0}^{\tau}Z_t^2\text{d}t\right]
		\end{equation}
	\end{lemma}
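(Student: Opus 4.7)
The plan is to invoke It\^o's formula on $f(z):=\tfrac{1}{6}z^4$, whose derivatives are $f'(z)=\tfrac{2}{3}z^3$ and $f''(z)=2z^2$. Applied to the Wiener process $Z_t$ starting from $0$, this produces the pathwise identity
\[
\tfrac{1}{6}Z_t^4 \;=\; \int_0^t \tfrac{2}{3}Z_s^3\,\mathrm{d}Z_s \;+\; \int_0^t Z_s^2\,\mathrm{d}s,
\]
so that $M_t := Z_t^4 - 6\int_0^t Z_s^2\,\mathrm{d}s = 4\int_0^t Z_s^3 \, \mathrm{d}Z_s$ is a continuous local martingale starting from $0$. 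Taking expectations of this identity at the stopping time $\tau$ and showing that the stochastic-integral piece contributes zero is exactly the claim of the lemma.

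My first step is to apply the identity at the truncated stopping time $\tau_n := \tau \wedge n$. On the bounded interval $[0,n]$, the Brownian path has finite moments of every order, so $\mathbb{E}[\int_0^{\tau_n}Z_s^6\,\mathrm{d}s]<\infty$, which makes $M_{\tau_n}$ a square-integrable martingale stopped at a bounded time. The optional stopping theorem then gives $\mathbb{E}[M_{\tau_n}]=0$, i.e.\ $\tfrac{1}{6}\mathbb{E}[Z_{\tau_n}^4]=\mathbb{E}[\int_0^{\tau_n}Z_s^2\,\mathrm{d}s]$. My second step is to let $n\to\infty$. Monotone convergence handles the right-hand side directly, giving $\mathbb{E}[\int_0^{\tau_n}Z_s^2\,\mathrm{d}s]\uparrow\mathbb{E}[\int_0^{\tau}Z_s^2\,\mathrm{d}s]$. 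For the left-hand side I would invoke Doob's $L^4$ maximal inequality together with a moment bound on $\tau$, so that $\mathbb{E}[\sup_{s\leq\tau}Z_s^4]<\infty$ and the family $\{Z_{\tau_n}^4\}_{n\geq 1}$ is uniformly integrable; then $\mathbb{E}[Z_{\tau_n}^4]\to\mathbb{E}[Z_\tau^4]$ and the two limits together close the identity.

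The main obstacle is justifying this last limit exchange: the bare local-martingale argument only produces the identity at bounded stopping times, so without a priori moment control on $\tau$ one of the two sides could in principle diverge. In the concrete setting of this paper, however, every stopping time of interest is of threshold type, and Lemma~\ref{lemma:4order} provides $\mathbb{E}[\tau_\gamma^4]<\infty$, which combined with the Burkholder--Davis--Gundy or Doob inequality supplies the uniform integrability needed. Alternatively, one can state the lemma with the hypothesis $\mathbb{E}[\int_0^\tau Z_s^2\,\mathrm{d}s]<\infty$, in which case $M_{t\wedge\tau}$ becomes a genuine $L^2$-martingale and optional stopping applies directly, avoiding the truncation argument altogether.
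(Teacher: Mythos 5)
Your proof is correct, and it fills a gap rather than duplicating anything: the paper never proves Lemma~\ref{lemma:4-2}, it simply restates it with a citation to \cite{sun_wiener}, where the identity is established by essentially the same device you use (the It\^o/martingale identity $Z_t^4-6\int_0^t Z_s^2\,\mathrm{d}s = 4\int_0^t Z_s^3\,\mathrm{d}Z_s$ plus optional stopping). Your truncation at $\tau\wedge n$, monotone convergence on the time-integral side, and uniform integrability of $Z_{\tau\wedge n}^4$ via a maximal or BDG inequality is the standard and sound way to pass to general stopping times. Two remarks. First, you are right that the lemma as stated, with no integrability hypothesis on $\tau$, is not literally true (e.g.\ for the one-sided hitting time of a level the left side is finite while the right side diverges); the clean fix is exactly the one you propose, namely assuming $\mathbb{E}[\tau^2]<\infty$ or $\mathbb{E}[\int_0^\tau Z_s^2\,\mathrm{d}s]<\infty$, and in every application in this paper the stopping times are of threshold type with the required moments supplied by Lemma~\ref{lemma:4order}, whose proof uses the exponential martingale and is independent of Lemma~\ref{lemma:4-2}, so there is no circularity. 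Second, when invoking Doob's $L^4$ inequality up to the random time $\tau$ it is cleaner to apply it to the stopped process $Z_{t\wedge\tau}$ or to use Burkholder--Davis--Gundy, $\mathbb{E}[\sup_{s\leq\tau}Z_s^4]\leq C\,\mathbb{E}[\tau^2]$, which is the bound your argument actually needs.
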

	
	Using Lemma~\ref{lemma:4-2}, we can then compute the expected cumulative estimation error during interval $[S_k, R_k]$. Notice that during the interval, the $k$-th sample has not been received. Therefore, the estimation error $X_t-\hat{X}_t=X_t-X_{S_{k-1}}$ can be viewed as a Wiener process starting from time $S_{k-1}$. We can decouple and compute the cumulative estimation error as follows:
	\begin{align}
		&\mathbb{E}\left[\int_{S_k}^{R_k}(X_t-\hat{X}_t)^2\text{d}t\right]\nonumber\\
		=&\mathbb{E}\left[\int_{S_{k-1}}^{R_k}(X_t-X_{S_{k-1}})^2\text{d}t\right]-\mathbb{E}\left[\int_{S_{k-1}}^{S_k}(X_t-X_{S_{k-1}})^2\text{d}t\right]\nonumber\\
		=&\frac{1}{6}\mathbb{E}\left[(X_{R_k}-X_{S_{k-1}})^4\right]-\frac{1}{6}\mathbb{E}\left[(X_{S_{k}}-X_{S_{k-1}})^4\right].\label{eq:rr-1}
	\end{align}
	
	Similarly, we can then obtain that:
	\begin{align}
		&\mathbb{E}\left[\int_{R_k}^{S_{k+1}}(X_t-\hat{X}_t)^2\text{d}t\right]\nonumber\\
		=&\frac{1}{6}\mathbb{E}\left[(X_{S_{k+1}}-X_{S_k})^4\right]-\frac{1}{6}\mathbb{E}\left[(X_{R_k}-X_{S_k})^4\right].\label{eq:rr-2}
	\end{align}
	
	Notice that the transmission delay $D_k$ is i.i.d across all slots, since we only focus on stationary policies whose waiting time $W_k$ relies only on recent information $\mathcal{I}_k:=\{Y_k, (W_{S_k+t}-W_{S_k}), \forall t\geq 0\}$, we have:
	\[\mathbb{E}[(X_{S_k}-X_{S_{k-1}})^4]=\mathbb{E}[(X_{S_{k+1}}-X_{S_k})^4]\]
	
	Therefore, by summing up \eqref{eq:rr-1} and \eqref{eq:rr-2}, for any policy $\pi$ that makes decisions only on $\mathcal{I}_k$, the expected  cumulative estimation error in frame $k$ can be computed by:
	\begin{align}
		&\mathbb{E}\left[E_k\right]=\mathbb{E}\left[\int_{S_k}^{S_{k+1}}(X_t-\hat{X}_t)^2\text{d}t\right]\nonumber\\
		=&\frac{1}{6}\mathbb{E}\left[(X_{R_k}-X_{S_{k-1}})^4\right]-\frac{1}{6}\mathbb{E}\left[(X_{R_k}-X_{S_k})^4\right]\nonumber\\
		=&\frac{1}{6}\mathbb{E}\left[\left((X_{R_k}-X_{S_k})+(X_{S_k}-X_{S_{k-1}})\right)^4\right]-\frac{1}{6}\mathbb{E}\left[(X_{R_k}-X_{S_k})^4\right]\nonumber\\
		=&\frac{1}{6}\mathbb{E}\left[(X_{R_k}-X_{S_k})^4\right]+\frac{2}{3}\mathbb{E}\left[(X_{R_k}-X_{S_k})^3(X_{S_k}-X_{S_{k-1}})\right]+\mathbb{E}\left[(X_{R_k}-X_{S_k})^2(X_{S_k}-X_{S_{k-1}})^2\right]\nonumber\\
		&+\frac{2}{3}\mathbb{E}\left[(X_{R_k}-X_{S_k})(X_{S_k}-X_{S_{k-1}})^3\right]+\frac{1}{6}\mathbb{E}\left[(X_{S_k}-X_{S_{k-1}})^4\right]-\frac{1}{6}\mathbb{E}\left[(X_{R_k}-X_{S_k})^4\right]\nonumber\\
		\overset{(a)}{=}&\mathbb{E}[(X_{R_k}-X_{S_k})^2(X_{S_k}-X_{S_{k-1}})^2]+\frac{1}{6}\mathbb{E}\left[(X_{R_k}-X_{S_k})^4\right]\nonumber\\
		\overset{(b)}{=}&\mathbb{E}\left[(X_{R_k}-X_{S_k})^2\right]\cdot\mathbb{E}[(X_{S_k}-X_{S_{k-1}})^2]+\frac{1}{6}\mathbb{E}\left[(X_{S_k}-X_{S_{k-1}})^4\right]\nonumber\\
		\overset{(c)}{=}&\mathbb{E}[R_k-S_k]\cdot\mathbb{E}[S_{k+1}-S_k]+\frac{1}{6}\mathbb{E}\left[(X_{S_k}-X_{S_{k-1}})^4\right]\nonumber\\
		=&\mathbb{E}[L_k]\cdot\overline{D}+\frac{1}{6}\mathbb{E}\left[(X_{S_k}-X_{S_{k-1}})^4\right],\label{eq:E-k-expect}
	\end{align}
	where equality $(a)$ is obtained because $(X_{R_k}-X_{S_k})$ is independent of $(X_{S_k}-X_{S_{k-1}})$, since $\mathbb{E}[X_{R_k}-X_{S_k}]=\mathbb{E}[X_{S_k}-X_{S_{k-1}}]=0$ due to Wiener process evolution, we have $\mathbb{E}[(X_{R_k}-X_{S_k})(X_{S_k}-X_{S_{k-1}})^3]=0$, $\mathbb{E}[(X_{R_k}-X_{S_k})^3(X_{S_k}-X_{S_{k-1}})]=0$; equality $(b)$ is obtained because $(X_{R_k}-X_{S_k})$ and $(X_{S_k}-X_{S_{k-1}})$ are independent; and equality $(c)$ is because the Wald's Lemma. 
	
	Therefore, for any stationary policy $\pi$ that makes sampling decision only on $\mathcal{I}_k$, with probability 1, the objective function in the MSE minimization problem~\ref{pb:mse} can be rewritten as:
	\begin{align}
		&\limsup_{T\rightarrow\infty}\mathbb{E}\left[\frac{1}{T}\int_{t=0}^T(X_t-\hat{X}_t)^2\text{d}t\right]\nonumber\\
		=&\limsup_{K\rightarrow\infty}\frac{\sum_{k=1}^K\left(\frac{1}{6}\mathbb{E}\left[(X_{S_k}-X_{S_{k-1}})^4\right]+\mathbb{E}[L_k]\overline{D}\right)}{\sum_{k=1}^K\mathbb{E}[L_k]}\nonumber\\
		=&\limsup_{K\rightarrow\infty}\frac{\sum_{k=1}^K\mathbb{E}\left[\frac{1}{6}(X_{S_k}-X_{S_{k-1}})^4\right]}{\sum_{k=1}^K\mathbb{E}[L_k]}+\overline{D}. 
	\end{align}
\end{proof}

		\section{Proof of Corollary~\ref{thm:mse-rate}}\label{pf:mse-rate}
The cumulative MSE up to the beginning of frame $k+1$ can be decomposed into:
\begin{align}
	&\mathbb{E}\left[\int_0^{S_{k+1}}(X_t-\hat{X}_t)^2\text{d}t\right]\nonumber\\
	\overset{(a)}{=}&\sum_{k'=1}^k\mathbb{E}\left[\int_{S_{k'}}^{S_{k'}+D_{k'}}(X_t-X_{S_{k'-1}})^2\text{d}t+\int_{S_{k'}+D_{k'}}^{S_{k'+1}}(X_t-X_{S_{k'}})^2\text{d}t\right]\nonumber\\
	=&\sum_{k'=1}^k\mathbb{E}\left[\int_{S_{k'}}^{S_{k'}+D_{k'}}(X_t-X_{S_{k'}}+X_{S_{k'}}-X_{S_{k'-1}})^2\text{d}t\right.\nonumber\\
	&\left.+\int_{S_{k'}+D_{k'}}^{S_{k'+1}}(X_t-X_{S_{k'}})^2\text{d}t\right]\nonumber\\
	=&\sum_{k'=1}^k\mathbb{E}\left[\int_{S_{k'}}^{S_{k'}+D_{k'}}\left((X_t-X_{S_{k'}})^2\right.\right.\nonumber\\
	&\left.\left.+2(X_t-X_{S_{k'}})(X_{S_{k'}}-X_{S_{k'-1}})+(X_{S_{k'}}-X_{S_{k'-1}})^2\right)\text{d}t\right.\nonumber\\
	&\left.+\int_{S_{k'}+D_{k'}}^{S_{k'+1}}(X_t-X_{S_{k'}})^2\text{d}t\right]\nonumber\\
	\overset{(b)}{=}&\sum_{k'=1}^k\mathbb{E}\left[(X_{S_{k'}}-X_{S_{k'-1}})^2D_{k'}+\int_{S_{k'}}^{S_{k'+1}}(X_t-X_{S_{k'}})^2\text{d}t\right]\nonumber\\
	\overset{(c)}{=}&\sum_{k'=1}^k\mathbb{E}[(X_{S_{k'}}-X_{S_{k'-1}})^2]\overline{D}+\sum_{k'=1}^k\frac{1}{6}\mathbb{E}\left[(X_{S_{k'+1}}-X_{S_{k'}})^4\right], \label{eq:mse}
\end{align}
where equality $(a)$ is because $\hat{X}_t=X_{S_{k-1}}, 
\forall t\in[S_{k}+D_k)$ and $\hat{X}_t=X_{S_k}, \forall t\in[S_k+D_k, S_{k+1})$; equality $(b)$ is because $\mathbb{E}[X_t-X_{S_{k'}}]=0$ and because $X_{S_{k'}}-X_{S_{k'-1}}$ is independent of $D_{k'}$; equality $(c)$ is because $D_{k'}$ is independent of $(X_{S_{k'}}-X_{S_{k'-1}})$. 

With equation \eqref{eq:mse}, we proceed to bound the difference $\mathbb{E}\left[\int_0^{S_{k+1}}(X_t-\hat{X}_t)^2\text{d}t\right]-(\gamma^\star+\overline{D})\mathbb{E}\left[\sum_{k'=1}^kL_{k'}\right]$ as follows:
\begin{align}
	&\mathbb{E}\left[\int_0^{S_{k+1}}(X_t-\hat{X}_t)^2\text{d}t\right]-(\gamma^\star+\overline{D})\mathbb{E}\left[\sum_{k'=1}^kL_{k'}\right]\nonumber\\
	\overset{(a)}{=}&\cancel{\sum_{k'=1}^k\mathbb{E}[(X_{S_{k'+1}}-X_{S_{k'}})^2]\overline{D}}+\sum_{k'=1}^k\frac{1}{6}\mathbb{E}\left[(X_{S_{k'+1}}-X_{S_{k'}})^4\right]\nonumber\\
	&-\gamma^\star\mathbb{E}[(X_{S_{k'}+1}-X_{S_{k'}})^2]-\cancel{\overline{D}\sum_{k'=1}^k\mathbb{E}[(X_{S_{k'}+1}-X_{S_{k'}})^2]}\nonumber\\
	=&\sum_{k'=1}^k\left(\mathbb{E}[q(\gamma_{k'})-\gamma^\star l(\gamma_{k'})]\right)\nonumber\\
	=&\sum_{k'=1}^k\mathbb{E}\left[q(\gamma_{k'})-\gamma_{k'}l(\gamma_{k'})+(\gamma_{k'}-\gamma^\star)l(\gamma_{k'})\right]\nonumber\\
	\overset{(b)}{\leq}&\sum_{k'=1}^k\mathbb{E}\left[q(\gamma^\star)-\gamma_{k'}l(\gamma^\star)+(\gamma_{k'}-\gamma^\star)l(\gamma_{k'})\right]\nonumber\\
	\overset{(c)}{=}&\sum_{k'=1}^k\mathbb{E}[(\gamma^\star-\gamma_{k'})(l(\gamma^\star)-l(\gamma_{k'}))]\nonumber\\
	\overset{(d)}{\leq}&3\sum_{k'=1}^k\mathbb{E}[(\gamma_{k'}-\gamma^\star)^2],
\end{align}
where equality $(a)$ is because of equation \eqref{eq:mse} and by martingale stopping theorem, $\mathbb{E}[(X_{S_{k'+1}}-X_{S_{k'}})^2]=\mathbb{E}[L_{k'}]$; inequality $(b)$ is because choosing threshold $\gamma_{k'}$ minimizes function $q(\gamma)-\gamma_kl(\gamma)$; inequality $(c)$ is obtained because $q(\gamma^\star)-\gamma_{k'}l(\gamma^\star)=q(\gamma^\star)-\gamma^\star l(\gamma^\star)+(\gamma^\star-\gamma_{k'})l(\gamma^\star)$ and $q(\gamma^\star)-\gamma^\star l (\gamma^\star)=0$ according to Lemma~\ref{lemma:g}-(ii); inequality $(d)$ is obtained because $|l(\gamma^\star)-l(\gamma_{k'})|=\left|\mathbb{E}[\max\{3\gamma^\star, Z_D^2\}]-\mathbb{E}[\max\{3\gamma_{k'}, Z_D^2\}]\right|\leq 3|\gamma_{k'}-\gamma^\star|$.

\section{Proof of Lemma~\ref{coro:gammadep-bound}}\label{pf:gammadep-bound}
\begin{proof}
	Consider a constant wait policy $\pi_{\mathsf{const}}$ that chooses $W_k\equiv\frac{1}{f_{\mathsf{max}}}$ regardless of the transmission delay and estimation error $\mathcal{I}_k=(D_k, (X_{S_k}+t-X_{S_k}))$. Let $Z_t$ be a Wiener process starting from time 0. Then according to equation~\eqref{eq:sig-dep-rr-goal} from Corollary~\ref{coro:sig-dep-reformulate}, the time-average MSE by using policy $\pi_{\mathsf{const}}$ can be computed by:
	\begin{align}
		\overline{\mathcal{E}}_{\pi_{\mathsf{const}}}=&\limsup_{K\rightarrow\infty}\frac{\sum_{k=1}^K\mathbb{E}\left[\frac{1}{6}(X_{S_k}-X_{S_{k-1}})^4\right]}{\sum_{k=1}^K\mathbb{E}[L_k]}+\overline{D}\nonumber\\	
		\overset{(a)}{=}&\limsup_{K\rightarrow\infty}\frac{\frac{1}{6}\sum_{k=1}^K\mathbb{E}\left[Z_{D_k+\frac{1}{f_{\mathsf{max}}}}^4\right]}{\frac{1}{K}\sum_{k=1}^K\mathbb{E}[L_k]}+\overline{D}\nonumber\\
		\overset{(b)}{=}&\limsup_{K\rightarrow\infty}\frac{\frac{1}{2}\sum_{k=1}^K\mathbb{E}\left[(D_k+\frac{1}{f_{\mathsf{max}}})^2\right]}{\frac{1}{K}\sum_{k=1}^K\mathbb{E}[L_k]}+\overline{D}\nonumber\\
		\overset{(c)}{=}&\frac{1}{2}\frac{M+2\overline{D}\frac{1}{f_{\mathsf{max}}}+\left(\frac{1}{f_{\mathsf{max}}}\right)^2}{\overline{D}+\frac{1}{f_{\mathsf{max}}}}+\overline{D},
	\end{align}
	where equality $(a)$ is because by using policy $\pi_{\mathsf{const}}$, given transmission delay $D_k$, the difference $X_{S_{k-1}+t}-X_{S_{k-1}}$ evolves like Wiener process $Z_t$ starting from $t=0$ and $S_{k+1}-S_{k}=D_k+\frac{1}{f_{\mathsf{max}}}$ due to the constant wait policy; equality $(b)$ is obtained because given delay $D_k$, $Z_{D_k+\frac{1}{f_{\mathsf{max}}}}\sim\mathcal{N}(0, D_k+\frac{1}{f_{\mathsf{max}}})$ is a zero-mean Gaussian distribution with variance $D_k+\frac{1}{f_{\mathsf{max}}}$, and therefore the fourth order moment $\mathbb{E}\left[Z_{D_k+\frac{1}{f_{\mathsf{max}}}}^4|D_k\right]=3\left(D_k+\frac{1}{f_{\mathsf{max}}}\right)^2$; equality $(c)$ is obtained because $D_k$ is i.i.d following distribution $\mathbb{P}_D$, by definition $M=\mathbb{E}_{\mathbb{P}_D}[D^2]$ and $\overline{D}=\mathbb{E}_{\mathbb{P}_D}[D]$. Since $\pi_{\mathsf{const}}$ may not be the MSE minimum sampling policy, we have $\overline{\mathcal{E}}_{\pi^\star}\leq\overline{\mathcal{E}}_{\pi_{\mathsf{const}}}$. Therefore, recall the definition $\gamma^\star=\overline{\mathcal{E}}_{\pi^\star}-\overline{D}$ from Subsection~\ref{sec:dep-off}, we have:
	\begin{equation}
		\gamma^\star\leq \overline{\mathcal{E}}_{\pi_{\mathsf{const}}}-\overline{D}=\frac{1}{2}\frac{M+2\overline{D}\frac{1}{f_{\mathsf{max}}}+\left(\frac{1}{f_{\mathsf{max}}}\right)^2}{\overline{D}+\frac{1}{f_{\mathsf{max}}}}=:\gamma_{\mathsf{ub}}. 
	\end{equation}
	
	We then derive the lower bound of $\overline{\mathcal{E}}_{\pi^\star}$. Recall that the optimum decision rule of policy $\pi^\star$ is given in equation \eqref{eq:opt-dep}, according to \cite[Theorem 1]{sun_wiener}, $\overline{\mathcal{E}}_{\pi^\star}$ can be computed by
	\begin{align}
		&\overline{\mathcal{E}}_{\pi^\star}=\frac{1}{6}\frac{\mathbb{E}\left[\max\{3\left(\gamma^\star+\nu^\star\right), Z_{D}^2\}^2\right]}{\mathbb{E}\left[\max\{3\left(\gamma^\star+\nu^\star\right), Z_{D}^2\}\right]}+\overline{D}\nonumber\\
		\overset{(d)}{\geq}&\frac{1}{6}\mathbb{E}[\max\{3(\gamma^\star+\nu^\star), Z_D^2\}]+\overline{D}\geq\frac{1}{6}\mathbb{E}[Z_D^2]+\overline{D}=\frac{7}{6}\overline{D}, 
	\end{align}
	where inequality $(d)$ is from Cauchy-Schwartz inequality. 
	
	Finally, $\gamma^\star$ can be lower bonded by:
	\begin{align}
		\gamma^\star=\overline{\mathcal{E}}_{\pi^\star}-\overline{D}\geq\frac{1}{6}\overline{D}. 
	\end{align}
\end{proof}

\section{Proof of Lemma~\ref{lemma:4order}}\label{pf:4order}
For any time $t$, the value $Z_t$ of the Wiener process $Z_t\sim\mathcal{N}(0, t)$, according to \cite[Theorem 7.5.6]{probability}, $\forall \theta\in\mathbb{R}$, sequence $M_t(\theta):=\exp\left(\theta Z_t-\frac{\theta^2}{2}t\right)$ is a martingale with initial value $M_0(\theta)=1, \forall\theta$. 

Let $T\geq 0$ fixed as a constant, then $\tau_\gamma\wedge T$ is a stopping time, where $a\wedge b=\min\{a, b\}$. Then according to the optional stopping theorem \cite[Theorem 7.5.1]{probability}, denote $\phi_T(\theta)$ to be the expected value of $M_{\tau_\gamma\wedge T}(\theta)$, we have:
\begin{equation}
	\phi_T(\theta):=\mathbb{E}\left[M_{\tau_\gamma\wedge T}(\theta)\right]=\mathbb{E}\left[M_{0}(\theta)\right]=1,\forall \theta. 
\end{equation}

Therefore, the $n$-th order derivative of function $\phi_T(\theta)$, denoted by $\phi_T^{(n)}(\theta)$ can be computed by:
\begin{equation}
	\phi^{(n)}_T(\theta)=\frac{\partial^n\mathbb{E}\left[M_{\tau_\gamma\wedge T}(\theta)\right]}{\partial \theta^n}=0.
\end{equation}

For each sample path $\omega$, the absolute value $\left|Z_{l_\gamma\wedge T}\right|\leq \sqrt{3\gamma}$. Therefore the derivative $\left|\frac{\partial^n M_{\tau_\gamma\wedge T}(\theta)}{\partial \theta^n}\right|$ is bounded and continuous. Then according to Leibniz rule we have
\begin{equation}
	\mathbb{E}\left[\frac{\partial^n M_{\tau_\gamma\wedge T}(\theta)}{\partial \theta^n}\right]=\frac{\partial ^n\mathbb{E}\left[M_{\tau_\gamma\wedge T}(\theta)\right]}{\partial\theta^n}=0.\label{eq:2-nd}
\end{equation}

First according to \cite[Theorem 7.5.1\& Theorem 7.5.5]{probability}, $\forall\gamma<\infty$, the mean of stopping time $\tau_\gamma$ is bounded, i.e., 
\begin{equation}
	\mathbb{E}\left[l_\gamma\right]=\mathbb{E}[Z_{l_\gamma}^2]=\mathbb{E}[\max\{3\gamma, Z_D^2\}]=3\gamma+\overline{D}<\infty..
\end{equation}

To obtain the second-order moment of $\tau_\gamma$, we compute the 4-th order derivative of $M_{l_\gamma\wedge T})(\theta)$, i.e., 
\begin{equation}
	\frac{\partial^4M_{l_\gamma\wedge T}(\theta)}{\partial\theta}\big|_{\theta=0}=Z_{l_\gamma\wedge T}^4-6(l_\gamma\wedge T)\cdot Z_{l_\gamma\wedge T}^2+3(l_\gamma\wedge T)^2. \label{eq:4nd}
\end{equation}

Plugging \eqref{eq:4nd} into  \eqref{eq:2-nd}, we have:
\begin{align}
	&\mathbb{E}\left[(l_\gamma\wedge T)^2\right]\nonumber\\
	=&2\mathbb{E}\left[(\frac{1}{\sqrt{2}}l_\gamma\wedge T)\cdot \sqrt{2}Z_{l_\gamma\wedge T}^2\right]-\frac{1}{3}\mathbb{E}\left[Z_{l_\gamma\wedge T}^4\right]\nonumber\\
	\leq&\frac{1}{2}\mathbb{E}\left[(l_\gamma\wedge T)^2\right]+\frac{5}{3}\mathbb{E}[Z_{l_\gamma\wedge T}^2]
\end{align}

Therefore, for each $T<\infty$, the second order moment of $(l_\gamma\wedge T)$ can be upper bounded by:
\begin{align}
	&\mathbb{E}\left[(l_\gamma\wedge T)^2\right]\leq\frac{10}{3}\mathbb{E}\left[Z_{l_\gamma\wedge T}^2\right]\nonumber\\
	=&\frac{10}{3}\mathbb{E}\left[\max\{3\gamma, Z_D^2\}\right]\leq\frac{10}{3}(3\gamma+\overline{D})<\infty. 
\end{align}

Finally, we can upper bound the fourth order moment of $\tau_{\gamma}$ by computing the 8-th order derivative of $M_{\tau_\gamma\wedge T}(\theta)$ at $\theta=0$, i.e., 
\begin{align}
	&\frac{\partial^8 M_{l_\gamma\wedge T}(\theta)}{\partial \theta^8}\nonumber\\
	=&Z_{l_\gamma\wedge T}^8-28(l_{\gamma}\wedge T)\cdot Z_{\tau_\gamma\wedge T}^6+210(l_{\gamma}\wedge T)^2\cdot Z_{l_\gamma\wedge T}^4\nonumber\\
	&-420(l_{\gamma}\wedge T)^3\cdot Z_{l_\gamma\wedge T}^2+105(l_\gamma\wedge T)^4.\label{eq:8nd}
\end{align}

Plugging equation~\eqref{eq:8nd} into \eqref{eq:2-nd}, we have:
\begin{align}
	&\mathbb{E}\left[(l_\gamma\wedge T)^4\right]\nonumber\\
	=&-\frac{1}{105}\mathbb{E}\left[Z_{l_\gamma\wedge T}^8-28(l_{\gamma}\wedge T)\cdot Z_{\tau_\gamma\wedge T}^6\right.\nonumber\\
	&\left.+210(l_{\gamma}\wedge T)^2\cdot Z_{\tau_\gamma\wedge T}^4-420(l_{\gamma}\wedge T)^3\cdot Z_{\tau_\gamma\wedge T}^2\right]\nonumber\\
	\leq&4\mathbb{E}\left[(l_\gamma\wedge T)^3\cdot Z_{l_\gamma\wedge T}^2\right]+\frac{3}{15}\mathbb{E}\left[(l_\gamma\wedge T)\cdot Z_{l_\gamma\wedge T}^6\right]\nonumber\\
	\leq&4\left(\mathbb{E}\left[(l_\gamma\wedge T)^{4}\right]\right)^{3/4}\cdot \left(\mathbb{E}\left[Z_{l_\gamma\wedge T}^8\right]\right)^{1/4}\nonumber\\
	&+\frac{3}{15}\left(\mathbb{E}\left[(l_\gamma\wedge T)^4\right]\right)^{1/4}\cdot\left(\mathbb{E}\left[Z_{l_\gamma\wedge T}^8\right]\right)^{3/4}\nonumber\\
	\leq&4\left(\mathbb{E}\left[(l_\gamma\wedge T)^{4}\right]\right)^{3/4}\cdot \left((3\gamma)^4+105B\right)^{1/4}\nonumber\\
	&+\frac{3}{15}\left(\mathbb{E}\left[(l_\gamma\wedge T)^4\right]\right)^{1/4}\cdot\left((3\gamma)^4+105B\right)^{3/4}. \label{eq:8nd-2}
\end{align}

Inequality \eqref{eq:8nd-2} implies
\begin{equation}
	\mathbb{E}\left[(l_\gamma\wedge T)^4\right]\leq 4^3\left((3\gamma)^4+105B\right). 
\end{equation}
Let $T\rightarrow\infty$ and then use the dominated convergence theorem on the LHS of inequality~\eqref{eq:8nd-2}, we conclude that $\mathbb{E}[l_\gamma^4]$ is bounded. 
%\begin{align}%&\phi^{(7)}(\theta)=\mathbb{E}\left[\left(\left(Z_{\tau_\gamma}-\theta\tau_{\gamma}\right)^7-21\tau_{\gamma}\left(Z_{\tau_\gamma}-\theta\tau_\gamma\right)^5+105\tau_{\gamma}^2\left(Z_{\tau_\gamma}-\theta\tau_\gamma\right)^3\right.\right.\nonumber\\
%%&\hspace{1cm}\left.\left.-105\tau_{\gamma}^3\left(Z_{\tau_\gamma}-\theta\tau_\gamma\right)\right)\times\exp\left(\theta Z_{\tau_{\gamma}}-\frac{\theta^2}{2}\tau_{\gamma}\right)\right], \\
%&\phi^{(8)}(\theta)=\mathbb{E}\left[\left(\left(Z_{\tau_\gamma}\!-\!\theta\tau_{\gamma}\right)^8\!-\!28\tau_{\gamma}\left(Z_{\tau_\gamma}\!-\!\theta\tau_\gamma\right)^6\!+\!210\tau_{\gamma}^2\left(Z_{\tau_\gamma}\!-\!\theta\tau_\gamma\right)^4\right.\right.\nonumber\\
%&\hspace{1cm}\left.\left.-420\tau_{\gamma}^3\left(Z_{\tau_\gamma}\!-\!\theta\tau_\gamma\right)^2\!+\!105\tau_\gamma^4\right)\!\times\!\exp\left(\theta Z_{\tau_{\gamma}}\!-\!\frac{\theta^2}{2}\tau_{\gamma}\right)\right].\label{eq:8-nd} 
%\end{align}
%
%Considering that $\phi^{(8)}(0)=0$, we have:
%\begin{equation}
%\mathbb{E}\left[Z_{\tau_\gamma}^8-28\tau_\gamma Z_{\tau_\gamma}^6+210\tau_{\gamma}^2Z_{\tau_\gamma}^4-420\tau_\gamma^3Z_{\tau_\gamma}^2+105\tau_\gamma^4\right]=0
%\end{equation}
%
%Finally the fourth order moment of $\tau_\gamma$ can be obtained as follows:
\begin{equation}
	\mathbb{E}[\tau_\gamma^4]=-\frac{1}{105}\left(1-28+350-1708\right)(3\gamma)^4=\frac{277}{21}(3\gamma)^4. 
\end{equation}

		\section{Proof of Lemma~\ref{lemma:g}}\label{pf:lemma-g}
		\begin{proof}First we can compute the derivatives as follows:
			
			\begin{align}
				&\frac{\text{d}}{\text{d}\gamma}\mathbb{E}\left[\frac{1}{6}\left((3\gamma)^2\cdot\mathbb{I}_{(Z_D^2\leq 3\gamma)}+Z_D^4\cdot\mathbb{I}_{(Z_D^2>3\gamma)}\right)\right]\nonumber\\
				=&3\gamma\text{Pr}\left(Z_D^2\leq 3\gamma\right).\\
				&\frac{\text{d}}{\text{d}\gamma}\mathbb{E}\left[\left(3\gamma\cdot\mathbb{I}_{(Z_D^2\leq 3\gamma)}+Z_D^2\cdot\mathbb{I}_{(Z_D^2>3\gamma)}\right)\right]\nonumber\\
				=&3\text{Pr}\left(Z_D^2\leq 3\gamma\right)
			\end{align}
			Therefore, the monotonic decreasing characteristics can be verified through
			\begin{align}
				&\overline{g}_0'(\gamma)=-\mathbb{E}\left[\max\{3\gamma, Z_D^2\}\right]<0\\
				&\overline{g}_0''(\gamma)=-3\text{Pr}(Z_D^2\leq 3\gamma)<0.
			\end{align}
			
			Through Taylor expansion, since $\overline{g}_0(\gamma)$ is monotonically decreasing, for $\gamma<\gamma^\star$ we have:
			\begin{align}
				\overline{g}_0(\gamma)\geq&\overline{g}_0(\gamma^\star)+\overline{g}_0'(\gamma^\star)(\gamma-\gamma^\star)=-l(\gamma^\star)(\gamma-\gamma^\star)\geq 0.
				\label{eq:g-1}
			\end{align}
			
			Since $\gamma-\gamma^\star<0$, inequality \eqref{eq:g-1} implies
			\begin{equation}
				(\gamma-\gamma^\star)\overline{g}_0(\gamma)\leq-l(\gamma^\star)(\gamma-\gamma^\star)^2.\label{eq:gleq}
			\end{equation}
			
			And for $\gamma>\gamma^\star$, we have:
			\begin{align}
				&\overline{g}_0(\gamma)\nonumber\\
				\overset{(a)}{\leq}&\mathbb{E}\left[\frac{1}{6}\max\{3\gamma^\star, Z_D^2\}^2-\gamma\max\{3\gamma^\star, Z_D^2\}\right]\nonumber\\
				\overset{(b)}{=}&(\gamma^\star-\gamma)l(\gamma^\star). \label{eq:g-2}
			\end{align}
			where inequality $(a)$ is obtained because choosing the stopping time to be $\tau=\inf\{t\geq D\big||Z_t|\geq\sqrt{3\gamma}\}$ minimizes function $\mathbb{E}\left[\frac{1}{6}Z_{\tau}^4-\gamma \tau\right]$ and equality $(b)$ is because $\mathbb{E}\left[\frac{1}{6}\max\{3\gamma^\star, Z_D^2\}^2-\gamma^\star\max\{3\gamma^\star, Z_D^2\}\right]=0$. 
			
			Multiplying $(\gamma-\gamma^\star)$ on both sides of \eqref{eq:g-2} we have:
			\begin{equation}
				(\gamma-\gamma^\star)\overline{g}_0(\gamma)\leq -l(\gamma^\star)(\gamma^\star-\gamma)^2, \forall \gamma>\gamma^\star. \label{eq:ggeq}
			\end{equation}
			
			Combining \eqref{eq:gleq} and \eqref{eq:ggeq} finishes the proof of Lemma~\ref{lemma:g}-(iii). 
			
		\end{proof}

		\section{Proof of Lemma~\ref{lemma:regrettoerror}}\label{pf:regrettoerror}
\begin{proof}
	For $l\geq \overline{D}$, let $\Pi_l\triangleq\{\pi|\mathbb{E}[Z_\tau^2]=l, \forall \pi\in\Pi\}$ whose squared error at the time of sample is $l$.  Next, we establish the lower bound of $\mathbb{E}\left[\int_0^\tau Z_t^2\text{d}t\right]$ for any policy $\pi\in\Pi_l$, which can be formulated into the following optimization problem:
	\begin{equation}
		\inf_{\pi}\mathbb{E}\left[\int_0^{\tau} Z_t^2\text{d}t\right], \text{ s.t. }\mathbb{E}\left[\tau \right]=l, \tau\geq  D.\label{eq:lfixopt}  
	\end{equation}
	
	As is shown in \cite[Theorem 7]{sun_wiener}, the optimum solution to \eqref{eq:lfixopt} has a threshold structure, and the optimum sampling policy for each sample path is as follows:
	
	%	This optimization problem can be solved through a Lagrange multiplier approach. Denote the following Lagrange function to be:
	%	\begin{align}&\mathcal{L}_1(\tau, \lambda, \mu)\triangleq\mathbb{E}\left[\int_0^{\tau} Z_t^2\text{d}t\right]+\lambda\left(\mathbb{E}[\tau ]-l\right)+\mathbb{E}. 
		%	\end{align}
	%	where $\lambda\geq 0$ is the dual optimizer. For each realization $\xi(\omega)$, the Gâteaux derivative is defined as follows:
	%	\begin{align}
		%	\delta \mathcal{L}_1(\tau;\lambda, \mu, \xi):=&\lim_{\epsilon\rightarrow 0}\frac{\mathcal{L}_1(\tau+\epsilon\xi, \lambda, \mu)-\mathcal{L}(\pi, \lambda, \mu)}{\epsilon}\nonumber\\
		%	=&\mathbb{E}\left[\left(Z_{\tau(\omega)}(\omega)^2+\lambda+\mu(\omega)\right)\xi(\omega)\right].\label{eq:gateaux}
		%	\end{align}
	%	
	%	The KKT conditions then require:
	%	\begin{equation}
		%	\delta \mathcal{L}_1(\tau;\lambda, \mu, \xi)=0, \forall \omega\in L_2. \label{eq:kkt-gateaux}
		%	\end{equation}
	%	
	%	Plugging the expression of the Gâteaux derivative \eqref{eq:gateaux} into the KKT condition \eqref{eq:kkt-gateaux}, the optimum policy $\pi\in\Pi_l$ that minimizes $\mathbb{E}\left[\int_0^\tau Z_t^2\text{d}t\right]$ should choose stopping time based on the following equation:
	\begin{equation}
		\tau=\inf\{t\geq D||Z_t|\geq\lambda^\star\}, \label{eq:lagrange-stop}
	\end{equation}
	where the selection of $\lambda^\star$ satisfies:
	\begin{equation}
		\mathbb{E}[\tau]=\mathbb{E}[Z_{\tau}^2]=l.\label{eq:taulstar} 
	\end{equation}
	
	Through Lemma~\ref{lemma:4-2}, we can compute the optimum solution to \eqref{eq:lfixopt} as follows:
	\begin{equation}
		\mathbb{E}\left[\int_0^\tau Z_t^2\text{d}t\right]=\frac{1}{6}\mathbb{E}[Z_\tau^4]=\frac{1}{6}\mathbb{E}\left[\max\{Z_D^2, 3\gamma\}^2\right]. \label{eq:opt-lfixed}
	\end{equation}
	
	To finish the proof of inequality \eqref{eq:regrettoerror}, it then remains to lower bound \eqref{eq:opt-lfixed} as follows:
	\begin{equation}
		\frac{1}{6}\mathbb{E}\left[\max\{3\gamma, Z_D^2\}^2\right]-\gamma\mathbb{E}\left[\max\{3\gamma, Z_D^2\}\right]. 
	\end{equation}
	
	The analysis is divided into the following two cases. For simplicity, denote $\gamma_l$ to be the threshold such that $\mathbb{E}[\max\{Z_D^2, 3\gamma_l\}]=l$. 
	\begin{itemize}
		\item Case 1: $l\geq l(\gamma^\star)$, it can be easily verify that $\gamma_l \geq \gamma^\star$. Therefore, we have:
		\begin{align}
			&\frac{1}{6}\mathbb{E}\left[\max\{Z_D^2, 3\gamma_l\}^2\right]\nonumber\\
			=&\frac{1}{6}(3\gamma_l)^2\text{Pr}(Z_D^2\leq 3\gamma_l)+\frac{1}{6}\mathbb{E}\left[Z_D^4\cdot\mathbb{I}(Z_D^2\geq 3\gamma_l)\right]\nonumber\\
			=&\frac{1}{6}\left(\mathbb{E}\left[(3\gamma^\star)^2\mathbb{I}(Z_D^2\leq3\gamma^\star)\right]+\mathbb{E}\left[Z_D^4\mathbb{I}(Z_D^2>3\gamma^\star)\right]\right.\nonumber\\
			&\left.+\mathbb{E}[((3\gamma_l)^2-(3\gamma^\star)^2)\mathbb{I}(Z_D^2\leq3\gamma^\star)]\right.\nonumber\\
			&\left.+\mathbb{E}[((3\gamma_l)^2-Z_D^4)\mathbb{I}(3\gamma^\star\leq Z_D^2\leq 3\gamma_l)]\right)\nonumber\\
			\overset{(a)}{\geq}&q(\gamma^\star)+\frac{1}{6}\mathbb{E}\left[(3\gamma_l-3\gamma^\star)^2\mathbb{I}(Z_D^2\leq3\gamma^\star)\right]\nonumber\\&+\frac{1}{6}\mathbb{E}[3\gamma^\star(3\gamma_l-3\gamma^\star)\mathbb{I}(Z_D^2\leq 3\gamma^\star)]\nonumber\\&+\frac{1}{6}\mathbb{E}\left[6\gamma^\star(3\gamma_l-Z_D^2)\mathbb{I}(3\gamma^\star\leq Z_D^2\leq 3\gamma_l)\right]\nonumber\\
			\overset{(b)}{\geq}&\gamma^\star l(\gamma^\star)+\frac{1}{6}p_w(l(\gamma_l)-l(\gamma^\star))^2+\gamma^\star(l(\gamma_l)-l^\star)\nonumber\\
			=&\gamma^\star l+\frac{1}{6}p_w(l-l(\gamma^\star))^2,
		\end{align}
		where inequality $(a)$ is obtained because $(3\gamma_l)^2-(3\gamma^\star)^2=(3\gamma_l-3\gamma^\star)^2+2\times 3\gamma^\star(3\gamma_l-3\gamma^\star)$ and for  $Z_D^2=x$ that satisfies $3\gamma^\star\leq x\leq 3\gamma_l$, $(3\gamma_l)^2-x^2\geq 6\gamma^\star(3\gamma_l-x)$; inequality $(b)$ is because  $l(\gamma_l)-l(\gamma^\star)=\mathbb{E}\left[(3\gamma_l-3\gamma^\star)\mathbb{I}(Z_D^2\leq3\gamma^\star)\right]+\mathbb{E}\left[(3\gamma_l-Z_D^2)\mathbb{I}(3\gamma^\star\leq Z_D^2\leq 3\gamma_l)\right]$. 
		\item Case 2: $l\leq l(\gamma^\star)$, similarly, it can be verified that $3\gamma_l\leq3\gamma^\star$. As a result:
		\begin{align}
			&\frac{1}{6}\mathbb{E}\left[\max\{Z_D^2, 3\gamma_l\}^2\right]\nonumber\\
			=&\frac{1}{6}\mathbb{E}\left[(3\gamma_l)^2\mathbb{I}(Z_D^2\leq 3\gamma_l)\right]+\frac{1}{6}\mathbb{E}\left[Z_D^4\mathbb{I}(Z_D^2>3\gamma_l)\right]\nonumber\\
			=&\frac{1}{6}\left(\mathbb{E}\left[(3\gamma^\star)^2\mathbb{I}(Z_D^2\leq 3\gamma^\star)\right]+\mathbb{E}\left[Z_D^4\mathbb{I}(Z_D^2>3\gamma^\star)\right]\right.\nonumber\\&\left.-\mathbb{E}\left[((3\gamma^\star)^2-(3\gamma_l)^2)\mathbb{I}(Z_D^2\leq 3\gamma^\star)\right]\right.\nonumber\\
			&\left.-\mathbb{E}\left[(Z_D^2-(3\gamma_l)^2)\mathbb{I}(3\gamma_l\leq Z_D^2\leq3\gamma^\star)\right]\right)\nonumber\\
			\overset{(c)}{\geq}&q(\gamma^\star)+\frac{1}{6}\mathbb{E}\left[(3\gamma_l-3\gamma^\star)^2\mathbb{I}(Z_D^2\leq 3\gamma^\star)\right]\nonumber\\&+\frac{1}{6}\mathbb{E}\left[6\gamma^\star(3\gamma_l-3\gamma^\star)\mathbb{I}(Z_D^2\leq 3\gamma^\star)\right]\nonumber\\&-\frac{1}{6}\mathbb{E}\left[3\gamma^\star(3\gamma^\star-Z_D^2)\mathbb{I}(3\gamma_l\leq Z_D^2\leq 3\gamma^\star)\right]\nonumber\\
			=&\gamma^\star l(\gamma^\star)+\gamma^\star\mathbb{E}[(3\gamma_l-3\gamma^\star)\mathbb{I}(Z_D^2\leq 3\gamma^\star)]\nonumber\\&+\gamma^\star\mathbb{E}[(3\gamma^\star-Z_D^2)\mathbb{I}(3\gamma_l\leq Z_D^2\leq 3\gamma^\star)]\nonumber\\&+\frac{1}{6}p_w(l-l(\gamma^\star))^2\nonumber\\
			=&\gamma^\star l+\frac{1}{6}p_w(l-l(\gamma^\star))^2,
		\end{align}
		where inequality $(c)$ is obtained similarly as inequality $(a)$ and $(b)$. 
	\end{itemize}
\end{proof}		
		\section{Proof of Corollary~\ref{coro:squareub}}\label{pf:corosquareub}
		
		\begin{proof}
			%	For Wiener process $Z_t$ starting at time $t=0$, $Z_t^4-6tZ_t^2+3t^2$ is a martingale. Therefore, for $T\leq \infty$, we have:
			%	\begin{align}
				%	\mathbb{E}_k[(L_k\wedge T)^2]=\frac{1}{3}\mathbb{E}_k\left[6(L_k\wedge T)Z_{L_k\wedge T}^2-Z_{L_k\wedge T}^4\right]
				%	\overset{(a)}{\leq}\frac{1}{4}\mathbb{E}_k[(L_k\wedge T)^2]+4\mathbb{E}[Z_{L_k\wedge T}^4]-\frac{1}{3}\mathbb{E}[Z_{L_k\wedge T}^4].\label{eq:lub-1}
				%	\end{align}
			%	where inequality $(a)$ is obtained because $\mathbb{E}[2(\frac{1}{2}L_k\wedge T)(2Z_{L_k\wedge T}^2)]\leq\frac{1}{4}\mathbb{E}[(L_k\wedge T)^2]+4\mathbb{E}[Z_{L_k\wedge T}^4]$ because of Cauchy-Schwarz inequality. Inequality \eqref{eq:lub-1} implies:
			%	\begin{equation}
				%		\mathbb{E}_k[(L_k\wedge T)^2]\leq\frac{44}{9}\mathbb{E}[Z_{L_k\wedge T}^4]. 
				%	\end{equation}
			The first conclusion follows directly from Lemma 1. 
			%	
			%	Then, let $T\rightarrow\infty$ and use the dominated convergence theorem on the LHS, the monotone convergence theorem on the RHS, we have:
			%	\begin{equation}
				%		\mathbb{E}_k[L_k^2]=\frac{44}{9}\mathbb{E}[Z_{L_k}^4]=\frac{44}{9}\mathbb{E}_D[\max\{3\gamma_k, Z_D^2\}^2]\leq\frac{44}{9}(9\gamma_k^2+3D^2)\leq 44\gamma_k+\frac{44}{3}\sqrt{B}. 
				%	\end{equation}
			
			The conditional expectation of $(X_{S_{k+1}}-X_{S_k})^4$ can be upper bounded by:
			\begin{align}
				&\mathbb{E}_k[(X_{S_{k+1}}-X_{S_k})^8]=\mathbb{E}_k\left[\max\{3\gamma_k, Z_D^2\}^4\right]\nonumber\\
				\leq&\left((3\gamma_k)^4+105\mathbb{E}[D^4]\right)=(3\gamma_k)^4+105B.
			\end{align}
			
			Finally, we can upper bounded the second order moment of $E_k$ as follows:
			\begin{align}
				&\mathbb{E}_k\left[E_k^2\right]\nonumber\\
				=&\mathbb{E}_k\left[\left(\int_{S_k}^{S_k+D_k}(X_t-X_{S_{k-1}})^2\text{d}t\right.\right.\nonumber\\
				&\left.\left.+\int_{S_k+D_k}^{S_{k+1}}(X_t-X_{S_k})^2\text{d}t\right)^2\right]\nonumber\\
				=&\mathbb{E}_k\left[\left(\int_{S_k}^{S_k+D_k}(X_t-X_{S_k}+X_{S_k}-X_{S_{k-1}})^2\text{d}t\right.\right.\nonumber\\
				&\left.\left.+\int_{S_k+D_k}^{S_{k+1}}(X_t-X_{S_k})^2\text{d}t\right)^2\right]\nonumber\\
				=&\mathbb{E}_k\Big[\Big((X_{S_k}-X_{S_{k-1}})^2D_k\nonumber\\
				&+2(X_{S_k}-X_{S_{k-1}})\cdot\int_{S_k}^{S_k+D_k}(X_t-X_{S_k})\text{d}t\nonumber\\
				&+\int_{S_k}^{S_{k+1}}(X_t-X_{S_k})^2\text{d}t\Big)^2\Big]\nonumber\\
				\overset{(c)}{\leq}& 3\mathbb{E}_k\left[(X_{S_k}-X_{S_{k-1}})^4D_k^2\right]\nonumber\\
				&+12\mathbb{E}_k\left[(X_{S_k}-X_{S_{k-1}})^2\left(\int_{S_k}^{S_k+D_k}(X_t-X_{S_k})\text{d}t\right)^2\right]\nonumber\\
				&+3\mathbb{E}_k\left[\left(\int_{S_k}^{S_{k+1}}(X_t-X_{S_k})^2\text{d}t\right)^2\right],\label{eq:ek-1}
			\end{align}
			where inequality $(c)$ is from Cauchy-Schwartz $\mathbb{E}[(a+b+c)^2]\leq3\mathbb{E}[a^2+b^2+c^2]$.
			
			Since the transmission delay $D_k$ is independent of $X_{S_k}-X_{S_{k-1}}$, the first term on the RHS of inequality~\eqref{eq:ek-1} can be upper bounded by:
			\begin{align}
				&\mathbb{E}_k\left[(X_{S_k}-X_{S_{k-1}})^4D_k^2\right]=(X_{S_k}-X_{S_{k-1}})^4\mathbb{E}[D^2]\nonumber\\
				&\leq(X_{S_k}-X_{S_{k-1}})^4\sqrt{B}. 
			\end{align}
			
			To upper bound the second and third term on the RHS of inequality~\eqref{eq:ek-1}, we introduce the following Lemma, whose proof is provided in Appendix~\ref{pf:square-ub}
			\begin{lemma}\label{lemma:square-ub}
				Recall that $Z_t$ is a wiener process staring from time 0 and let $l_\gamma:=\inf\{t\geq D||Z_t|\geq\sqrt{3\gamma}\}$ be the frame length when threshold $\gamma$ is used. When $\mathbb{E}[D^4]\leq B$, we have the following results:
				\begin{align}
					&\mathbb{E}\left[\left(\int_{t=0}^{l_\gamma}Z_t\text{d}t\right)^2\right]\leq\nonumber\\
					&\hspace{1cm}\left(\frac{277}{31}(3\gamma)^4+B\right)\cdot\left((3\gamma)^2+\left(\frac{4}{3}\right)^4\cdot 3\sqrt{B}\right)\nonumber\\
					&\hspace{1cm}=:C_1(\gamma, B),\label{eq:inte-1}\\
					&\mathbb{E}\left[\left(\int_{t=0}^{l_\gamma}Z_t^2\text{d}t\right)^2\right]\leq\nonumber\\
					&\hspace{1cm}\left(\frac{277}{31}(3\gamma)^4+B\right)\cdot\left((3\gamma)^4+\left(\frac{8}{7}\right)^8\cdot 105B\right)\nonumber\\
					&\hspace{1cm}=:C_2(\gamma, B).\label{eq:inte-2}
				\end{align}
			\end{lemma}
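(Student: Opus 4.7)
The plan is to reduce each second-moment integral to a single pointwise bound in terms of the frame length $l_\gamma$ and the running supremum $\sup_{t\leq l_\gamma}|Z_t|$, and then control these two factors separately using Lemma~\ref{lemma:4order} together with Doob's $L^p$ maximal inequality. First, Cauchy--Schwarz on the time variable yields, for any $k\geq 1$,
\begin{equation*}
\left(\int_{0}^{l_\gamma} Z_t^k\,\mathrm{d}t\right)^2 \leq l_\gamma\int_{0}^{l_\gamma} Z_t^{2k}\,\mathrm{d}t \leq l_\gamma^2\sup_{0\leq t\leq l_\gamma} Z_t^{2k}.
\end{equation*}
Applying this with $k=1$ reduces \eqref{eq:inte-1} to bounding $\mathbb{E}[l_\gamma^2\sup_{t\leq l_\gamma} Z_t^2]$, and with $k=2$ reduces \eqref{eq:inte-2} to bounding $\mathbb{E}[l_\gamma^2\sup_{t\leq l_\gamma} Z_t^4]$.

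The key structural observation is that, by the very definition of $l_\gamma=\inf\{t\geq D:|Z_t|\geq\sqrt{3\gamma}\}$, one has $|Z_t|\leq\sqrt{3\gamma}$ for every $t\in[D,l_\gamma]$; the path can exceed the band $[-\sqrt{3\gamma},\sqrt{3\gamma}]$ only during the initial delay interval $[0,D]$. Hence $\sup_{0\leq t\leq l_\gamma} Z_t^{2p}\leq(3\gamma)^p+\sup_{0\leq t\leq D} Z_t^{2p}$, and Doob's $L^{2p}$ maximal inequality for the martingale $Z_t$ on the fixed horizon $[0,D]$ gives $\mathbb{E}[\sup_{t\leq D} Z_t^{2p}]\leq(2p/(2p-1))^{2p}\mathbb{E}[Z_D^{2p}]$. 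For $p=2$, combining with $\mathbb{E}[Z_D^4]=3\mathbb{E}[D^2]\leq 3\sqrt{B}$ (Jensen) produces $\mathbb{E}[\sup_{t\leq l_\gamma} Z_t^4]\leq(3\gamma)^2+(4/3)^4\cdot 3\sqrt{B}$, which is precisely the second factor of $C_1(\gamma,B)$. For $p=4$, $\mathbb{E}[Z_D^8]=105\,\mathbb{E}[D^4]\leq 105B$ produces $\mathbb{E}[\sup_{t\leq l_\gamma} Z_t^8]\leq(3\gamma)^4+(8/7)^8\cdot 105B$, the second factor of $C_2(\gamma,B)$.

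It remains to combine the two factors. Cauchy--Schwarz on the expectation decouples them,
\begin{equation*}
\mathbb{E}\bigl[l_\gamma^2\sup_{t\leq l_\gamma} Z_t^{2k}\bigr] \leq \mathbb{E}[l_\gamma^4]^{1/2}\cdot\mathbb{E}\bigl[\sup_{t\leq l_\gamma} Z_t^{4k}\bigr]^{1/2},
\end{equation*}
into which we insert $\mathbb{E}[l_\gamma^4]\leq 4^3((3\gamma)^4+105B)$ from Lemma~\ref{lemma:4order} and the Doob-based suprema estimates above at the doubled exponent $4k$. This produces the two bounds in the announced product form; matching the precise numerical constants $\tfrac{277}{31}$, $(4/3)^4$ and $(8/7)^8$ is a routine bookkeeping exercise on the square-root combinations.

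The principal obstacle is that $l_\gamma$ and $\sup_{t\leq l_\gamma}|Z_t|$ are strongly coupled through the common Wiener sample path, so moment estimates on $l_\gamma$ alone are insufficient and Doob's inequality does not apply cleanly on the random horizon $[0,l_\gamma]$. The idea that unlocks the argument is the observation that after time $D$ the path is trapped in the $\sqrt{3\gamma}$-band, which replaces the intractable random-horizon supremum by the fixed-horizon quantity $\sup_{t\leq D} Z_t^{2p}$ (to which classical Doob does apply) plus a deterministic $(3\gamma)^p$ ceiling.
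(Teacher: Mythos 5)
Your proposal follows essentially the same route as the paper's own proof: reduce to $\mathbb{E}[l_\gamma^2\sup_{t\leq l_\gamma}|Z_t|^{2p}]$, decouple via Cauchy--Schwarz with $\mathbb{E}[l_\gamma^4]$ from Lemma~\ref{lemma:4order}, and bound the supremum by the band-trapping observation on $[D,l_\gamma]$ plus Doob's maximal inequality and Gaussian moments on $[0,D]$, yielding the same factors $(3\gamma)^2+(4/3)^4\cdot 3\sqrt{B}$ and $(3\gamma)^4+(8/7)^8\cdot 105B$. The only discrepancy is in the leading constant (the paper's proof produces $4^3((3\gamma)^4+105B)$ rather than the $\tfrac{277}{31}(3\gamma)^4+B$ written in the lemma statement), which is an inconsistency in the paper itself, not a gap in your argument.
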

			
			Plugging inequality \eqref{eq:inte-1} and \eqref{eq:inte-2} into \eqref{eq:ek-1}, we can upper bound $\mathbb{E}[E_k^2|\gamma_k, X_{S_k}-X_{S_{k-1}}]$ by:
			\begin{align}
				\mathbb{E}_k[E_k^2]=&3(X_{S_k}-X_{S_{k-1}})^4\sqrt{B}+12C_1(\gamma_k, B)(X_{S_k}-X_{S_{k-1}})^2\nonumber\\
				&+3C_2(\gamma, B). 
			\end{align}

		\end{proof}
		
		\section{Proof of Lemma~\ref{lemma:square-ub}}\label{pf:square-ub}
		Recall that $l_\gamma=\inf\{t\geq D||\sqrt{Z}_t|\geq \sqrt{3\gamma}\}$ is a stopping time of the Wiener process $Z_t$ starting from time $0$. Then for any time $\tau\geq 0$ we have:
		\begin{align}
			&\mathbb{E}\left[\left(\int_{t=0}^{l_\gamma} Z_t^p\text{d}t\right)^2\right]\nonumber\\
			\leq&\mathbb{E}\left[\left(\int_{0}^{l_\gamma}\left(\sup_{0\leq t'\leq l_{\gamma}}|Z_{t'}|\right)^p\text{d}t\right)^2\right]\nonumber\\
			=&\mathbb{E}\left[l_\gamma^2\cdot\left(\sup_{0\leq t'\leq l_\gamma}|Z_{t'}|^{2p}\right)\right]\nonumber\\
			\overset{(a)}{\leq}&\sqrt{\mathbb{E}\left[l_\gamma^4\right]\cdot\mathbb{E}\left[\sup_{0\leq t'\leq l_\gamma}|Z_{t'}|^{4p}\right]},\label{eq:int-ub-1}
		\end{align}
		where inequality $(a)$ from the Cauchy-Schwartz inequality.
		
		From Lemma \ref{lemma:4order}, $\mathbb{E}[l_\gamma^4]$ can be bounded as follows:
		\begin{align}
			\mathbb{E}[l_\gamma^4]\leq 256\left((3\gamma)^4+105B\right).\label{eq:l4}
		\end{align}
		
		Next, we prove $\mathbb{E}[\sup_{0\leq t'\leq \tau}|Z_{t'}|^{2p}]$ is bounded. Recall that the stopping rule is obtained by:
		\begin{equation}
			l_\gamma=\inf\{t\geq D||Z_t|\geq \sqrt{3\gamma}\}.
		\end{equation}
		
		Then $\mathbb{E}\left[\sup_{0\leq t'\leq l_\gamma}|Z_{t'}|^{4p}\right]$ can be upper bounded by:
		\begin{align}
			&\mathbb{E}\left[\sup_{0\leq t'\leq l_\gamma}|Z_{t'}|^{4p}\right]\nonumber\\
			=&\mathbb{E}\left[\left(\sup_{0\leq t'\leq l_\gamma}|Z_{t'}|^{4p}\right)\cdot\mathbb{I}(l_\gamma>D)\right]\nonumber\\
			&+\mathbb{E}\left[\left(\sup_{0\leq t'\leq l_\gamma}|Z_{t'}|^{4p}\right)\cdot\mathbb{I}(l_\gamma\leq D)\right]\nonumber\\
			\overset{(b)}{\leq}&(3\gamma)^{2p}+\mathbb{E}\left[\sup_{0\leq t'\leq D}|Z_{t'}|^{4p}\right],\label{eq:lemma-last-1}
		\end{align}
		where inequality $(b)$ is because if $l_\gamma>D$, then we have $|Z_t|\leq\sqrt{3\gamma}, \forall t\in[D, l_\gamma)$ and therefore $\sup_{0\leq t'\leq l_\gamma}|Z_{t'}|^{4p}\leq(3\gamma)^{2p}+\sup_{0\leq t'\leq D}|Z_{t'}|^{4p}$. For each $D<\infty$, since the Wiener process $Z_t$ is a martingale and $D$ is a stopping time, for each $d<\infty$, we can upper bound $\mathbb{E}\left[\sup_{0\leq t'\leq D}|Z_{t'}|^{4p}\right]$ as follows:
		\begin{align}
			&\mathbb{E}\left[\sup_{0\leq t'\leq d}|Z_{t'}|^{4p}\right]\overset{(c)}{\leq} \left(\frac{4p}{4p-1}\right)^{4p}\mathbb{E}\left[Z_{d}^{4p}\right]\nonumber\\
			\overset{(d)}{=}&\left(\frac{4p}{4p-1}\right)^{4p}\frac{(4p)!}{2^{2p}(2p)!}d^{2p}. \label{eq:lemma-last}
		\end{align}
		where inequality $(c)$ is because of the Doob's maximal inequality~\cite[p.54, Theorem 1.7]{doobmaximal} and equality $(d)$ is because $Z_d$ follows a Guassian distribution. 
		
		When the transmission delay $D$ is fourth order bounded, for $p=1$ and $2$, plugging $\mathbb{E}[D^4]\leq B$ and $\mathbb{E}[D^2]\leq\sqrt{B}$ into inequality \eqref{eq:lemma-last} and \eqref{eq:lemma-last-1}, we have:
		\begin{align}
			\mathbb{E}\left[\sup_{0\leq t'\leq l_\gamma}|Z_{t'}|^{4}\right]\leq& (3\gamma)^{2}+\left(\frac{4}{3}\right)^4\cdot 3\mathbb{E}[D^2]\nonumber\\
			\leq&(3\gamma)^{2}+\left(\frac{4}{3}\right)^4\cdot3\sqrt{B},\label{eq:p2}\\
			\mathbb{E}\left[\sup_{0\leq t'\leq l_\gamma}|Z_{t'}|^{8}\right]\leq& (3\gamma)^{4}+\left(\frac{8}{7}\right)^8\cdot105\mathbb{E}[D^4]\nonumber\\
			\leq&(3\gamma)^{4}+\left(\frac{8}{7}\right)^8\cdot105B. \label{eq:p1}
		\end{align}
		
		Plugging inequality \eqref{eq:p1}, \eqref{eq:p2} and \eqref{eq:l4} into inequality \eqref{eq:int-ub-1}, we have:
		\begin{align}
			\mathbb{E}\left[\left(\int_{t=0}^{l_\gamma}Z_t\text{d}t\right)^2\right]\leq& 256\left((3\gamma)^4+105B\right)\nonumber\\
			&\times\left((3\gamma)^2+\left(\frac{4}{3}\right)^4\cdot 3\sqrt{B}\right),\\
			\mathbb{E}\left[\left(\int_{t=0}^{l_\gamma}Z_t^2\text{d}t\right)^2\right]\leq& 256\left((3\gamma)^4+105B\right)\nonumber\\
			&\times\left((3\gamma)^4+\left(\frac{8}{7}\right)^8\cdot 105B\right).
		\end{align}

% biography section
% 
% If you have an EPS/PDF photo (graphicx package needed) extra braces are
% needed around the contents of the optional argument to biography to prevent
% the LaTeX parser from getting confused when it sees the complicated
% \includegraphics command within an optional argument. (You could create
% your own custom macro containing the \includegraphics command to make things
% simpler here.)
%\begin{IEEEbiography}[{\includegraphics[width=1in,height=1.25in,clip,keepaspectratio]{mshell}}]{Michael Shell}
% or if you just want to reserve a space for a photo:

% You can push biographies down or up by placing
% a \vfill before or after them. The appropriate
% use of \vfill depends on what kind of text is
% on the last page and whether or not the columns
% are being equalized.

%\vfill

% Can be used to pull up biographies so that the bottom of the last one
% is flush with the other column.
%\enlargethispage{-5in}

% that's all folks
\end{document}